\newif\ifTR
  \let\@authorsaddresses\@empty
\newcommand\xvar{x}
\newcommand{\notgate}{\mathsf{X}}
\newcommand{\cnotgate}\cnot
\newcommand\hgate{\mathsf{H}}
\newcommand{\ol}[1]{\textcolor{blue}{\ifmmode \text{[OL: #1]}\else [OL: #1] \fi}}
\newcommand{\yfc}[1]{\textcolor{green!50!blue}{\ifmmode \text{[YF: #1]}\else [YF: #1] \fi}}
\newcommand{\mh}[1]{\textcolor{teal}{\ifmmode \text{[MH: #1]}\else [MH: #1] \fi}}
\newcommand{\lh}[1]{\textcolor{magenta}{\ifmmode \text{[LH: #1]}\else [LH: #1] \fi}}
\newcommand{\jal}[1]{\textcolor{pink}{\ifmmode \text{[JA: #1]}\else [JA: #1] \fi}}
\newcommand{\newdef}[1]{}                  % hidden
\newcommand{\hide}[1]{}
\newcommand{\initmark}[0]{\to}
\newcommand{\naturals}[0]{\mathbb{N}}
\newcommand{\integers}[0]{\mathbb{Z}}
\newcommand{\complex}[0]{\mathbb{C}}
\newcommand{\abc}[0]{\Gamma}              % alphabet
\newcommand{\colors}[0]{\Omega}
\newcommand{\partialto}[0]{\rightharpoonup}
\newcommand{\domof}[1]{\mathrm{dom}(#1)}
\newcommand{\height}[1]{h(#1)}
\newcommand{\semof}[1]{\llbracket #1 \rrbracket}
\newcommand{\semofof}[2]{\semof{#1}(#2)}
\newcommand{\trees}[0]{\mathbb{T}}
\newcommand{\treesof}[1]{\trees_{#1}}
\newcommand{\abctrees}[0]{\treesof \abc}
\newcommand{\concat}[0]{\mathbin{\bullet}}
\newcommand{\subtreeof}[2]{#1_{|#2}}
\newcommand{\treecons}[3]{\mathsf{cons}(#1, #2, #3)}
\newcommand{\ketof}[1]{\ket{#1}}
\DeclareRobustCommand{\shortto}{%
  \mathrel{\mathpalette\short@to\relax}%
}
\DeclareRobustCommand{\shortminus}{%
  \mathrel{\mathpalette\short@minus\relax}%
}
\newcommand{\short@to}[2]{%
  \mkern2mu
  \clipbox{{.5\width} 0 0 0}{$\m@th#1\vphantom{+}{\rightarrow}$}%
}
\newcommand{\short@minus}[2]{%
  \mkern2mu
  \clipbox{{.5\width} 0 0 0}{$\m@th#1\vphantom{+}{-}$}%
}
\newcommand{\newlabeledto}[1]{\mbox{$\longrightarrow\rlap{\hspace*{-4.95mm}\raisebox{0.00mm}{\scalebox{0.8}{#1}}}$\!}}
\newcommand{\aut}[0]{\mathcal{A}}
\newcommand{\autex}[0]{\aut_{\mathit{ex}}}
\newcommand{\autbases}[0]{\aut_{\mathit{bases}}}
\newcommand{\but}[0]{\autb}
\newcommand{\cut}[0]{\mathcal{C}}
\newcommand{\butex}[0]{\but_{\mathit{ex}}}
\newcommand{\nut}[0]{\mathcal{N}}
\newcommand{\autb}[0]{\mathcal{B}}
\newcommand{\trn}[0]{\mathcal{T}}
\newcommand{\trnex}[0]{\trn_{\mathit{ex}}}
\newcommand{\transf}[0]{\mathcal{T}}
\newcommand{\lang}[0]{\mathcal{L}}
\newcommand{\langof}[1]{\lang(#1)}
\newcommand{\linforms}[0]{\mathbb{L}}
\newcommand{\linformsof}[1]{\linforms_{#1}}
\newcommand{\linof}[1]{\boldsymbol{#1}}
\newcommand{\linindexof}[2]{\linof{#1}[#2]}
\newcommand{\compose}[0]{\mathbin{\circ}}
\newcommand{\sumlfwrt}[2]{\sum {#1}(#2)}
\newcommand{\rootstate}[0]{\mathit{root}}
\newcommand{\supp}[0]{\text{st}}
\newcommand{\suppof}[1]{\supp(#1)}
\newcommand{\trntransof}[4]{#1 \to #2(#3, #4)}
\newcommand{\transof}[5]{#1 \newlabeledto{#3} #2(#4, #5)}
\newcommand{\transcof}[5]{\transof{#1}{#2}{\tacc{#3}}{#4}{#5}}
\newcommand{\trnimageof}[2]{#1(#2)}
\newcommand{\smallfrac}[2]{{\textstyle \frac{#1}{#2}}}
\newcommand{\invtwo}[0]{\smallfrac 1 2}
\newcommand{\invsqrttwo}[0]{{\smallfrac 1 {\sqrt{2}}}}
\newcommand{\tuple}[1]{\langle #1 \rangle}
\newcommand{\finstof}[1]{#1}
\newcommand{\leftT}[0]{\mathtt{L}}
\newcommand{\rightT}[0]{\mathtt{R}}
\newcommand{\pspace}[0]{\textsc{PSpace}\xspace}
\newcommand{\expspace}[0]{\textsc{ExpSpace}\xspace}
\newcommand{\coloraut}[1]{\mathbf{dom}(#1)}
\newcommand{\pre}[0]{\mathrm{Pre}}
\newcommand{\post}[0]{\mathrm{Post}}
\newcommand{\autRes}[0]{\aut_{\mathsf{Result}}}
\newcommand{\had}[0]{\mathsf{H}}
\newcommand{\hadtensor}[0]{\had^{\otimes}}
\newcommand{\sgate}[0]{\mathsf{S}}
\newcommand{\xgate}[0]{\mathsf{X}}
\newcommand{\cnot}[0]{\mathsf{CX}}
\newcommand{\ccnot}[0]{\mathsf{CCX}}
\newcommand{\rotx}[0]{\mathsf{R}_x(\frac \pi 2)}
\newcommand{\transfH}[0]{\trn_{\hadtensor}}
\newcommand{\transfCX}[0]{\transf_{\mathit{Oracle}}}
\newcommand{\ltstr}[0]{\mathit{Tr}}
\newcommand{\ltr}[1]{\xrightarrow{\smash{\raisebox{-0.7mm}{$\scriptstyle #1$}}}}
\newcommand{\bigO}[0]{\mathcal{O}}
\newcommand{\minus}[0]{\mathit{minus}}
\newcommand{\matrixof}[1]{\mathit{mat}(#1)}
\newcommand{\rowof}[1]{\mathit{row}(#1)}
\newcommand{\vectorof}[1]{\mathit{vec}(#1)}
\newcommand{\qft}[0]{\mathit{QFT}}
\newcommand{\higate}[1]{\mathsf{H}_{#1}}
\newcommand{\rgate}[1]{\mathsf{R}_{#1}}
\newcommand{\htrn}[1]{\trn_{\higate{#1}}}
\newcommand{\rtrn}[1]{\trn_{\rgate{#1}}}
\newcommand{\maj}[0]{\mathrm{MAJ}}
\newcommand{\gMAJp}[1]{\gate[3]{\mathrm{MAJ}_{#1}}}
\newcommand{\uma}[0]{\mathrm{UMA}}
\newcommand{\idgate}[0]{\mathrm{Id}}
\newcommand{\gId}[0]{\gate{\idgate}}
\newcommand{\idf}[0]{\iota}
\newcommand{\tool}[0]{\textsc{AutoQ-Para}\xspace}
\newcommand{\autoq}[0]{\textsc{AutoQ}\xspace}
\newcommand{\sliqec}[0]{\textsc{SliQEC}\xspace}
\newcommand{\feynman}[0]{\textsc{Feynman}\xspace}
\newcommand{\qcec}[0]{\textsc{Qcec}\xspace}
\newcommand{\qbricks}[0]{\textsc{Qbricks}\xspace}
\newcommand{\quartz}[0]{\textsc{Quartz}\xspace}
\newcommand{\symqv}[0]{\textsc{symQV}\xspace}
\newcommand{\isabelle}[0]{\textsc{Isabelle}\xspace}
\newcommand{\coq}[0]{\textsc{Coq}\xspace}
\newcommand{\rocq}[0]{\textsc{Rocq}\xspace}
\newcommand{\numof}[1]{\mathit{num}(#1)}
\newcommand{\revof}[1]{\mathit{rev}(#1)}
\newcommand{\encof}[1]{\mathit{enc}(#1)}
\tikzstyle{subcircstyle}=[dashed, rounded corners, inner xsep=2pt]
\tikzstyle{sublabstyle}=[label position=above,anchor=south,yshift=-2.3mm]
\newcommand{\gtgroup}[4]{\gategroup[#2,steps=#3,style={subcircstyle,fill=#4},background,label style={sublabstyle}]{#1}}
\newcommand{\grovergroup}[4]{\gtgroup{#1}{#2}{#3}{#4}}
\newcommand{\oraclecolor}[0]{blue!10}
\newcommand{\diffusercolor}[0]{red!10}
\newcommand{\rzz}[0]{R_{zz}(2\delta)}
\newcommand{\rz}[0]{R_{z}(2\delta)}
\newcommand{\rxx}[0]{R_{xx}(2\delta)}
\newcommand{\ryy}[0]{R_{yy}(2\delta)}
\newcommand{\Hheis}[0]{\hamilton_{\mathit{Heis}_n}}
\newcommand{\Uheis}[0]{U_{\mathit{Heis}_n}}
\newcommand{\hamilton}[0]{\mathcal{H}}
\newcommand{\smalltreeof}[3]{%
\scalebox{0.75}{%
\raisebox{3mm}{%
\tikz[level distance=2mm,sibling distance=10mm,baseline,inner sep=0.2mm]{%
\node {$#1$} child {node {$#2$}}%
child {node {$#3$}};}}}%
}
\newcommand{\bigtreeof}[6]{%
\scalebox{0.75}{%
\raisebox{5mm}{%
\tikz[level distance=2mm,
  level 1/.style={sibling distance=16mm},
  level 2/.style={sibling distance=10mm},
  baseline,inner sep=0.2mm]{%
\node {$#1$} child {node {$#2$}
                    child {node {$#3$}}
                    child {node {$#4$}}
                   }%
             child {node {$#2$}
                    child {node {$#5$}}
                    child {node {$#6$}}
                   };
}}}%
}
\newcommand{\trOrAppendix}[1]{\ifTR{}\cref{#1}\else{}\cite{techrep}\fi}
\newcommand{\compbgn}[0]{\mathtt{bgn}}
\newcommand{\compend}[0]{\mathtt{end}}
\newcommand{\specsubst}[2]{\lbag #1 \mapsto #2\rbag}
\newcommand{\lnref}[1]{Line~\ref{#1}}
\definecolor{spotblue}{RGB}{31,120,180}
\definecolor{spotpink}{RGB}{255,77,160}
\definecolor{spotorange}{RGB}{255,127,0}
\definecolor{spotpurple}{RGB}{106,61,154}
\definecolor{spotgreen}{RGB}{51,160,44}
\definecolor{spotred}{RGB}{227,26,28}
\definecolor{spotyellowish}{RGB}{196,196,0}
\definecolor{spotgray}{RGB}{80,80,80}
\definecolor{spotlight blue}{RGB}{107,246,255}
\definecolor{spotlight pink}{RGB}{255,154,255}
\definecolor{spotlight orange}{RGB}{255,156,103}
\definecolor{spotlight purple}{RGB}{178,164,255}
\definecolor{spotlight green}{RGB}{167,237,121}
\definecolor{spotlight red}{RGB}{255,104,104}
\definecolor{spotlight yellowish}{RGB}{255,224,64}
\definecolor{spotlight gray}{RGB}{192,192,144}
\tikzset{
  >={Stealth[round,bend]},
}
\tikzstyle{automaton}=[
\tikzset{
  scc/.style={draw=gray,fill=black!10,rounded corners=2mm},
  lstate/.style={state},
  }
\tikzstyle{smallautomaton}=[
\tikzstyle{mediumautomaton}=[
\tikzstyle{cstate}=[state,capsule,text width=,inner xsep=-5pt]
\tikzstyle{dot}=[fill=black,circle,minimum size=4pt,inner sep=0]
\tikzstyle{initial overlay}=[every initial by arrow/.append style={overlay}]
\tikzstyle{unreachable} = [densely dotted]
\tikzstyle{acclabel} = [
\tikzstyle{ltllabel} = [acclabel,fill=darkgreen!20]
\tikzstyle{namelabel} = [
\tikzstyle{matrix of states} = [
\tikzstyle{accset}=[
\tikzstyle{accsquare}=[accset,rectangle,inner sep=1.9pt,rounded corners=0pt]
\tikzset{
  collacc0/.style={fill=spotorange},
  collacc1/.style={fill=spotpink},
  collacc2/.style={fill=spotblue},
  collacc3/.style={fill=spotpurple},
  collacc4/.style={fill=spotgreen},
  collacc5/.style={fill=spotred},
  collacc6/.style={fill=spotyellowish,draw=black,text=black},
  collacc7/.style={fill=spotgray},
  collacc8/.style={fill=spotlight blue,draw=black,text=black},
  collacc9/.style={fill=spotlight pink},
  collacc10/.style={fill=spotlight orange},
  collacc11/.style={fill=spotlight purple},
  collacc12/.style={fill=spotlight green},
  collacc13/.style={fill=spotlight red},
  collacc14/.style={fill=spotlight yellowish},
  collacc15/.style={fill=spotlight gray},
}
\tikzset{
  sacc where/.code={
    \pgfkeyssetvalue{/sacc/where}{#1}
  }
}
\tikzset{
  % Labels of edges that have some acceptance mark. Shifts label
  % further from the edge to avoid ugly clash with the mark.
  % Usage:
  %   (n1) edge pic[pos=.3,left]{l=LABEL}
  l/.pic={\node[outer sep=2pt] {#1};},%
  % A colored mark with number inside. Use acc for a small circle and
  % accsq for a small square.
  % Usage:
  %   (n1) edge pic[pos=.3]{acc=ACC_ID}
  % collaccACC_ID must be defined
  acc/.pic={\node[accset,collacc#1]{#1};},%
  accsq/.pic={\node[accsquare,collacc#1]{#1};},%
  % No number inside the bullets
  eacc/.pic={\node[accset,collacc#1]{\emptyacc};},%
  eaccsq/.pic={\node[accsquare,collacc#1]{\emptyacc};},%
  % State labels on north east
  sacc/.style = {
    append after command=
      {pic at (\tikzlastnode.\pgfkeysvalueof{/sacc/where}) {acc=#1}}
  },
  saccsq/.style = {
    append after command=
      {pic at (\tikzlastnode.\pgfkeysvalueof{/sacc/where}) {accsq=#1}}
  },
  esacc/.style = {
    append after command=
      {pic at (\tikzlastnode.\pgfkeysvalueof{/sacc/where}) {eacc=#1}}
  },
%  esacc/.style 2 args = {
%    append after command=
%      {pic[#2] at (\tikzlastnode.\pgfkeysvalueof{/sacc/where}) {eacc=#1}}
%  },
  esaccsq/.style = {
    append after command=
      {pic at (\tikzlastnode.\pgfkeysvalueof{/sacc/where}) {eaccsq=#1}}
  },
  % Easy typesetting of colored accepting marks with custom labels.
  % Usage:
  %   (n1) edge pic[pos=.3]{acc={COLOR_ID}{LABEL}}
  % collaccCOLOR_ID must be defined
  pics/cacc/.style 2 args={%
    code={\node[accset,collacc#1]{#2};}%
  },%
  pics/caccsq/.style 2 args={%
      code={\node[accsquare,collacc#1]{#2};}%
    },%
  csacc/.style 2 args = {%
      append after command=%
        {pic at (\tikzlastnode.\pgfkeysvalueof{/sacc/where}) {cacc={#1}{#2}}}%
  },%
  csaccsq/.style 2 args = {%
      append after command=%
        {pic at (\tikzlastnode.\pgfkeysvalueof{/sacc/where}) {caccsq={#1}{#2}}}%
  },%
}
\def\markbaseline{-.33em}
\def\tacc#1{\tikz[baseline=\markbaseline]\pic{acc=#1};\xspace}
\def\tcacc#1#2{\tikz[baseline=\markbaseline]\pic{cacc={#1}{#2}};\xspace}
\def\emptyacc{\phantom{0}}
\tikzstyle{removed} = [opacity=0, overlay]
\tikzstyle{SCC} = [
\tikzstyle{trivial} = [dashed]
\tikzstyle{sccname} = [red,anchor=south east,outer xsep=13pt]
\theoremstyle{acmdefinition}
\newtheorem{remark}[theorem]{Remark}}
\keywords{quantum circuits, tree automata, verification}  %% \keywords are mandatory in final camera-ready submission
\title{Parameterized Verification of Quantum Circuits}
\author{Parosh Aziz Abdulla}
\affiliation{%
  \institution{Uppsala University}
  \city{Uppsala}
  \country{Sweden}
}
\affiliation{%
  \institution{Mälardalen University}
  \city{Västerås}
  \country{Sweden}
}
\email{parosh@it.uu.se}
\author{Yu-Fang Chen}
\affiliation{%
  \institution{Academia Sinica}
  \city{Taipei}
  \country{Taiwan}
}
\email{yfc@iis.sinica.edu.tw}
\author{Michal Hečko}
\affiliation{%
  \institution{Brno University of Technology}
  \city{Brno}
  \country{Czech Republic}
}
\email{ihecko@fit.vut.cz}
\author{Lukáš Holík}
\affiliation{%
  \institution{Brno University of Technology}
  \city{Brno}
  \country{Czech Republic}
}
\affiliation{%
  \institution{Aalborg University}
  \city{Aalborg}
  \country{Denmark}
}
\email{holik@fit.vutbr.cz}
\author{Ondřej Lengál}
\affiliation{%
  \institution{Brno University of Technology}
  \city{Brno}
  \country{Czech Republic}
}
\email{lengal@fit.vutbr.cz}
\author{Jyun-Ao Lin}
\affiliation{%
  \institution{National Taipei University of Technology}
  \city{Taipei}
  \country{Taiwan}
}
\email{jalin@ntut.edu.tw}
\author{Ramanathan S. Thinniyam}
\affiliation{%
  \institution{Uppsala University}
  \city{Uppsala}
  \country{Sweden}
}
\email{ramanathan.s.thinniyam@it.uu.se}
\begin{document}

% \title{Synchronised Weighted Tree Automata for Quantum Verification}
% \title{Absolutely Completely Fully Automatic Push-Button Parameterized Verification of Quantum Circuits}
\ifTR
\title[Parameterized Verification of Quantum Circuits (Technical Report)]
{Parameterized Verification of Quantum Circuits\\(Technical Report)}
\else
\title{Parameterized Verification of Quantum Circuits}
\fi

\begin{abstract}
We present the first fully automatic framework for verifying relational properties of \emph{parameterized quantum programs}, i.e., a program that, given an input size, generates a corresponding quantum circuit.
We focus on verifying input-output correctness as well as equivalence. At the core of our approach is a new automata model, synchronized weighted tree automata (SWTAs), which compactly and precisely captures the infinite families of quantum states produced by parameterized programs. We introduce a class of transducers to model quantum gate semantics and develop composition algorithms for constructing transducers of parameterized circuits. Verification is reduced to functional inclusion or equivalence checking between SWTAs, for which we provide decision procedures. Our implementation demonstrates both the expressiveness and practical efficiency of the framework by verifying a diverse set of representative parameterized quantum programs with verification times ranging from milliseconds to seconds.
\end{abstract}

\maketitle

%%%%%%%%%%%%%%%%%%%%%%%%%%%%%%%%%%%%%%%%%%%%%%%%%%%%%%%%%%%%%%%%%%%%%%%%%%%%%%%%
\vspace{-0.0mm}
\section{Introduction}
\vspace{-0.0mm}
%%%%%%%%%%%%%%%%%%%%%%%%%%%%%%%%%%%%%%%%%%%%%%%%%%%%%%%%%%%%%%%%%%%%%%%%%%%%%%%%

Quantum technology is advancing at an unprecedented pace and has the potential to reshape numerous sectors at both national and global levels. As quantum computers become increasingly complex and widely deployed, ensuring the correctness of programs that run on them becomes a~matter of critical importance. Errors in quantum programs can have serious consequences, including incorrect outcomes in cryptographic tasks, unnecessary consumption of quantum resources, and the misinterpretation of experimental data. 
Over the past decade, quantum program verification has emerged as a vibrant research area and has seen impressive advances. Notable developments include rigorous proof systems, automated reasoning tools, and foundational theories addressing various quantum paradigms; see, e.g.,~\cite{BauerMarquartLS23,chen2025verifying,fang2024symbolic,yu2021quantum,perdrix2008quantum,zhou2019applied,unruh2019quantum,feng2021quantum,ying2012floyd,liu2019formal,amy2018towards,Chareton2021,ChenCLLTY23,abdulla2025verifying}. These works have collectively laid the groundwork for understanding and verifying quantum programs in practical and theoretical settings. 

Most useful quantum algorithms (e.g., Shor's, Grover's, QFT) are inherently \emph{parameterized}\footnote{In quantum computing, ``parameterized circuits/programs'' often refers to circuits with parameters such as rotation angles. Here, we use the term to mean circuits parameterized by input size, a usage more familiar in the verification community.}, meaning they are designed to operate correctly for arbitrary input sizes. Writing a separate quantum program for each input size of an algorithm is both inefficient and fundamentally unscalable. A more effective approach is to design a \emph{parameterized program} that takes the input size $n$ and dynamically generates the corresponding quantum circuit.%

Yet, existing verification approaches are very limited in their ability to automatically \textbf{verify parameterized programs}. 
Even though notable progress in this direction has been made by the \qbricks~\cite{Chareton2021} and \autoq~\cite{ChenCLLTY23,abdulla2025verifying} projects, \qbricks is not fully automated, and \autoq is limited to a narrow class of parameterized circuits.
As of now, a fully automatic verification approach capable of handling general parameterized quantum programs remains an open problem. Achieving this goal requires overcoming fundamental challenges, and we address those in this paper.

We particularly focus on automated verification of \emph{relational properties}, aka \emph{functional verification}, that is particularly meaningful and broadly applicable in the context of parameterized quantum programs. Given a set of quantum state pairs $(p, q)$, the task is to determine whether the program, for each input size, correctly maps input state $p$ to the expected output state $q$. This captures \emph{input-output correctness} across all parameterized instances.

The main challenges can be narrowed down to delivering three key components of the functional verification framework: 
\begin{inparaenum}[(i)]
  \item  a formal model for specifying the property of interest---namely, an
    infinite set of quantum state pairs;
  \item  a language for describing the parameterized quantum program; and
  \item  an efficient algorithm for checking whether the program satisfies the given property.
\end{inparaenum}
We answer these challenges in this paper and present, for the first time, a complete and efficient verification framework tailored for parameterized functional verification.

\textbf{First, at the core of our approach is a novel class of automata, which we refer to as \emph{synchronized weighted tree automata (SWTAs)}}. 
This model extends standard tree automata~\cite{tata} with two key features: \emph{colors}, for synchronization
\begin{changebar}
(inspired by~\cite{LSTA2025}),
\end{changebar}
and \emph{weights}, to encode quantum \emph{amplitudes}\footnote{A quantum amplitude is a complex number that influences the chance of seeing a particular result when we observe the quantum state.}. This combination yields a powerful and compact formalism that can represent rich families of quantum state spaces---including those parameterized by input size and those with exponentially many distinct amplitudes.
Notably, unlike standard tree automata, which merely characterize a set of quantum states (trees), SWTAs also define a function from \emph{color sequences} to quantum states. This functional view is central to enabling relational verification.
\begin{changebar}
Since SWTAs are a~new formal model, we also explore its basic properties, such
as decidability and complexity of emptiness checking (\pspace-complete; \cref{sec:emptiness}),
inclusion and equivalence checking (both undecidable; \cref{sec:language_equivalence}), and
closure properties (closed under union, not closed under intersection and
complement; \cref{sec:boolean}).
\end{changebar}

\textbf{Second, to reason about circuit behaviors, we introduce a class of transducers that operate on SWTAs.}
These transducers serve as semantic models capable of describing the behavior of individual quantum gates, circuits composed of finitely many components, and---crucially---\textit{parameterized programs}.
Our transducers provide a uniform representation of standard quantum gates\footnote{Quantum gates are the fundamental operations in quantum programs, responsible for transforming quantum states from one form to another.}, including all single-qubit gates, controlled gates, and their compositions.
Notably, the encoding is efficient.
As a case in point, the Quantum Fourier Transform (QFT) gate can be captured using a transducer with a polynomial number of states, thereby avoiding the anticipated exponential blow-up.
Importantly, applying a transducer that encodes a parameterized quantum program to an SWTA preserves the color sequence associated with each state. This preserved sequence acts as a~bridge, linking states in two SWTAs that represent the quantum system before and after executing the parameterized program, hence allowing the capture of input-output relations. 

We also develop a comprehensive toolkit to support the construction of the transducer of quantum programs. 
Central to this is a \emph{composition operator} that enables the automatic assembly of complex transducers from simpler components. Given a quantum program and transducers specifying the behavior of individual gates, the composition operator automatically constructs a transducer representing the behavior of the entire program. More generally, it can compose the transducers for two sub-circuits to produce one for the entire circuit.
We further extend this operator to parameterized programs: starting from a transducer for a finite circuit, we automatically generate a~transducer for a parameterized version consisting of \mbox{an arbitrary number of repeated components.}

\textbf{Third, we develop decision procedures for checking equivalence and entailment of SWTAs}. That is, given SWTAs $\aut$ and $\autb$, we can check that they define the same function, denoted $\semof{\aut} = \semof{\autb}$, or that one is included in the other, denoted $\semof{\aut} \subseteq \semof{\autb}$. 

These algorithms complete the foundation of a framework for solving the relational verification problem. Specifically:
\begin{enumerate}
    \item The desired relational property is specified using a pair of SWTAs, $\aut$ and $\autb$, with related quantum states linked via shared color sequences;
    \item The parameterized quantum program is encoded as a transducer and applied to $\aut$ to produce a new SWTA, $\aut'$;
    \item Verification is then reduced to checking the inclusion $\semof{\aut'} \subseteq \semof{\autb}$.
\end{enumerate}
%Together, these components yield a principled and fully automatic solution to the verification of parameterized quantum programs.
%
Our approach also supports \textbf{checking the \emph{equivalence} of two parameterized quantum programs}---that is, verifying whether they produce identical outputs on all valid inputs, regardless of the input size. This capability is particularly important for ensuring the correctness of program optimizations. The idea behind our method is straightforward:
\begin{enumerate}
    \item We begin by constructing an SWTA $\aut$ that encodes all \emph{computational basis states}, or more generally, any set of $2^n$ linearly independent quantum states, for every input size $n$.
    \item We then apply the transducers corresponding to the two parameterized programs to $\aut$, producing two output SWTAs, $\autb$ and $\autb'$.
    \item Finally, we check whether the two resulting state sets are equal, i.e., whether $\semof{\autb} = \semof{\autb'}$.
\end{enumerate}
This procedure is sound because quantum programs (or circuits) implement linear transformations. If two such programs produce the same outputs when applied to a complete set of linearly independent inputs, then they are equivalent.

We implemented our framework as a C++ tool named \tool~\cite{tool} and evaluated it on a~diverse set of parameterized quantum verification tasks. For functional verification, we applied \tool to the Bernstein-Vazirani algorithm, an arithmetic ripple-carry adder, and a syndrome extraction circuit from a quantum error-correcting code. For equivalence checking, we used \tool to verify the correctness of optimized implementations of Grover's search algorithm and Hamiltonian simulation by comparing them against their unoptimized, parameterized counterparts.

These benchmarks span a broad spectrum of parameterized quantum circuits,
including algorithmic, arithmetic, simulation, and error-correction components,
demonstrating the generality and expressiveness of our approach.
\tool successfully verified all examples efficiently, with runtimes ranging
from 14\,ms (Bernstein-Vazirani) to 11\,s (Adder).
Overall, these results showcase the practical effectiveness, versatility, and
scalability of our verification framework.
\begin{changebar}
To the best of our knowledge, we are the first to fully automatically verify
parameterized versions of the circuits.
\end{changebar}

%%%%%%%%%%%%%%%%%%%%%%%%%%%%%%%%%%%%%%%%%%%%%%%%
\newcommand{\figQuantumStates}[0]{
\begin{wrapfigure}[16]{r}{0.5\textwidth}
\ifTR\else
\vspace*{-5mm}
\fi

\begin{center}
$
  \begin{bmatrix}
    \invsqrttwo  &0 & 0 & \smallfrac{\sqrt 3}{4} & 0 &\smallfrac 1 4 & \smallfrac 1 4 & \smallfrac{\sqrt 3}{4}
  \end{bmatrix} ^T
$

\vspace{3mm}

$
\invsqrttwo \ket{000} + \smallfrac{\sqrt 3}{4} \ket{011} + \smallfrac 1 4 \ket{101} + \smallfrac 1 4 \ket{110} + \smallfrac{\sqrt 3}{4} \ket{111}
$

\includegraphics[scale=1.2]{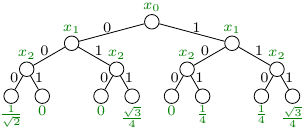}
\end{center}
\vspace{-4mm}
\caption{A quantum state involving three qubits. We provide the vector, Dirac, and tree-based representations.
The tree spans all basis states, with the corresponding amplitudes shown as leaf labels.
For instance, the left-most path represents $\ketof{000}$ with amplitude $\invsqrttwo$.
}
\label{triple:qbit:state:fig}
\end{wrapfigure}
}

%%%%%%%%%%%%%%%%%%%%%%%%%%%%%%%%%%%%%%%%%%%%%%%%%%%%%%%%%%%%%%%%%%%%%%%%%%%%%%%%
\vspace{-0.0mm}
\section{Quantum States and Tree Automata}\label{sec:overview}
\vspace{-0.0mm}
%%%%%%%%%%%%%%%%%%%%%%%%%%%%%%%%%%%%%%%%%%%%%%%%%%%%%%%%%%%%%%%%%%%%%%%%%%%%%%%%

We present a step-by-step overview of quantum states and our automata model.
We will illustrate the concepts through a series of simple examples.
We begin by introducing a tree-like representation of quantum states, followed by a tree automata-based framework for characterizing (potentially infinite) sets of such states.
Next, we introduce a formalism based on \emph{(tree) transducers} to capture the transition relation induced by a quantum circuit on sets of quantum states.
%

%*******************************************************************************
\vspace{-0.0mm}
\subsection{Quantum States}
\vspace{-0.0mm}
%*******************************************************************************

\ifTR\else
\figQuantumStates  %%%%%%%%%%%%%%%%%%
\fi
Classical and quantum systems differ fundamentally in how they represent and evolve states.
A classical state of a system with $n$ bits is a single element from the set $\{0,1\}^n$.
At any given time, the system is in exactly one state (e.g., $0101$ or $1110$).
Quantum computers use qubits
\ifTR
\figQuantumStates  %%%%%%%%%%%%%%%%%%
\fi
(quantum bits) to store and process information. 
In the quantum setting, classical states are referred to as \emph{(computational) basis states}.
An $n$-qubit system is described by a quantum state belonging
to a $2^n$-dimensional space, and  a state is a~\emph{superposition} of all $2^n$ basis states.
Each basis state, usually written as in the \emph{Dirac notation} $\ket x$, where $x\in\{0,1\}^n$, corresponds to a definite state of all $n$ qubits (with each qubit being either $\ket 0$ or $\ket 1$).
A general state is written as
$\ketof\psi=\sum_{\xvar\in\{0,1\}^n}\alpha_\xvar\ketof\xvar$, where 
each $\xvar$ is a basis state, and the coefficients $\alpha_\xvar$, called the \emph{amplitudes}, are complex numbers satisfying the normalization condition: $\sum_{\xvar\in\{0,1\}^n}|\alpha_\xvar|^2 =1$.
\cref{triple:qbit:state:fig} depicts a quantum state in a system with $3$ qubits.
In the quantum computing literature, states are sometimes written as column vectors of length $2^n$, with complex entries corresponding to the amplitudes of the respective basis states.
To simplify notation, we often use the transpose of this column vector.
For example, we write $\big[\invsqrttwo\hspace{2mm}\invsqrttwo\big]^T$ to represent the
state $\invsqrttwo \ket 0 + \invsqrttwo \ket 1$.
This results in a more compact row vector form.
In our verification framework, we will represent the state of $n$ qubits as a perfect binary tree of depth $n$, where each leaf represents the amplitude of the basis state we get by traversing the tree from the root to the leaves (\cref{triple:qbit:state:fig}).
The tree representation is closely related to the vector representation of quantum states.
A tree encodes the vector by concatenating the amplitudes of the leaves in which they occur from left to right.

\newcommand{
% \begin{wrapfigure}[13]{r}{0.38\textwidth}
% \vspace{-10mm}
% \includegraphics[scale=0.7]{Figures/Automata/allaut}
% \caption{A plain tree automaton that characterizes the set of all $3$-qubit basis states.}
% \label{all:aut:fig}
% \end{wrapfigure}
\begin{wrapfigure}[21]{r}{0.34\textwidth}
\ifTR\else
\vspace*{2mm}
\fi
\hspace{-5mm}
\scalebox{0.8}{
  % \begin{minipage}{40mm}
  % \begin{align*}
  %   \initmark q & \to (q_0^1, q_1^1) &
  %   \initmark q & \to (q_1^1, q_0^1)
  %   \\[2mm]
  %             q_0^1 & \to (q_0^2, q_0^2) &
  %             q_1^1 & \to (q_0^2, q_1^2)
  %   \\[2mm]
  %             q_1^1 & \to (q_1^2, q_0^2)& 
  %             q_0^2 & \to (q_0^3, q_0^3)
  %   \\[2mm]
  %             q_1^2 & \to (q_0^3, q_1^3)&
  %             q_1^2 & \to (q_1^3, q_0^3)
  %   \\[2mm]
  %             q_0^3 & \to 0 &
  %             q_1^3 & \to 1
  % \end{align*}
  % \end{minipage}
  \begin{minipage}{30mm}
  \vspace*{-8mm}
  \begin{align*}
    \initmark q & \to (q_0^1, q_1^1) \\[1mm]
    \initmark q & \to (q_1^1, q_0^1)
    \\[1mm]
              q_0^1 & \to (q_0^2, q_0^2) \\[1mm]
              q_1^1 & \to (q_0^2, q_1^2)
    \\[1mm]
              q_1^1 & \to (q_1^2, q_0^2)\\[1mm]
              q_0^2 & \to (q_0^3, q_0^3)
    \\[1mm]
              q_1^2 & \to (q_0^3, q_1^3)\\[1mm]
              q_1^2 & \to (q_1^3, q_0^3)
    \\[1mm]
              q_0^3 & \to 0\\[1mm]
              q_1^3 & \to 1
  \end{align*}
  \end{minipage}
}
\scalebox{0.7}{
  \begin{minipage}{20mm}
   \begin{tikzpicture}[>=stealth',node distance=20mm]

  \pgfsetlinewidth{1bp}
  \tikzstyle{bddnode}=[draw,rectangle,rounded corners=2mm]
  \tikzstyle{bddleaf}=[]
  \tikzstyle{trans}=[->,>=stealth']
  \tikzstyle{translow}=[->,>=stealth',dashed]
  \tikzstyle{stick}=[-,>=stealth']
  \tikzstyle{hidtrans}=[]
  \tikzstyle{ark}=[]
  \tikzstyle{blueark}=[fill=black,opacity=0.2]
  \tikzstyle{redark}=[fill=red,opacity=0.6]

  \tikzstyle{outp}=[scale=0.75,fill=black!30,inner sep=0.6mm]

  \tikzstyle{bddnodex}=[bddnode,inner sep=1mm]

  % NODES

  \node[bddnodex] (p) {$q$};
  \node[left of=p,xshift=10mm] (root) {};
  \node[bddnodex,below left of=p,yshift=-5mm] (q1) {$q_1^1$};
  \node[bddnodex,below right of=p,yshift=-5mm] (q0) {$q_0^1$};
  \node[bddnodex,below of=q1] (r1) {$q_1^2$};
  \node[bddnodex,below of=q0] (r0) {$q_0^2$};
  \node[bddnodex,below of=r1] (s1) {$q_1^3$};
  \node[bddnodex,below of=r0] (s0) {$q_0^3$};
  \node[bddleaf, below of=s1,yshift=10mm] (c1) {$1$};
  \node[bddleaf, below of=s0,yshift=10mm] (c0) {$0$};

  \draw (p) coordinate[xshift=-5mm,yshift=-5mm] (pa);
  \draw (p) coordinate[xshift= 5mm,yshift=-5mm] (pb);

  \draw (q1) coordinate[xshift=0mm,yshift=-5mm] (q1a);
  \draw (q1) coordinate[xshift=5mm,yshift=-5mm] (q1b);

  \draw (r1) coordinate[xshift=0mm,yshift=-5mm] (r1a);
  \draw (r1) coordinate[xshift=5mm,yshift=-5mm] (r1b);
  
  \draw (q0) coordinate[xshift=0mm,yshift=-5mm] (q0a);
  % \draw (q0) coordinate[xshift=0mm,yshift=-6mm] (q0b);
  \draw (r0) coordinate[xshift=0mm,yshift=-5mm] (r0a);

  %%%%%%%%%%%%%%% transition %%%%%%%%%%%%%%%%%%%
  \draw[translow] (pa)
    to[bend right]
    coordinate[pos=0.6] (pa_1)
    (q1);

  \draw[trans] (p) to 
    % node[pos=0.9,left,xshift=-1mm,yshift=2mm] {$\{1\}$}
    (pa)
    to[bend right]
    coordinate[pos=0.3] (pa_2)
    (q0);

  \filldraw[blueark] (pa) to[bend right=15] (pa_1) to[bend right=30] (pa_2) to[bend left=10] cycle;
  % \node at (pa) [xshift=-2mm,yshift=-4mm] {$x_1$};

  %%%%%%%%%%%%%%% transition %%%%%%%%%%%%%%%%%%%
  \draw[trans] (p) to 
    % node[pos=0.9,right,xshift=1mm,yshift=2mm] {$\{2\}$}
    (pb)
    to[bend left]
    coordinate[pos=0.3] (pb_1)
    (q1);

  \draw[translow] (pb)
    to[bend left]
    coordinate[pos=0.6] (pb_2)
    (q0);

  \filldraw[blueark] (pb) to[bend left=10] (pb_1) to[bend
  right=30] (pb_2) to[bend right=15] cycle;
  % \node at (pb) [xshift=2mm,yshift=-4mm] {$x_1$};

  %%%%%%%%%%%%%%% transition %%%%%%%%%%%%%%%%%%%
  \draw[translow] (q1a)
    to[bend right]
    coordinate[pos=0.5] (q1a_1)
    (r1);

  \draw[trans] (q1) to 
    % node[pos=0.9,left,xshift=0mm,yshift=1mm] {$\{1\}$}
    (q1a)
    to[bend right]
    coordinate[pos=0.2] (q1a_2)
    (r0);

  \filldraw[blueark] (q1a) to[bend right=15] (q1a_1) to[bend
  right=30] (q1a_2) to[bend left=5] cycle;
  % \node at (q1a) [xshift=0mm,yshift=-4mm] {$x_2$};

  %%%%%%%%%%%%%%% transition %%%%%%%%%%%%%%%%%%%
  \draw[trans] (q1) to 
    % node[pos=0.9,right,xshift=0mm,yshift=2mm] {$\{2\}$}
    (q1b)
    to[bend left]
    coordinate[pos=0.5] (q1b_1)
    (r1);

  \draw[translow] (q1b)
    to[bend left=5]
    coordinate[pos=0.3] (q1b_2)
    (r0);

  \filldraw[blueark] (q1b) to[bend left=15] (q1b_1) to[bend
  right=30] (q1b_2) to[bend right=5] cycle;
  % \node at (q1b) [xshift=3mm,yshift=-4mm] {$x_2$};
  
  %%%%%%%%%%%%%%% transition %%%%%%%%%%%%%%%%%%%
  \draw[trans] (q0) to 
    % node[pos=0.9,left,xshift=-1mm,yshift=1mm] {$\{1,2\}$}
    (q0a)
    to[bend left]
    coordinate[pos=0.6] (q0a_2)
    (r0);

  \draw[translow] (q0a)
    to[bend right]
    coordinate[pos=0.6] (q0a_1)
    (r0);

  \filldraw[blueark] (q0a) to[bend right=15] (q0a_1) to[bend
  right=30] (q0a_2) to[bend right=15] cycle;
  % \node at (q0a) [xshift=0mm,yshift=-5mm] {$x_2$};

  %%%%%%%%%%%%%%% transition %%%%%%%%%%%%%%%%%%%
  \draw[trans] (r1) to 
    % node[pos=0.9,right,xshift=0mm,yshift=2mm] {$\{2\}$}
    (r1b)
    to[bend left]
    coordinate[pos=0.5] (r1b_1)
    (s1);

  \draw[translow] (r1b)
    to[bend left=5]
    coordinate[pos=0.3] (r1b_2)
    (s0);

  \filldraw[blueark] (r1b) to[bend left=15] (r1b_1) to[bend
  right=30] (r1b_2) to[bend right=5] cycle;
  % \node at (q1b) [xshift=3mm,yshift=-4mm] {$x_2$};

  %%%%%%%%%%%%%%% transition %%%%%%%%%%%%%%%%%%%
  \draw[translow] (r1a)
    to[bend right]
    coordinate[pos=0.5] (r1a_1)
    (s1);

  \draw[trans] (r1) to 
    % node[pos=0.9,left,xshift=0mm,yshift=1mm] {$\{1\}$}
    (r1a)
    to[bend right]
    coordinate[pos=0.2] (r1a_2)
    (s0);

  \filldraw[blueark] (r1a) to[bend right=15] (r1a_1) to[bend
  right=30] (r1a_2) to[bend left=5] cycle;
  % \node at (q1a) [xshift=0mm,yshift=-4mm] {$x_2$};

  %%%%%%%%%%%%%%% transition %%%%%%%%%%%%%%%%%%%
  \draw[trans] (r0) to 
    % node[pos=0.9,left,xshift=-1mm,yshift=1mm] {$\{1,2\}$}
    (r0a)
    to[bend left]
    coordinate[pos=0.6] (r0a_2)
    (s0);

  \draw[translow] (r0a)
    to[bend right]
    coordinate[pos=0.6] (r0a_1)
    (s0);

  \filldraw[blueark] (r0a) to[bend right=15] (r0a_1) to[bend
  right=30] (r0a_2) to[bend right=15] cycle;
  % \node at (q0a) [xshift=0mm,yshift=-5mm] {$x_2$};

  %%%%%%%%%%%%%%% transition %%%%%%%%%%%%%%%%%%%
  \draw[trans] (root) to (p);
  \draw[->] (s1) to
    %node[left,xshift=-2mm] {$\{1\}$}
    (c1);
  \draw[->] (s0) to
    %node[left,xshift=-2mm] {$\{1\}$}
    (c0);
\end{tikzpicture}
  \end{minipage}
}
\vspace{-6mm}
\caption{A plain tree automaton that characterizes the set of all $3$-qubit
  basis states.
  In the figure, transitions are represened by hyperedges (represented by grey
  arcs) with dashed arrows going to the left-hand child and solid arrows going
  to the right-hand child.
  }
\label{all:aut:fig}
\end{wrapfigure}%
}[0]{
% \begin{wrapfigure}[13]{r}{0.38\textwidth}
% \vspace{-10mm}
% \includegraphics[scale=0.7]{Figures/Automata/allaut}
% \caption{A plain tree automaton that characterizes the set of all $3$-qubit basis states.}
% \label{all:aut:fig}
% \end{wrapfigure}
\begin{wrapfigure}[21]{r}{0.34\textwidth}
\ifTR\else
\vspace*{2mm}
\fi
\hspace{-5mm}
\scalebox{0.8}{
  % \begin{minipage}{40mm}
  % \begin{align*}
  %   \initmark q & \to (q_0^1, q_1^1) &
  %   \initmark q & \to (q_1^1, q_0^1)
  %   \\[2mm]
  %             q_0^1 & \to (q_0^2, q_0^2) &
  %             q_1^1 & \to (q_0^2, q_1^2)
  %   \\[2mm]
  %             q_1^1 & \to (q_1^2, q_0^2)& 
  %             q_0^2 & \to (q_0^3, q_0^3)
  %   \\[2mm]
  %             q_1^2 & \to (q_0^3, q_1^3)&
  %             q_1^2 & \to (q_1^3, q_0^3)
  %   \\[2mm]
  %             q_0^3 & \to 0 &
  %             q_1^3 & \to 1
  % \end{align*}
  % \end{minipage}
  \begin{minipage}{30mm}
  \vspace*{-8mm}
  \begin{align*}
    \initmark q & \to (q_0^1, q_1^1) \\[1mm]
    \initmark q & \to (q_1^1, q_0^1)
    \\[1mm]
              q_0^1 & \to (q_0^2, q_0^2) \\[1mm]
              q_1^1 & \to (q_0^2, q_1^2)
    \\[1mm]
              q_1^1 & \to (q_1^2, q_0^2)\\[1mm]
              q_0^2 & \to (q_0^3, q_0^3)
    \\[1mm]
              q_1^2 & \to (q_0^3, q_1^3)\\[1mm]
              q_1^2 & \to (q_1^3, q_0^3)
    \\[1mm]
              q_0^3 & \to 0\\[1mm]
              q_1^3 & \to 1
  \end{align*}
  \end{minipage}
}
\scalebox{0.7}{
  \begin{minipage}{20mm}
   \begin{tikzpicture}[>=stealth',node distance=20mm]

  \pgfsetlinewidth{1bp}
  \tikzstyle{bddnode}=[draw,rectangle,rounded corners=2mm]
  \tikzstyle{bddleaf}=[]
  \tikzstyle{trans}=[->,>=stealth']
  \tikzstyle{translow}=[->,>=stealth',dashed]
  \tikzstyle{stick}=[-,>=stealth']
  \tikzstyle{hidtrans}=[]
  \tikzstyle{ark}=[]
  \tikzstyle{blueark}=[fill=black,opacity=0.2]
  \tikzstyle{redark}=[fill=red,opacity=0.6]

  \tikzstyle{outp}=[scale=0.75,fill=black!30,inner sep=0.6mm]

  \tikzstyle{bddnodex}=[bddnode,inner sep=1mm]

  % NODES

  \node[bddnodex] (p) {$q$};
  \node[left of=p,xshift=10mm] (root) {};
  \node[bddnodex,below left of=p,yshift=-5mm] (q1) {$q_1^1$};
  \node[bddnodex,below right of=p,yshift=-5mm] (q0) {$q_0^1$};
  \node[bddnodex,below of=q1] (r1) {$q_1^2$};
  \node[bddnodex,below of=q0] (r0) {$q_0^2$};
  \node[bddnodex,below of=r1] (s1) {$q_1^3$};
  \node[bddnodex,below of=r0] (s0) {$q_0^3$};
  \node[bddleaf, below of=s1,yshift=10mm] (c1) {$1$};
  \node[bddleaf, below of=s0,yshift=10mm] (c0) {$0$};

  \draw (p) coordinate[xshift=-5mm,yshift=-5mm] (pa);
  \draw (p) coordinate[xshift= 5mm,yshift=-5mm] (pb);

  \draw (q1) coordinate[xshift=0mm,yshift=-5mm] (q1a);
  \draw (q1) coordinate[xshift=5mm,yshift=-5mm] (q1b);

  \draw (r1) coordinate[xshift=0mm,yshift=-5mm] (r1a);
  \draw (r1) coordinate[xshift=5mm,yshift=-5mm] (r1b);
  
  \draw (q0) coordinate[xshift=0mm,yshift=-5mm] (q0a);
  % \draw (q0) coordinate[xshift=0mm,yshift=-6mm] (q0b);
  \draw (r0) coordinate[xshift=0mm,yshift=-5mm] (r0a);

  %%%%%%%%%%%%%%% transition %%%%%%%%%%%%%%%%%%%
  \draw[translow] (pa)
    to[bend right]
    coordinate[pos=0.6] (pa_1)
    (q1);

  \draw[trans] (p) to 
    % node[pos=0.9,left,xshift=-1mm,yshift=2mm] {$\{1\}$}
    (pa)
    to[bend right]
    coordinate[pos=0.3] (pa_2)
    (q0);

  \filldraw[blueark] (pa) to[bend right=15] (pa_1) to[bend right=30] (pa_2) to[bend left=10] cycle;
  % \node at (pa) [xshift=-2mm,yshift=-4mm] {$x_1$};

  %%%%%%%%%%%%%%% transition %%%%%%%%%%%%%%%%%%%
  \draw[trans] (p) to 
    % node[pos=0.9,right,xshift=1mm,yshift=2mm] {$\{2\}$}
    (pb)
    to[bend left]
    coordinate[pos=0.3] (pb_1)
    (q1);

  \draw[translow] (pb)
    to[bend left]
    coordinate[pos=0.6] (pb_2)
    (q0);

  \filldraw[blueark] (pb) to[bend left=10] (pb_1) to[bend
  right=30] (pb_2) to[bend right=15] cycle;
  % \node at (pb) [xshift=2mm,yshift=-4mm] {$x_1$};

  %%%%%%%%%%%%%%% transition %%%%%%%%%%%%%%%%%%%
  \draw[translow] (q1a)
    to[bend right]
    coordinate[pos=0.5] (q1a_1)
    (r1);

  \draw[trans] (q1) to 
    % node[pos=0.9,left,xshift=0mm,yshift=1mm] {$\{1\}$}
    (q1a)
    to[bend right]
    coordinate[pos=0.2] (q1a_2)
    (r0);

  \filldraw[blueark] (q1a) to[bend right=15] (q1a_1) to[bend
  right=30] (q1a_2) to[bend left=5] cycle;
  % \node at (q1a) [xshift=0mm,yshift=-4mm] {$x_2$};

  %%%%%%%%%%%%%%% transition %%%%%%%%%%%%%%%%%%%
  \draw[trans] (q1) to 
    % node[pos=0.9,right,xshift=0mm,yshift=2mm] {$\{2\}$}
    (q1b)
    to[bend left]
    coordinate[pos=0.5] (q1b_1)
    (r1);

  \draw[translow] (q1b)
    to[bend left=5]
    coordinate[pos=0.3] (q1b_2)
    (r0);

  \filldraw[blueark] (q1b) to[bend left=15] (q1b_1) to[bend
  right=30] (q1b_2) to[bend right=5] cycle;
  % \node at (q1b) [xshift=3mm,yshift=-4mm] {$x_2$};
  
  %%%%%%%%%%%%%%% transition %%%%%%%%%%%%%%%%%%%
  \draw[trans] (q0) to 
    % node[pos=0.9,left,xshift=-1mm,yshift=1mm] {$\{1,2\}$}
    (q0a)
    to[bend left]
    coordinate[pos=0.6] (q0a_2)
    (r0);

  \draw[translow] (q0a)
    to[bend right]
    coordinate[pos=0.6] (q0a_1)
    (r0);

  \filldraw[blueark] (q0a) to[bend right=15] (q0a_1) to[bend
  right=30] (q0a_2) to[bend right=15] cycle;
  % \node at (q0a) [xshift=0mm,yshift=-5mm] {$x_2$};

  %%%%%%%%%%%%%%% transition %%%%%%%%%%%%%%%%%%%
  \draw[trans] (r1) to 
    % node[pos=0.9,right,xshift=0mm,yshift=2mm] {$\{2\}$}
    (r1b)
    to[bend left]
    coordinate[pos=0.5] (r1b_1)
    (s1);

  \draw[translow] (r1b)
    to[bend left=5]
    coordinate[pos=0.3] (r1b_2)
    (s0);

  \filldraw[blueark] (r1b) to[bend left=15] (r1b_1) to[bend
  right=30] (r1b_2) to[bend right=5] cycle;
  % \node at (q1b) [xshift=3mm,yshift=-4mm] {$x_2$};

  %%%%%%%%%%%%%%% transition %%%%%%%%%%%%%%%%%%%
  \draw[translow] (r1a)
    to[bend right]
    coordinate[pos=0.5] (r1a_1)
    (s1);

  \draw[trans] (r1) to 
    % node[pos=0.9,left,xshift=0mm,yshift=1mm] {$\{1\}$}
    (r1a)
    to[bend right]
    coordinate[pos=0.2] (r1a_2)
    (s0);

  \filldraw[blueark] (r1a) to[bend right=15] (r1a_1) to[bend
  right=30] (r1a_2) to[bend left=5] cycle;
  % \node at (q1a) [xshift=0mm,yshift=-4mm] {$x_2$};

  %%%%%%%%%%%%%%% transition %%%%%%%%%%%%%%%%%%%
  \draw[trans] (r0) to 
    % node[pos=0.9,left,xshift=-1mm,yshift=1mm] {$\{1,2\}$}
    (r0a)
    to[bend left]
    coordinate[pos=0.6] (r0a_2)
    (s0);

  \draw[translow] (r0a)
    to[bend right]
    coordinate[pos=0.6] (r0a_1)
    (s0);

  \filldraw[blueark] (r0a) to[bend right=15] (r0a_1) to[bend
  right=30] (r0a_2) to[bend right=15] cycle;
  % \node at (q0a) [xshift=0mm,yshift=-5mm] {$x_2$};

  %%%%%%%%%%%%%%% transition %%%%%%%%%%%%%%%%%%%
  \draw[trans] (root) to (p);
  \draw[->] (s1) to
    %node[left,xshift=-2mm] {$\{1\}$}
    (c1);
  \draw[->] (s0) to
    %node[left,xshift=-2mm] {$\{1\}$}
    (c0);
\end{tikzpicture}
  \end{minipage}
}
\vspace{-6mm}
\caption{A plain tree automaton that characterizes the set of all $3$-qubit
  basis states.
  In the figure, transitions are represened by hyperedges (represented by grey
  arcs) with dashed arrows going to the left-hand child and solid arrows going
  to the right-hand child.
  }
\label{all:aut:fig}
\end{wrapfigure}%
}

%*******************************************************************************
\vspace{-0.0mm}
\subsection{Tree Automata}
\vspace{-0.0mm}
%*******************************************************************************

\ifTR

% \begin{wrapfigure}[13]{r}{0.38\textwidth}
% \vspace{-10mm}
% \includegraphics[scale=0.7]{Figures/Automata/allaut}
% \caption{A plain tree automaton that characterizes the set of all $3$-qubit basis states.}
% \label{all:aut:fig}
% \end{wrapfigure}
\begin{wrapfigure}[21]{r}{0.34\textwidth}
\ifTR\else
\vspace*{2mm}
\fi
\hspace{-5mm}
\scalebox{0.8}{
  % \begin{minipage}{40mm}
  % \begin{align*}
  %   \initmark q & \to (q_0^1, q_1^1) &
  %   \initmark q & \to (q_1^1, q_0^1)
  %   \\[2mm]
  %             q_0^1 & \to (q_0^2, q_0^2) &
  %             q_1^1 & \to (q_0^2, q_1^2)
  %   \\[2mm]
  %             q_1^1 & \to (q_1^2, q_0^2)& 
  %             q_0^2 & \to (q_0^3, q_0^3)
  %   \\[2mm]
  %             q_1^2 & \to (q_0^3, q_1^3)&
  %             q_1^2 & \to (q_1^3, q_0^3)
  %   \\[2mm]
  %             q_0^3 & \to 0 &
  %             q_1^3 & \to 1
  % \end{align*}
  % \end{minipage}
  \begin{minipage}{30mm}
  \vspace*{-8mm}
  \begin{align*}
    \initmark q & \to (q_0^1, q_1^1) \\[1mm]
    \initmark q & \to (q_1^1, q_0^1)
    \\[1mm]
              q_0^1 & \to (q_0^2, q_0^2) \\[1mm]
              q_1^1 & \to (q_0^2, q_1^2)
    \\[1mm]
              q_1^1 & \to (q_1^2, q_0^2)\\[1mm]
              q_0^2 & \to (q_0^3, q_0^3)
    \\[1mm]
              q_1^2 & \to (q_0^3, q_1^3)\\[1mm]
              q_1^2 & \to (q_1^3, q_0^3)
    \\[1mm]
              q_0^3 & \to 0\\[1mm]
              q_1^3 & \to 1
  \end{align*}
  \end{minipage}
}
\scalebox{0.7}{
  \begin{minipage}{20mm}
   \begin{tikzpicture}[>=stealth',node distance=20mm]

  \pgfsetlinewidth{1bp}
  \tikzstyle{bddnode}=[draw,rectangle,rounded corners=2mm]
  \tikzstyle{bddleaf}=[]
  \tikzstyle{trans}=[->,>=stealth']
  \tikzstyle{translow}=[->,>=stealth',dashed]
  \tikzstyle{stick}=[-,>=stealth']
  \tikzstyle{hidtrans}=[]
  \tikzstyle{ark}=[]
  \tikzstyle{blueark}=[fill=black,opacity=0.2]
  \tikzstyle{redark}=[fill=red,opacity=0.6]

  \tikzstyle{outp}=[scale=0.75,fill=black!30,inner sep=0.6mm]

  \tikzstyle{bddnodex}=[bddnode,inner sep=1mm]

  % NODES

  \node[bddnodex] (p) {$q$};
  \node[left of=p,xshift=10mm] (root) {};
  \node[bddnodex,below left of=p,yshift=-5mm] (q1) {$q_1^1$};
  \node[bddnodex,below right of=p,yshift=-5mm] (q0) {$q_0^1$};
  \node[bddnodex,below of=q1] (r1) {$q_1^2$};
  \node[bddnodex,below of=q0] (r0) {$q_0^2$};
  \node[bddnodex,below of=r1] (s1) {$q_1^3$};
  \node[bddnodex,below of=r0] (s0) {$q_0^3$};
  \node[bddleaf, below of=s1,yshift=10mm] (c1) {$1$};
  \node[bddleaf, below of=s0,yshift=10mm] (c0) {$0$};

  \draw (p) coordinate[xshift=-5mm,yshift=-5mm] (pa);
  \draw (p) coordinate[xshift= 5mm,yshift=-5mm] (pb);

  \draw (q1) coordinate[xshift=0mm,yshift=-5mm] (q1a);
  \draw (q1) coordinate[xshift=5mm,yshift=-5mm] (q1b);

  \draw (r1) coordinate[xshift=0mm,yshift=-5mm] (r1a);
  \draw (r1) coordinate[xshift=5mm,yshift=-5mm] (r1b);
  
  \draw (q0) coordinate[xshift=0mm,yshift=-5mm] (q0a);
  % \draw (q0) coordinate[xshift=0mm,yshift=-6mm] (q0b);
  \draw (r0) coordinate[xshift=0mm,yshift=-5mm] (r0a);

  %%%%%%%%%%%%%%% transition %%%%%%%%%%%%%%%%%%%
  \draw[translow] (pa)
    to[bend right]
    coordinate[pos=0.6] (pa_1)
    (q1);

  \draw[trans] (p) to 
    % node[pos=0.9,left,xshift=-1mm,yshift=2mm] {$\{1\}$}
    (pa)
    to[bend right]
    coordinate[pos=0.3] (pa_2)
    (q0);

  \filldraw[blueark] (pa) to[bend right=15] (pa_1) to[bend right=30] (pa_2) to[bend left=10] cycle;
  % \node at (pa) [xshift=-2mm,yshift=-4mm] {$x_1$};

  %%%%%%%%%%%%%%% transition %%%%%%%%%%%%%%%%%%%
  \draw[trans] (p) to 
    % node[pos=0.9,right,xshift=1mm,yshift=2mm] {$\{2\}$}
    (pb)
    to[bend left]
    coordinate[pos=0.3] (pb_1)
    (q1);

  \draw[translow] (pb)
    to[bend left]
    coordinate[pos=0.6] (pb_2)
    (q0);

  \filldraw[blueark] (pb) to[bend left=10] (pb_1) to[bend
  right=30] (pb_2) to[bend right=15] cycle;
  % \node at (pb) [xshift=2mm,yshift=-4mm] {$x_1$};

  %%%%%%%%%%%%%%% transition %%%%%%%%%%%%%%%%%%%
  \draw[translow] (q1a)
    to[bend right]
    coordinate[pos=0.5] (q1a_1)
    (r1);

  \draw[trans] (q1) to 
    % node[pos=0.9,left,xshift=0mm,yshift=1mm] {$\{1\}$}
    (q1a)
    to[bend right]
    coordinate[pos=0.2] (q1a_2)
    (r0);

  \filldraw[blueark] (q1a) to[bend right=15] (q1a_1) to[bend
  right=30] (q1a_2) to[bend left=5] cycle;
  % \node at (q1a) [xshift=0mm,yshift=-4mm] {$x_2$};

  %%%%%%%%%%%%%%% transition %%%%%%%%%%%%%%%%%%%
  \draw[trans] (q1) to 
    % node[pos=0.9,right,xshift=0mm,yshift=2mm] {$\{2\}$}
    (q1b)
    to[bend left]
    coordinate[pos=0.5] (q1b_1)
    (r1);

  \draw[translow] (q1b)
    to[bend left=5]
    coordinate[pos=0.3] (q1b_2)
    (r0);

  \filldraw[blueark] (q1b) to[bend left=15] (q1b_1) to[bend
  right=30] (q1b_2) to[bend right=5] cycle;
  % \node at (q1b) [xshift=3mm,yshift=-4mm] {$x_2$};
  
  %%%%%%%%%%%%%%% transition %%%%%%%%%%%%%%%%%%%
  \draw[trans] (q0) to 
    % node[pos=0.9,left,xshift=-1mm,yshift=1mm] {$\{1,2\}$}
    (q0a)
    to[bend left]
    coordinate[pos=0.6] (q0a_2)
    (r0);

  \draw[translow] (q0a)
    to[bend right]
    coordinate[pos=0.6] (q0a_1)
    (r0);

  \filldraw[blueark] (q0a) to[bend right=15] (q0a_1) to[bend
  right=30] (q0a_2) to[bend right=15] cycle;
  % \node at (q0a) [xshift=0mm,yshift=-5mm] {$x_2$};

  %%%%%%%%%%%%%%% transition %%%%%%%%%%%%%%%%%%%
  \draw[trans] (r1) to 
    % node[pos=0.9,right,xshift=0mm,yshift=2mm] {$\{2\}$}
    (r1b)
    to[bend left]
    coordinate[pos=0.5] (r1b_1)
    (s1);

  \draw[translow] (r1b)
    to[bend left=5]
    coordinate[pos=0.3] (r1b_2)
    (s0);

  \filldraw[blueark] (r1b) to[bend left=15] (r1b_1) to[bend
  right=30] (r1b_2) to[bend right=5] cycle;
  % \node at (q1b) [xshift=3mm,yshift=-4mm] {$x_2$};

  %%%%%%%%%%%%%%% transition %%%%%%%%%%%%%%%%%%%
  \draw[translow] (r1a)
    to[bend right]
    coordinate[pos=0.5] (r1a_1)
    (s1);

  \draw[trans] (r1) to 
    % node[pos=0.9,left,xshift=0mm,yshift=1mm] {$\{1\}$}
    (r1a)
    to[bend right]
    coordinate[pos=0.2] (r1a_2)
    (s0);

  \filldraw[blueark] (r1a) to[bend right=15] (r1a_1) to[bend
  right=30] (r1a_2) to[bend left=5] cycle;
  % \node at (q1a) [xshift=0mm,yshift=-4mm] {$x_2$};

  %%%%%%%%%%%%%%% transition %%%%%%%%%%%%%%%%%%%
  \draw[trans] (r0) to 
    % node[pos=0.9,left,xshift=-1mm,yshift=1mm] {$\{1,2\}$}
    (r0a)
    to[bend left]
    coordinate[pos=0.6] (r0a_2)
    (s0);

  \draw[translow] (r0a)
    to[bend right]
    coordinate[pos=0.6] (r0a_1)
    (s0);

  \filldraw[blueark] (r0a) to[bend right=15] (r0a_1) to[bend
  right=30] (r0a_2) to[bend right=15] cycle;
  % \node at (q0a) [xshift=0mm,yshift=-5mm] {$x_2$};

  %%%%%%%%%%%%%%% transition %%%%%%%%%%%%%%%%%%%
  \draw[trans] (root) to (p);
  \draw[->] (s1) to
    %node[left,xshift=-2mm] {$\{1\}$}
    (c1);
  \draw[->] (s0) to
    %node[left,xshift=-2mm] {$\{1\}$}
    (c0);
\end{tikzpicture}
  \end{minipage}
}
\vspace{-6mm}
\caption{A plain tree automaton that characterizes the set of all $3$-qubit
  basis states.
  In the figure, transitions are represened by hyperedges (represented by grey
  arcs) with dashed arrows going to the left-hand child and solid arrows going
  to the right-hand child.
  }
\label{all:aut:fig}
\end{wrapfigure}%
\fi%%%%%%%%
We use (tree) automata to encode both finite and infinite sets of quantum states.
We will introduce our tree automata model in three steps, namely, plain (standard) automata~\cite{ChenCLLTY23}, level-synchronized (colored) tree automata (LSTAs)~\cite{abdulla2025verifying}, and weighted synchronized tree automata (SWTAs).
A plain automaton adheres to the classical definition, comprising a finite set of states and transition rules.
Notice that the classical definition attaches a label to each transition. To simplify the presentation of the overview section, we retain only the leaf transition labels and omit the non-leaf labels.
We use
%%%%%%%%%%%%%%%%%%%
% !!!!!!!!!!!!!!!!!!   BROKEN PARAGRAPH  !!!!!!!!!!!!!!!!!!!!!
%%%%%%%%%%%%%%%%%%%
\ifTR\else

% \begin{wrapfigure}[13]{r}{0.38\textwidth}
% \vspace{-10mm}
% \includegraphics[scale=0.7]{Figures/Automata/allaut}
% \caption{A plain tree automaton that characterizes the set of all $3$-qubit basis states.}
% \label{all:aut:fig}
% \end{wrapfigure}
\begin{wrapfigure}[21]{r}{0.34\textwidth}
\ifTR\else
\vspace*{2mm}
\fi
\hspace{-5mm}
\scalebox{0.8}{
  % \begin{minipage}{40mm}
  % \begin{align*}
  %   \initmark q & \to (q_0^1, q_1^1) &
  %   \initmark q & \to (q_1^1, q_0^1)
  %   \\[2mm]
  %             q_0^1 & \to (q_0^2, q_0^2) &
  %             q_1^1 & \to (q_0^2, q_1^2)
  %   \\[2mm]
  %             q_1^1 & \to (q_1^2, q_0^2)& 
  %             q_0^2 & \to (q_0^3, q_0^3)
  %   \\[2mm]
  %             q_1^2 & \to (q_0^3, q_1^3)&
  %             q_1^2 & \to (q_1^3, q_0^3)
  %   \\[2mm]
  %             q_0^3 & \to 0 &
  %             q_1^3 & \to 1
  % \end{align*}
  % \end{minipage}
  \begin{minipage}{30mm}
  \vspace*{-8mm}
  \begin{align*}
    \initmark q & \to (q_0^1, q_1^1) \\[1mm]
    \initmark q & \to (q_1^1, q_0^1)
    \\[1mm]
              q_0^1 & \to (q_0^2, q_0^2) \\[1mm]
              q_1^1 & \to (q_0^2, q_1^2)
    \\[1mm]
              q_1^1 & \to (q_1^2, q_0^2)\\[1mm]
              q_0^2 & \to (q_0^3, q_0^3)
    \\[1mm]
              q_1^2 & \to (q_0^3, q_1^3)\\[1mm]
              q_1^2 & \to (q_1^3, q_0^3)
    \\[1mm]
              q_0^3 & \to 0\\[1mm]
              q_1^3 & \to 1
  \end{align*}
  \end{minipage}
}
\scalebox{0.7}{
  \begin{minipage}{20mm}
   \begin{tikzpicture}[>=stealth',node distance=20mm]

  \pgfsetlinewidth{1bp}
  \tikzstyle{bddnode}=[draw,rectangle,rounded corners=2mm]
  \tikzstyle{bddleaf}=[]
  \tikzstyle{trans}=[->,>=stealth']
  \tikzstyle{translow}=[->,>=stealth',dashed]
  \tikzstyle{stick}=[-,>=stealth']
  \tikzstyle{hidtrans}=[]
  \tikzstyle{ark}=[]
  \tikzstyle{blueark}=[fill=black,opacity=0.2]
  \tikzstyle{redark}=[fill=red,opacity=0.6]

  \tikzstyle{outp}=[scale=0.75,fill=black!30,inner sep=0.6mm]

  \tikzstyle{bddnodex}=[bddnode,inner sep=1mm]

  % NODES

  \node[bddnodex] (p) {$q$};
  \node[left of=p,xshift=10mm] (root) {};
  \node[bddnodex,below left of=p,yshift=-5mm] (q1) {$q_1^1$};
  \node[bddnodex,below right of=p,yshift=-5mm] (q0) {$q_0^1$};
  \node[bddnodex,below of=q1] (r1) {$q_1^2$};
  \node[bddnodex,below of=q0] (r0) {$q_0^2$};
  \node[bddnodex,below of=r1] (s1) {$q_1^3$};
  \node[bddnodex,below of=r0] (s0) {$q_0^3$};
  \node[bddleaf, below of=s1,yshift=10mm] (c1) {$1$};
  \node[bddleaf, below of=s0,yshift=10mm] (c0) {$0$};

  \draw (p) coordinate[xshift=-5mm,yshift=-5mm] (pa);
  \draw (p) coordinate[xshift= 5mm,yshift=-5mm] (pb);

  \draw (q1) coordinate[xshift=0mm,yshift=-5mm] (q1a);
  \draw (q1) coordinate[xshift=5mm,yshift=-5mm] (q1b);

  \draw (r1) coordinate[xshift=0mm,yshift=-5mm] (r1a);
  \draw (r1) coordinate[xshift=5mm,yshift=-5mm] (r1b);
  
  \draw (q0) coordinate[xshift=0mm,yshift=-5mm] (q0a);
  % \draw (q0) coordinate[xshift=0mm,yshift=-6mm] (q0b);
  \draw (r0) coordinate[xshift=0mm,yshift=-5mm] (r0a);

  %%%%%%%%%%%%%%% transition %%%%%%%%%%%%%%%%%%%
  \draw[translow] (pa)
    to[bend right]
    coordinate[pos=0.6] (pa_1)
    (q1);

  \draw[trans] (p) to 
    % node[pos=0.9,left,xshift=-1mm,yshift=2mm] {$\{1\}$}
    (pa)
    to[bend right]
    coordinate[pos=0.3] (pa_2)
    (q0);

  \filldraw[blueark] (pa) to[bend right=15] (pa_1) to[bend right=30] (pa_2) to[bend left=10] cycle;
  % \node at (pa) [xshift=-2mm,yshift=-4mm] {$x_1$};

  %%%%%%%%%%%%%%% transition %%%%%%%%%%%%%%%%%%%
  \draw[trans] (p) to 
    % node[pos=0.9,right,xshift=1mm,yshift=2mm] {$\{2\}$}
    (pb)
    to[bend left]
    coordinate[pos=0.3] (pb_1)
    (q1);

  \draw[translow] (pb)
    to[bend left]
    coordinate[pos=0.6] (pb_2)
    (q0);

  \filldraw[blueark] (pb) to[bend left=10] (pb_1) to[bend
  right=30] (pb_2) to[bend right=15] cycle;
  % \node at (pb) [xshift=2mm,yshift=-4mm] {$x_1$};

  %%%%%%%%%%%%%%% transition %%%%%%%%%%%%%%%%%%%
  \draw[translow] (q1a)
    to[bend right]
    coordinate[pos=0.5] (q1a_1)
    (r1);

  \draw[trans] (q1) to 
    % node[pos=0.9,left,xshift=0mm,yshift=1mm] {$\{1\}$}
    (q1a)
    to[bend right]
    coordinate[pos=0.2] (q1a_2)
    (r0);

  \filldraw[blueark] (q1a) to[bend right=15] (q1a_1) to[bend
  right=30] (q1a_2) to[bend left=5] cycle;
  % \node at (q1a) [xshift=0mm,yshift=-4mm] {$x_2$};

  %%%%%%%%%%%%%%% transition %%%%%%%%%%%%%%%%%%%
  \draw[trans] (q1) to 
    % node[pos=0.9,right,xshift=0mm,yshift=2mm] {$\{2\}$}
    (q1b)
    to[bend left]
    coordinate[pos=0.5] (q1b_1)
    (r1);

  \draw[translow] (q1b)
    to[bend left=5]
    coordinate[pos=0.3] (q1b_2)
    (r0);

  \filldraw[blueark] (q1b) to[bend left=15] (q1b_1) to[bend
  right=30] (q1b_2) to[bend right=5] cycle;
  % \node at (q1b) [xshift=3mm,yshift=-4mm] {$x_2$};
  
  %%%%%%%%%%%%%%% transition %%%%%%%%%%%%%%%%%%%
  \draw[trans] (q0) to 
    % node[pos=0.9,left,xshift=-1mm,yshift=1mm] {$\{1,2\}$}
    (q0a)
    to[bend left]
    coordinate[pos=0.6] (q0a_2)
    (r0);

  \draw[translow] (q0a)
    to[bend right]
    coordinate[pos=0.6] (q0a_1)
    (r0);

  \filldraw[blueark] (q0a) to[bend right=15] (q0a_1) to[bend
  right=30] (q0a_2) to[bend right=15] cycle;
  % \node at (q0a) [xshift=0mm,yshift=-5mm] {$x_2$};

  %%%%%%%%%%%%%%% transition %%%%%%%%%%%%%%%%%%%
  \draw[trans] (r1) to 
    % node[pos=0.9,right,xshift=0mm,yshift=2mm] {$\{2\}$}
    (r1b)
    to[bend left]
    coordinate[pos=0.5] (r1b_1)
    (s1);

  \draw[translow] (r1b)
    to[bend left=5]
    coordinate[pos=0.3] (r1b_2)
    (s0);

  \filldraw[blueark] (r1b) to[bend left=15] (r1b_1) to[bend
  right=30] (r1b_2) to[bend right=5] cycle;
  % \node at (q1b) [xshift=3mm,yshift=-4mm] {$x_2$};

  %%%%%%%%%%%%%%% transition %%%%%%%%%%%%%%%%%%%
  \draw[translow] (r1a)
    to[bend right]
    coordinate[pos=0.5] (r1a_1)
    (s1);

  \draw[trans] (r1) to 
    % node[pos=0.9,left,xshift=0mm,yshift=1mm] {$\{1\}$}
    (r1a)
    to[bend right]
    coordinate[pos=0.2] (r1a_2)
    (s0);

  \filldraw[blueark] (r1a) to[bend right=15] (r1a_1) to[bend
  right=30] (r1a_2) to[bend left=5] cycle;
  % \node at (q1a) [xshift=0mm,yshift=-4mm] {$x_2$};

  %%%%%%%%%%%%%%% transition %%%%%%%%%%%%%%%%%%%
  \draw[trans] (r0) to 
    % node[pos=0.9,left,xshift=-1mm,yshift=1mm] {$\{1,2\}$}
    (r0a)
    to[bend left]
    coordinate[pos=0.6] (r0a_2)
    (s0);

  \draw[translow] (r0a)
    to[bend right]
    coordinate[pos=0.6] (r0a_1)
    (s0);

  \filldraw[blueark] (r0a) to[bend right=15] (r0a_1) to[bend
  right=30] (r0a_2) to[bend right=15] cycle;
  % \node at (q0a) [xshift=0mm,yshift=-5mm] {$x_2$};

  %%%%%%%%%%%%%%% transition %%%%%%%%%%%%%%%%%%%
  \draw[trans] (root) to (p);
  \draw[->] (s1) to
    %node[left,xshift=-2mm] {$\{1\}$}
    (c1);
  \draw[->] (s0) to
    %node[left,xshift=-2mm] {$\{1\}$}
    (c0);
\end{tikzpicture}
  \end{minipage}
}
\vspace{-6mm}
\caption{A plain tree automaton that characterizes the set of all $3$-qubit
  basis states.
  In the figure, transitions are represened by hyperedges (represented by grey
  arcs) with dashed arrows going to the left-hand child and solid arrows going
  to the right-hand child.
  }
\label{all:aut:fig}
\end{wrapfigure}%
\fi%%%%%%%%
the leaf transition labels to represent amplitudes.
The non-leaf labels are assumed to be identical and irrelevant at this stage.
These non-leaf symbols will later be used to represent variable names and, in some cases, to avoid non-deterministic transitions.
As usual, the automaton accepts a given tree if it is possible to label the nodes of the tree with states of the automaton such that the root%
is labeled by the root state and the children of each node are labeled in accordance with the transition rules. %\lh{some mention about what happens at leaves seems missing, from the point of view of an automata person}
\cref{all:aut:fig} depicts an automaton accepting trees that characterize all basis quantum states over three qubits. 
Intuitively, the states labeled~$q_0^i$ accept trees whose leaves all have amplitude~$0$, while the states labeled~$q_1^i$ accept trees with exactly one leaf having amplitude~$1$, and all other leaves having amplitude~$0$.
Analogously, we can construct automata that handle any given number~$n$ of qubits.
A~key observation is that plain tree automata allow us to represent a set containing exponentially many quantum states using only a~linear number of automaton states.
%\lh{one wanders why it is not infinitely many states. I see the reason, but it is not obvious, and automata normally have infinite languages. }.  
%
On the other hand, plain automata cannot capture infinite sets of quantum states.
The reason for this is that we represent states as \emph{perfect} binary trees, and representing an infinite set of such trees requires a~synchronization mechanism: all leaf transitions must occur at the same level.

\begin{wrapfigure}[14]{r}{67mm}
\vspace*{-2mm}
\begin{subfigure}[b]{25mm}
\scalebox{0.8}{
 \begin{tikzpicture}[>=stealth',node distance=20mm]

  \pgfsetlinewidth{1bp}
  \tikzstyle{bddnode}=[draw,rectangle,rounded corners=2mm]
  \tikzstyle{bddleaf}=[]
  \tikzstyle{trans}=[->,>=stealth']
  \tikzstyle{translow}=[->,>=stealth',dashed]
  \tikzstyle{stick}=[-,>=stealth']
  \tikzstyle{hidtrans}=[]
  \tikzstyle{ark}=[]
  \tikzstyle{greyark}=[fill=black,opacity=0.2]
  \tikzstyle{blueark}=[fill=blue,opacity=0.2,draw=blue!60]
  \tikzstyle{redark}=[fill=red,opacity=0.2,draw=red!60]

  \tikzstyle{outp}=[scale=0.75,fill=black!30,inner sep=0.6mm]

  \tikzstyle{bddnodex}=[bddnode,inner sep=1mm]

  % NODES

  \node[bddnodex,below right of=p,yshift=-5mm] (q0) {$q_0$};
  \node[left of=q0,xshift=10mm] (root) {};
  \node[bddnodex,below of=q0,yshift=4mm] (r1) {$q_1$};
  \node[bddnodex,below left of=r1,xshift=6mm] (s0) {$q_2$};
  \node[bddnodex,below right of=r1,xshift=-6mm] (s1) {$q_3$};
  \node[bddleaf, below of=s1,yshift=10mm] (c1) {$\invsqrttwo$};
  \node[bddleaf, below of=s0,yshift=10mm] (c0) {$0$};

  \node[bddnodex,right of=r1,xshift=-4mm] (u1) {$q_4$};
  \node[bddleaf, below of=u1,yshift=10mm] (c2) {$\invtwo$};

  \draw (r1) coordinate[xshift=0mm,yshift=-5mm] (r1a);
  \draw (r1) coordinate[xshift=5mm,yshift=0mm] (r1b);

  \draw (q0) coordinate[xshift=0mm,yshift=-5mm] (q0a);

  %%%%%%%%%%%%%%% transition %%%%%%%%%%%%%%%%%%%
  \draw[trans] (q0) to 
    % node[pos=0.9,left,xshift=-1mm,yshift=1mm] {$\{1,2\}$}
    (q0a)
    to[bend left]
    coordinate[pos=0.6] (q0a_2)
    (r1);

  \draw[translow] (q0a)
    to[bend right]
    coordinate[pos=0.6] (q0a_1)
    (r1);

  \filldraw[greyark] (q0a) to[bend right=15] (q0a_1) to[bend
  right=30] (q0a_2) to[bend right=15] cycle;
  % \node at (q0a) [xshift=0mm,yshift=-5mm] {$x_2$};

  %%%%%%%%%%%%%%% transition %%%%%%%%%%%%%%%%%%%
  \draw[trans]  (r1) to (r1a)
    to
    coordinate[pos=0.5] (r1a_1)
    (s1);

  \draw[translow]
    % node[pos=0.9,left,xshift=0mm,yshift=1mm] {$\{1\}$}
    (r1a)
    to
    coordinate[pos=0.5] (r1a_2)
    (s0);

  \filldraw[blueark] (r1a) to (r1a_2) to[bend
  right=30] node[opacity=1] {\tacc 2} (r1a_1) to cycle;

  %%%%%%%%%%%%%%% transition %%%%%%%%%%%%%%%%%%%
  \draw[trans] (r1) to 
    % node[pos=0.9,left,xshift=-1mm,yshift=1mm] {$\{1,2\}$}
    (r1b)
    to[bend right]
    coordinate[pos=0.6] (r1b_2)
    (u1);

  \draw[translow] (r1b)
    to[bend left]
    coordinate[pos=0.6] (r1b_1)
    (u1);

  \filldraw[redark] (r1b) to[bend right=15] (r1b_2) to[bend
  right=30] node[opacity=1] {\tacc 1} (r1b_1) to[bend right=15] cycle;

  %%%%%%%%%%%%%%% transition %%%%%%%%%%%%%%%%%%%
  \draw[trans] (root) to (q0);
  \draw[->] (s1) to
    %node[left,xshift=-2mm] {$\{1\}$}
    (c1);
  \draw[->] (s0) to
    %node[left,xshift=-2mm] {$\{1\}$}
    (c0);
  \draw[->] (u1) to
    %node[left,xshift=-2mm] {$\{1\}$}
    (c2);
\end{tikzpicture}
}
\caption{}
\label{label}
\end{subfigure}
\begin{subfigure}[b]{20mm}
\scalebox{0.8}{
  \begin{minipage}{20mm}
  \vspace*{-8mm}
  \begin{align*}
    \initmark q_0 & \to (q_1, q_1) \\[1mm]
              q_1 & \newlabeledto{\tacc 1} (q_4, q_4) \\[1mm]
              q_1 & \newlabeledto{\tacc 2} (q_2, q_3) \\[1mm]
              q_2 & \to 0 \\[1mm]
              q_3 & \to \invsqrttwo \\[1mm]
              q_4 & \to \invtwo
  \end{align*}
  \end{minipage}
}
\caption{}
\label{label}
\end{subfigure}
\begin{subfigure}[b]{20mm}
  \includegraphics[width=20mm,keepaspectratio]{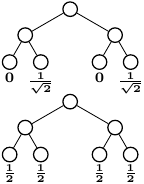}
\caption{}
\label{fig:lsta-two-trees}
\end{subfigure}
\vspace{-8mm}
\caption{An LSTA. (a)~A picture of the automaton. (b)~The set of transitions.
  (c)~The two trees accepted by the automaton.}
\label{color:aut:fig}
\end{wrapfigure}%
To handle such limitations, \cite{abdulla2025verifying}~introduces the LSTA model in which each transition is annotated with a color.
The set of colors is used to enforce level-synchronization: all transitions applied at a given level of the accepted tree must be labeled with the same color.
LSTAs can represent sets of trees more compactly than plain tree automata.
Let us consider the example shown in \cref{color:aut:fig}.
The automaton accepts the two trees depicted in \cref{fig:lsta-two-trees} as follows.
It begins with a~single transition from the root state~$q_0$; the color of this transition is immaterial.
The transition $\trntransof{q_0}{}{q_1}{q_1}$ indicates that, at the
next level, the state~$q_1$ is used to generate both the left and the right
subtree.
Note that $q_1$ has two transitions, one colored
\begin{changebar}
red (and decorated by~$\tacc 1$) and the other blue (decorated by $\tacc 2$).
\end{changebar}
However, due to
the color restriction, we must apply either the red transition to both children or the blue transition to both.
We cannot mix the colors, since all transitions applied at a~given level must share color (red or blue, in this case).

\begin{wrapfigure}[6]{r}{40mm}
\vspace{-9mm}
\scalebox{0.8}{
\begin{minipage}{50mm}
\begin{align*}
  \initmark q & {}\to (q_0, q_1) &
  \initmark q & {}\to (q_1, q_0) \\
  q_0 & {}\newlabeledto{\tacc 1} (q_0, q_0) &
  q_1 & {}\newlabeledto{\tacc 1} (q_1, q_0) \\
  q_1 & {}\newlabeledto{\tacc 1} (q_0, q_1) \\
  q_0 & {}\newlabeledto{\tacc 2} 0 &
  q_1 & {}\newlabeledto{\tacc 2} 1
\end{align*}
\end{minipage}
}
\vspace{-3mm}
\caption{An LSTA recognizing the set of all basis states.}
\label{param:all:color:fig}
\end{wrapfigure}
The next example, depicted in \cref{param:all:color:fig},   demonstrates the expressive power of LSTAs compared to plain tree automata.
The LSTA accepts the set of all basis states $\{\ket{x}\mid x\in\{0,1\}^*\}$ for an arbitrary number $n$ of qubits.
Recall that the plain automaton in \cref{all:aut:fig} characterizes the set of all basis states for the fixed case $n = 3$.
To represent the basis states for arbitrary $n$, we would need infinitely many plain automata---one for each value of $n$.
In contrast, a single LSTA suffices to describe the
entire set.
We~use the state~$q_0$ to generate trees in which all leaves have amplitude~$0$, and the state $q_1$ to generate trees with exactly one leaf of amplitude~$1$, while all other leaves have amplitude~$0$.
The states~$q_0$ and~$q_1$ allow transitions that are either red or blue.
Intuitively, red transitions are applied to inner nodes, while blue transitions are applied at the leaf level.
Since all transitions applied at the same level must have the same color, we ensure that only perfect binary trees are generated: we apply red transitions repeatedly, level by level, to build the tree, and at some point, switch to blue transitions to generate the leaves.

\newcommand{\figExSWTA}[0]{
\begin{wrapfigure}[10]{r}{65mm}
\vspace{-3mm}
\begin{subfigure}[b]{24mm}
\scalebox{0.8}{
\begin{minipage}{25mm}
\begin{align*}
  \initmark q & {}\newlabeledto{\tacc 1} (\invsqrttwo q, \invsqrttwo q) \\
  \initmark q & {}\newlabeledto{\tacc 2} (\invsqrttwo \ell, \invsqrttwo \ell)
\end{align*}
\end{minipage}
}
\caption{}
\label{fig:weight-swta}
\end{subfigure}
\begin{subfigure}[b]{12mm}
\centering
\includegraphics[scale=1]{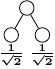}
\caption{}
\label{fig:weight-tree1}
\end{subfigure}
\begin{subfigure}[b]{22mm}
\centering
\includegraphics[scale=1]{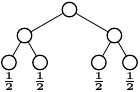}
\caption{}
\label{fig:weight-tree2}
\end{subfigure}
\vspace{-3mm}
\caption{(\subref{fig:weight-swta}) An SWTA recognizing the set of all quantum states with uniform superposition; $\ell$ is a leaf state. (\subref{fig:weight-tree1}) and (\subref{fig:weight-tree2}) show the trees with one qubit resp.\ two qubits in uniform superposition.}
\label{uniform:fig}
\end{wrapfigure}
}

\ifTR\else
\figExSWTA  %%%%%%%%%%
\fi
Finally, we introduce \emph{synchronized weighted tree automata}, and demonstrate their superiority over the LSTA model.
We begin by considering the set of all quantum basis states in \emph{uniform superposition}.
These are the states where all leaves have identical amplitudes.
\cref{fig:weight-tree1,fig:weight-tree2} depict the trees corresponding to the
cases where the number of qubits is $1$ and $2$, respectively.
For $n$ qubits, the amplitudes are equal to $2^{-\frac{n}{2}}$, i.e., they depend on the number of qubits.
This uniform superposition set cannot be captured by any finite set of LSTAs, since we would need one set of transitions to express the amplitudes for each given value of $n$.
To overcome this limitation, we introduce \emph{synchronized weighted tree automata (SWTAs)}, and show how weights can be used---together with colors---to generate all quantum states in uniform superposition.
Recall that in both plain and colored automata, a~transition of the form $\trntransof q {} {q_1}{q_2}$ defines the set of trees where the left and right subtrees are generated from the states $q_1$ and $q_2$, respectively.
In contrast, SWTA allows for a~more general form of transition, where the subtrees are generated using \emph{linear combinations} of states.
\ifTR
\figExSWTA  %%%%%%%%%%
\fi
These linear combinations are constructed by \emph{adding} states and \emph{multiplying} them by complex number \emph{constants}.
The SWTA depicted in \cref{uniform:fig} provides a simple example.
To construct the left and right subtrees, we (i)~recursively generate trees from the state~$q$, and (ii)~multiply the resulting trees by the scalar~$\invsqrttwo$.
This scaling means that each leaf of the subtree is multiplied by the constant~$\invsqrttwo$.
As~before, the construction can be repeated an arbitrary number of times using red transitions, after which we apply the blue transition to generate the leaf nodes labeled $\ell$.
A leaf node has the weight~$1$ by default.
This is in contrast to the standard notion of tree automata, where leaf rules explicitly assign values to leaves.
In our setting, the use of weights makes explicit leaf values unnecessary.
Notice that weights and colors operate \emph{in tandem} to generate the desired set of trees.
Without colors, we cannot enforce the level-wise synchronization required by the uniform superposition; and without weights, we cannot \mbox{generate the correct leaf amplitudes.}
%
%\patodo{Add an example where we use addition.}
%\lh{One could get an impression that colors are there only to make the trees perfect, and then think that the same effect could be achieved by a semantic restriction to perfect trees, so introducing colors is an overkill.}
%\lh{The weights, one could think that the only needed multiplication at leaves is $1/\sqrt 2$, for every next level, and sometimes 0, 1, and -1. I.e., that the whole thing is relatively simple and that working with general weights is a big overkill.}
%\lh{Related to both comments above, we could make the note on that these are just very simple examples and that more complex examples and uses of the colours and weights exist more visible, and perhaps also have specific forward references to such examples.}

%\lh{transducer introduced for the first time, before it was always a transducer, maybe make some introduction, connection between the two?}

\newcommand{\figTransHX}[0]{
\begin{wrapfigure}[14]{r}{42mm}
% \vspace{-3mm}
% \hspace{-2mm}\includegraphics[scale=0.8]{Figures/Transducers/notH}
\ifTR
\else
\vspace{2mm}
\fi
\begin{subfigure}[b]{42mm}
\centering
\begin{quantikz}
  \lstick{$x_1$} & \gate{X} & \qw\\[-8mm]
\end{quantikz}
\scalebox{0.8}{
  \begin{minipage}{30mm}
    \begin{align*}
    \initmark \trntransof{q_0}{}{q_1(\rightT)}{q_1(\leftT)}
    \end{align*}
  \end{minipage}
}
\vspace{-1mm}
\caption{The $\notgate$ gate and its transducer}
\label{fig:X-gate-trans}
\end{subfigure}

\vspace{2mm}
\noindent
\begin{subfigure}[b]{42mm}
\centering
\begin{quantikz}
  \lstick{$x_0$} & \gate{H} & \qw\\[-8mm]
\end{quantikz}
\scalebox{0.8}{
  \begin{minipage}{30mm}
    \begin{align*}
      \initmark s_0 \to (&\invsqrttwo s_1(\leftT) + \invsqrttwo s_1(\rightT),\\
      &\invsqrttwo s_1(\leftT) - \invsqrttwo s_1(\rightT))
    \end{align*}
  \end{minipage}
}
\vspace{-1mm}
\caption{The $\hgate$ gate and its transducer}
\label{fig:H-gate-trans}
\end{subfigure}
\vspace{-8mm}
\caption{Basic gates and their transducers. Leaf states are~$q_1$ and~$s_1$.
%\lh{leaf rules or no leaf rules?}
}
\label{notH:transducer:fig}
\end{wrapfigure}
}
%
%*******************************************************************************
\vspace{-0.0mm}
\subsection{Transducers}\label{sec:label}
\vspace{-0.0mm}
%*******************************************************************************
%
We model the behavior of a quantum circuit as a~transition relation over a~set of
trees.
This transition relation is encoded as a transducer that transforms a tree representing the input quantum state into a tree representing the result of applying the circuit to that input. 
Since a~tree 
encodes a~quantum state by storing the sequence of amplitudes at its leaves, the transducer specifies how these amplitudes are modified to produce the output tree.
A transducer consists of a finite set of states and a collection of transition rules.
To initiate the transformation, the root of the input tree is labeled with a root state.
Each state, assigned to a node in the input tree, prescribes how to construct the left and right subtrees of the output tree based on the structure of the corresponding input subtrees.
The output tree is constructed by applying linear combinations---defined by the transducer's transition rules---to the subtrees at each node.
The right-hand side of a transition rule

%%%%
%%% BROKEN PARAGRAPH
%%%%%
\figTransHX%%%%%%%
\noindent
is a pair that specifies how to generate the left and right subtrees of the output tree.
This is done using a set of \emph{ground terms}, each of the form $q(\leftT)$ or $q(\rightT)$.
Intuitively, $q(\leftT)$ means applying the state $q$ to the left subtree of the input, thereby producing a corresponding output subtree.
Similarly, $q(\rightT)$ refers to applying $q$ to the right subtree.
A transition defines the new left and right subtrees as linear combinations of such ground terms.
We begin by describing transducers that implement individual quantum gates.
The $\notgate$ gate (quantum NOT) inverts a single qubit by swapping its amplitudes~(\cref{fig:X-gate-trans}).
The corresponding transducer consists of two states, $q_0$ and $q_1$, together with one transition.
The transformation begins by labeling the root of the input tree with the root state $q_0$.
To construct the left subtree of the output, we apply state $q_1$ to the right subtree of the input; conversely, we apply $q_1$ to the left subtree to construct the right subtree.
The leaf state does not alter the amplitudes---it simply multiplies them by $1$.
As a result, the transducer swaps the amplitudes of the two leaves, which captures the behavior of the $\notgate$ gate.

The $\hgate$ gate (Hadamard) creates superpositions, enabling quantum algorithms to explore multiple states in parallel.
The transducer, depicted in~\cref{fig:H-gate-trans} has two states: the root state $q_0$ and the leaf state $q_1$.
To construct the left subtree of the output, we apply $q_1$ to both the left and right subtrees of the input, multiply each resulting subtree by $\invsqrttwo$, and then sum them.
To construct the right subtree, we again apply $q_1$ to both subtrees, multiply each by $\invsqrttwo$, but this time subtract the right subtree from the left.
This transformation faithfully models the Hadamard gate, which maps the basis states $\ket{0}$ and $\ket{1}$ to superpositions $\frac{\ket{0} + \ket{1}}{\sqrt{2}}$ and $\frac{\ket{0} - \ket{1}}{\sqrt{2}}$, respectively.%\lh{can we point out that representing hadamard by "normal means", for a fixed size circuit or not, is very costly a complicated, and so this representation is actually amazingly simple and cheap? (closed form expressions, representation in our earlier frameworks ...)}

\begin{wrapfigure}[7]{r}{52mm}
\vspace{-2mm}
\begin{quantikz}
  \lstick{$x_0$} & \ctrl 1 & \qw\\
  \lstick{$x_1$} & \targ{} & \qw\\[-8mm]
\end{quantikz}
\quad
\scalebox{0.8}{
  \begin{minipage}{30mm}
    \begin{align*}
      \initmark q_0 & {}\to (q_1(\leftT), q_2(\rightT))\\
      q_1 & {}\to (q_3(\leftT), q_3(\rightT))\\
      q_2 & {}\to (q_3(\rightT), q_3(\leftT))
    \end{align*}
  \end{minipage}
}
\vspace{-2mm}
\caption{ The $\cnotgate$ gate and its transducer;
$q_3$ is the leaf state.
}
\label{cnot:transducer:fig}
\end{wrapfigure}
The $\cnotgate$ gate is a two-qubit gate that flips (i.e., applies the NOT operation to) the target qubit if and only if the control qubit is in the quantum basis state $\ket{1}$.
If the control qubit is in the quantum basis state $\ket{0}$, the target qubit remains unchanged.
In terms of tree representation, the transducer corresponding to the $\cnotgate$ gate preserves the left subtree (representing the control qubit), while it swaps the two subtrees of the right subtree (representing the target qubit).
This conditional swap encodes the semantics of the $\cnotgate$ gate: the output is \mbox{modified only when the control is in the quantum basis state $\ket{1}$.}

%*******************************************************************************
\vspace{-0.0mm}
\subsection{Composition}\label{sec:label}
\vspace{-0.0mm}
%*******************************************************************************

\begin{wrapfigure}[9]{r}{55mm}
\newcommand{\makedots}[0]{\lstick[label style={xshift=4mm,yshift=-6mm}]{\vdots}}
\vspace*{-4mm}
% \hspace{-3mm}\includegraphics[scale=0.7]{Figures/Transducers/paramH}
\scalebox{0.8}{
\begin{quantikz}
  \lstick{$x_0$} & \gate H & \qw\\
  \lstick{$x_1$} & \gate H\makedots & \qw\\[4mm]
  \lstick{$x_n$} & \gate H & \qw
\end{quantikz}
}
\quad\!\!
\scalebox{0.8}{
  \begin{minipage}{30mm}
    \begin{align*}
      \initmark s \to (&\invsqrttwo s(\leftT) + \invsqrttwo s(\rightT),\\
      &\invsqrttwo s(\leftT) - \invsqrttwo s(\rightT))
    \end{align*}
  \end{minipage}
}
\vspace*{-3mm}
\caption{The parameterized $\hgate$ gate transducer.
The state $s$ is a leaf state.
}
\label{paramH:transducer:fig}
\end{wrapfigure}
In the paper, we will introduce a set of composition operators on SWTAs and transducers.
The first operator applies a~transducer~$T$ to an SWTA~$\aut_1$ to produce a~new SWTA~$\aut_2$.
Here, $\aut_1$~describes a~\emph{precondition}, i.e., a set of input trees (quantum states), and~$T$ specifies how the circuit transforms these input trees to obtain the \emph{postcondition}, i.e., the resulting set of output trees.
The second operator composes a~sequence of transducers---each representing an individual gate or a~subcircuit---into a single transducer that captures the behavior of the entire circuit.
The third operator applies to a given circuit and captures the \emph{parameterized} circuit obtained by repeating the circuit an unbounded number of times.
For instance, given the transducer for a single $\hgate$ gate (as shown in~\cref{fig:H-gate-trans}), this operator produces a parameterized transducer that represents the repeated application of an arbitrary number of $\hgate$ gates (\cref{paramH:transducer:fig}).

Finally, \textbf{functional and equivalence verification} can be performed using all of the components introduced above, together with the SWTA functional equivalence and inclusion procedures described in \cref{sec:colored_equivalence_checking}. At this point, we should have a preliminary understanding of the overall framework. In the next step, we will provide more rigorous exposition.
%

%%%%%%%%%%%%%%%%%%%%%%%%%%%%%%%%%%%%%%%%%%%%%%%%%%%%%%%%%%%%%%%%%%%%%%%%%%%%%%%%
\vspace{-0.0mm}
\section{Formal Definitions}\label{sec:label}
\vspace{-0.0mm}
%%%%%%%%%%%%%%%%%%%%%%%%%%%%%%%%%%%%%%%%%%%%%%%%%%%%%%%%%%%%%%%%%%%%%%%%%%%%%%%%

%------------------------------------------------------------------------------
\paragraph{Basics.}
We use $\naturals$ to denote $\{0, 1, \ldots\}$, $\integers$ to denote
integers, and $\complex$ to denote complex numbers.
Given a~partial function $f\colon A \partialto B$, we use $f(a) = \bot$ to
denote that the value of~$f$ for $a \in A$ is undefined
\begin{changebar}
and $\domof f$ to denote the set $\{a \in A \mid f(a) \neq \bot\}$.
\end{changebar}
% Given a~set~$Q$, we will use $\funcof Q$ do denote the set of all functions
% $f\colon Q \to \complex$.
A~\emph{linear form} $\linof v$ over $Q= \{q_1,\ldots, q_n\}$ is a non-empty partial map
from $Q$ to $\complex$ usually given by a~formal sum $\linof v =\sum_{q_i \in
P} a_{q_i}\cdot q_i$ with $P \subseteq Q$ and each $a_{q_i} \in \complex$ (we
often omit the operator~`$\cdot$' and the coefficient `1' if present, e.g., we
may write ``$3q_1 + q_2$''; the linear form~``0'' represents an empty
map).
We emphasize that we distinguish the cases when $q_i \notin P$ and $q_i \in P$
with $a_{q_i} =0$, e.g., ``$3 q_1$'' and ``$3 q_1 + 0 q_2$'' are two different things.
We will write~$\linindexof v q$ to denote~$a_q$ (if $q \notin P$, then
$\linindexof v q = \bot$) and use~$\linformsof Q$ to denote the set of all
linear forms over~$Q$.
The \emph{state support} of~$\linof v$ is defined as $\suppof{\linof v} = P$.
% \{q \in Q
% \mid \linindexof v q \neq 0\}$.
% A~\emph{linear form} $\linof v$ over $Q= \{q_1,\ldots, q_n\}$ is a partial map from $Q$ to $\complex$ usually given by a~formal sum $\linof v =\sum_{i =1}^n a_{q_i} \cdot q_i$ with each $a_{q_i} \in \complex$.
% We will write $\linindexof v q$ to denote $a_q$ and use $\linformsof Q$ do denote
% the set of all linear forms over~$Q$.
% The \emph{state} of $\linof v$ is defined as $\suppof{\linof v} = \{q \in Q
% \mid \linindexof v q \neq 0\}$.

\paragraph{Words.}
An \emph{alphabet} is a~finite non-empty set of \emph{symbols}.
Given an alphabet~$A$, a \emph{word} over~$A$ is a finite sequence of symbols
$w = a_1 \ldots a_n$ with $a_i \in A$ for all $1 \leq i \leq n$.
For two words~$u = a_1 \ldots a_n$ and~$v=b_1 \ldots b_m$ over~$A$, we use $u
\concat v$ (or just the juxtaposition $uv$) to denote the word $a_1 \ldots a_n
b_1 \ldots b_m$, with the neutral element~$\epsilon$ (the empty string).
Let~$B$, $C$ be sets of words over~$A$.
Then $B \concat C$ (or just the juxtaposition $BC$) is defined as the set $\{u
\concat v \mid u\in B, v \in C\}$.
We use $B^0$ to denote the set $\{\epsilon\}$ and
for $i \in \naturals$, we use $B^{i+1}$ to denote the set $B \concat B^i$.
The notation
$B^{< n}$ is used for
$\bigcup_{0 \leq i < n} B^i$ and $B^*$ denotes~$\bigcup_{0 \leq i} B^i$.
Let us fix a~(finite non-empty) alphabet~$\abc$ such that $\abc \cap
\complex = \emptyset$.

% \rst{Suggestion for small changes to syntax:}
% The set of all linear forms is denoted $\mathsf{Lin}\mathbb{C}(Q)$ (Other standard math notation for this set is $\mathbb{C}[Q]_{ \leq 1}^{hom}$).

%------------------------------------------------------------------------------

\paragraph{Trees.}
A \emph{(perfect finite) binary tree}%
\footnote{We emphasize that all trees considered in this
paper are \emph{perfect}, i.e., all branches have the same height. This is in
contrast to the definition of trees in, e.g., \cite{tata}}
over $\abc$ is a partial function 
$t\colon \{0,1\}^* \partialto (\abc \cup \complex)$
such that the following conditions hold:
\begin{enumerate}
    \cbstart
  \item $t$ is a~finite, non-empty, and prefix-closed partial function and
    \cbend
  \item there exists a unique $h \in \mathbb{N}$, called the \emph{height} of
    $t$ and denoted as $h(t)$, such that for all positions $p \in \domof{t}$:
    \begin{enumerate}
        \item if $|p| = h$, then $t(p) \in \complex$ (leaf nodes),
        \item if $|p| < h$, then $t(p) \in \abc$ (internal nodes), and
        \item if $|p| > h$, then $t(p) = \bot$ (undefined).
    \end{enumerate}
\end{enumerate}
%
% \subparagraph{Subtrees and Tree Construction:}
We use~$\abctrees$ to denote the set of all trees over~$\abc$.
For a word $u \in \{0,1\}^*$ representing a~\emph{branch} of~$t$, the
\emph{subtree} rooted at $u$ is given as $\subtreeof t u = \{v \mapsto t(uv)
\mid uv \in \domof t\}$.
For trees $t_0, t_1 \in \abctrees$ of the same height and a~label $a \in \abc$,
we define the tree constructor
\begin{equation}
\treecons a {t_0}{t_1} = \{ \epsilon \mapsto a \} \cup \{ 0u \mapsto t_0(u)
  \mid u \in \domof{t_0} \} \cup \{ 1u \mapsto t_1(u) \mid u \in \domof{t_1} \}.
\end{equation}
%
% \subparagraph{Compatibility:}
Trees $t, t' \in \abctrees$ are \emph{compatible}  iff 
$\height{t} = \height{t'}$ and for all $p \in
\{0,1\}^{<\height{t}}$ it holds that $t(p) = t'(p)$.
%
%\subparagraph{Leafwise Operations:}
Given compatible trees $t, t' \in \abctrees$ and $a\in\complex$, we define the
following operations on trees:
\begin{align}
t + t' &= \{ p \mapsto t(p) \mid p \in \{0,1\}^{<\height{t}} \} \cup \{ p \mapsto t(p) + t'(p) \mid p \in \{0,1\}^{\height{t}} \} \quad\text{and}\\ 
a \cdot t &= \{ p \mapsto t(p) \mid p \in \{0,1\}^{<\height{t}} \} \cup \{ p \mapsto a \cdot t(p) \mid p \in \{0,1\}^{\height{t}} \}.
\end{align}
Intuitively, $t+t'$ puts the trees over each other and sums up the values in
the leaves and $a\cdot t$ multiplies all values in the leaves of~$t$ by~$a$.
The result of $t+t'$ for incompatible trees is~$\bot$ and if any of the
operands is~$\bot$, the result is also~$\bot$.

% \ol{Lukas's new attempt hidden --- to bring it back into life, modify
% \texttt{\textbackslash{}newdef} in \texttt{macros.tex}}
\newdef{Exepriment:\\
\underline{Operations over linear forms:} 
%An element $\linof x$ of $\linformsof\abctrees$ is a \emph{linear form over trees}, and if all the trees are compatible, it evaluates to a tree by applying the lifewise operations above, else it evaluates to $\bot$.
Given a linear form $\linof \ell = \sum_{1\leq i \leq n} c_i.x_i$ and an assignment $\nu:\{x_1,\ldots,x_n\}\rightarrow D$, 
%${\sum_\nu} {\linof \ell}$
%
%sum of $\linof \ell$ after substituting every $x_i$ by its value $\nu(x_i)$.
%
$\linof{\ell} (\nu)$ is the linear form obtained from $\ell$ by substituting all $x_i$ by their values $\nu(x_i)$.
$\sum \linof \ell$ is the value obtained by evaluating the additions and multiplications in the formal sum, assuming, that the needed multiplication and addition is defined (semering, comutative, ..?).  
Given a second linear form $\linof r = \sum_{1\leq j \leq m} d_j.y_j$, we define their product as the linear form $\linof \ell . \linof r = \sum_{1\leq i\leq n}\sum_{1\leq j\leq m} c_i.d_j.x_i.y_j$.
}

%*******************************************************************************
\vspace{-0.0mm}
\subsection{Synchronized Weighted Tree Automata}\label{sec:label}
\vspace{-0.0mm}
%*******************************************************************************

%-----------------------------------------------------------------------------
A (finite binary) \emph{synchronized weighted tree automaton} (SWTA) over
$\abc$ is a~tuple
$\aut = \tuple{Q, \delta, \colors, \rootstate, E}$ where
$Q$ is a~finite set of \emph{states},
$\colors = \{\tacc 1, \ldots, \tcacc 5 n\}$ is a~(finite non-empty) set of \emph{colors},
$\rootstate \in Q$ is the \emph{root state},
$E \subseteq Q$ is a~set of \emph{leaf states}, and $\delta\colon Q \times \abc
\times \colors \partialto (\linformsof Q \times \linformsof Q)$ is a (top-down)
partial \emph{transition function}.
% \lh{linear forms are not introduced I think}
% \ol{they are}
We also use
$\transof q a {\tcacc 7 c}{\linof \ell}{\linof r}$
to denote that $\delta(q, a, \tcacc 7 c) = (\linof \ell, \linof r)$, where
$\linof \ell, \linof r \in \linformsof Q$.
An example of an SWTA transition is $\transof {q_1} a {\tacc 1} {\invsqrttwo q_1 +
\invsqrttwo q_2}{\invsqrttwo q_2 + q_3}$.

The \emph{tree function of a~state~$q \in Q$ of~$\aut$} is a~(partial) function $\semof{\aut, q} \colon
(\abc \times \colors)^* \partialto \abctrees$ defined inductively as follows:
%\ol{semantics is a bad name}
%
\begin{enumerate}
  \item  (base case)
    $\semofof{\aut, q} \epsilon$ is the leaf~1 if~$q$ is a leaf state, otherwise
    the value is undefined. Formally,
    \begin{equation}
      \semofof{\aut,q} \epsilon = \begin{cases}
        \{\epsilon \mapsto 1\} & \text{if } q \in E, \\
        \bot & \text{otherwise.}
      \end{cases}
    \end{equation}

  \item (inductive case)
    $\semofof{\aut,q}{\tuple{a, \tcacc 7 c}\concat u} = t$
    for $a \in \abc$, $\tcacc 7 c \in \colors$, and $u \in (\abc \times
    \colors)^*$, where~$t$ is defined in the following way.
    \newdef{Experiment: Let $\transof q a {\tcacc 7 c}{\linof \ell}{\linof r} \in \delta$ with and let $\nu:Q\rightarrow \abctrees$ maps every state $q\in Q$ to $\semofof{\aut, q} u$. Then 
      $t_{\linof \ell} = \sumlfwrt {\linof \ell} {\nu}$
   and
      $t_{\linof r}  = \sumlfwrt {\linof r} {\nu}$.}
    Let $\transof q a {\tcacc 7 c}{\linof \ell}{\linof r} \in \delta$ with
    \begin{equation}
      \supp(\linof \ell) = \{q^\ell_1, \dots, q^\ell_m\}\quad\text{and}\quad
      \supp(\linof r) = \{q^r_1, \dots, q^r_n\}.
    \end{equation}
    Then $t = \treecons a {t_\ell} {t_r}$ where
    \begin{equation}
      t_\ell = \sum_{i=1}^m \linof{\ell}[q^\ell_i] \cdot \semofof{\aut, q^\ell_i} u
      \quad \text{and} \quad
      t_r = \sum_{j=1}^n \linof{r}[q^r_j] \cdot \semofof{\aut, q^r_j} u.
    \end{equation}
    Intuitively, $t$ connects the $a$-labeled root to the sub-trees $t_\ell$ and
    $t_r$, constructed from the linear forms $\linof \ell$ and $\linof r$,
    respectively. In each case, every state $q$ occurring in the linear form is substituted by the tree $\semofof{\aut,
    q} u$, resulting in a linear form over trees, \mbox{which is then summed up.}
\end{enumerate}
The \emph{tree function of~$\aut$}, written $\semof \aut$, is a~function~$\semof
\aut\colon (\abc \times \colors)^* \partialto \abctrees$ given as $\semof
\aut = \semof{\aut, \rootstate}$.
Note that this definition differs slightly from the simplified version used in the introduction, which describes a tree function as a mapping from color sequences to quantum states. The formal definition maps \emph{symbol-color} sequences to quantum states.

\begin{example}\label{ex:semantics}
Let~$\autex$ be an SWTA defined by the following transitions (with the set
of leaf states~$\{u,v\}$):
\begin{equation}
\begin{aligned}
      \initmark & \transcof q a 1 {r+s}{r-s} \quad
    && \transcof r a 1 {2\finstof u}{0 \finstof u} \quad
    && \transcof s a 1 {\finstof u+ \finstof v}{0\finstof v } \\
    \initmark & \transcof q a 2 {r-s}{r+s}
    &&
    \transcof r a 2 {0\finstof u}{\invtwo \finstof u}
    &&
    \transcof s a 2 {\finstof u - \finstof v}{\finstof u - \smallfrac 3 2 \finstof v}
\end{aligned}
\end{equation}
Consider a~symbol-color sequence $w = \tuple{a, \tacc 1}\tuple{a, \tacc 2}$
and let us compute the tree  $t = \semofof \autex {w}$.
For the first symbol-color pair $\tuple{a, \tacc 1}$ and the root state~$q$,
the relevant transition is $\transcof q a 1 {r+s}{r-s}$.
Let us therefore compute the sub-trees $t_r = \semofof{\autex, r}{\tuple{a, \tacc
2}}$ and $t_s = \semofof{\autex, s}{\tuple{a, \tacc 2}}$.
For~$t_r$, the relevant transition is
$\transcof r a 2 {0\finstof u}{\invtwo \finstof u}$.
Since~$u$ is a~leaf state, we can conclude that
$t_r = \treecons a {\{\epsilon \mapsto 0\}} {\{\epsilon \mapsto \invtwo\}}$, or
\smalltreeof a 0 \invtwo.
On the other hand, for~$t_s$, the relevant transition is
$\transcof s a 2 {\finstof u - \finstof v}{\finstof u - \smallfrac 3 2 \finstof v}$.
Since~$u$ and~$v$ are both leaf states, we have that
$t_s = \treecons a {\{\epsilon \mapsto 1 - 1\}}{\{\epsilon \mapsto 1 - \smallfrac 3 2\}} =
\treecons a {\{\epsilon \mapsto 0\}}{\{\epsilon \mapsto -\invtwo\}}$, or
\smalltreeof a 0 {\text{-}\invtwo}.
Having computed~$t_r$ and~$t_s$, we can now construct the final tree~$t$ (using the $q$-transition above) as
$\treecons a {t_r + t_s}{t_r - t_s} = \bigtreeof a a 0 0 0 1$.
In a~similar manner,
one can easily compute the remaining values of~$\semof \autex$, summarized below:
\begin{equation}
\begin{aligned}
  \semofof \autex {\tuple{a,\tacc 1}\tuple{a,\tacc 1}} ={}& \bigtreeof a a 4 0 0 0 \qquad&
  \semofof \autex {\tuple{a,\tacc 1}\tuple{a,\tacc 2}} ={}& \bigtreeof a a 0 0 0 1 \\
  \semofof \autex {\tuple{a,\tacc 2}\tuple{a,\tacc 1}} ={}& \bigtreeof a a 0 0 4 0 &
  \semofof \autex {\tuple{a,\tacc 2}\tuple{a,\tacc 2}} ={}& \bigtreeof a a 0 1 0 0\\[-5mm]
\end{aligned}
\end{equation}
\qed
\end{example}

We say that \emph{$t$ is accepted from~$q$ in $\aut$}, denoted as $t \in
\langof{\aut, q}$, if there exists a~sequence $w \in (\abc \times \colors)^*$ such
that $\semofof{\aut,q} w = t$, and say that \emph{$t$ is accepted by~$\aut$}
if $t \in \langof{\aut, \rootstate}$.
The \emph{language} of~$\aut$, denoted as $\langof \aut \subseteq \abctrees$, is
the set of all trees accepted by~$\aut$.
We say that two SWTAs~$\aut$ and~$\but$ are \emph{functionally equivalent} if $\semof
\aut = \semof \but$, \emph{functionally included} if $\semof
\aut \subseteq \semof \but$ (we see $\semof \aut$ and $\semof \but$ as the
sets representing the functions), and that they are \emph{language equivalent} if $\langof
\aut = \langof \but$.
We note that every two functionally equivalent SWTAs are also language
equivalent, but not necessarily the other way round.

\begin{example} 
Consider the SWTA~$\autex$ from \cref{ex:semantics} and the following
SWTA~$\butex$ with the set of leaf states~$\{h\}$:
\begin{equation}
\begin{aligned}
 \initmark & \transcof f a 1 {4g}{0h} \quad
          && \transcof g a 1 {0\finstof h}{\smallfrac 1 4 \finstof h} \quad
          && \transcof k a 1 {4\finstof h}{0\finstof h} \quad
          && \transcof h a 1 h h \\
 \initmark & \transcof f a 2 {0h}{k}
          && \transcof g a 2 {\finstof h}{0 \finstof h}
          && \transcof k a 2 {0\finstof h}{\finstof h}
          && \transcof h a 2 h h 
\end{aligned}
\end{equation}
The tree function of~$\butex$ can be computed to be the following:
\begin{equation}
\begin{aligned}
  \semofof \butex {\tuple{a,\tacc 1}\tuple{a,\tacc 1}} ={}& \bigtreeof a a 0 1 0 0 \qquad&
  \semofof \butex {\tuple{a,\tacc 1}\tuple{a,\tacc 2}} ={}& \bigtreeof a a 4 0 0 0 \\
  \semofof \butex {\tuple{a,\tacc 2}\tuple{a,\tacc 1}} ={}& \bigtreeof a a 0 0 4 0 &
  \semofof \butex {\tuple{a,\tacc 2}\tuple{a,\tacc 2}} ={}& \bigtreeof a a 0 0 0 1
\end{aligned}
\end{equation}
We see that $\langof \autex = \langof \butex = \left\{
\bigtreeof a a 0 1 0 0,
\bigtreeof a a 4 0 0 0,
\bigtreeof a a 0 0 4 0,
\bigtreeof a a 0 0 0 1
\right\}$, but $\semof \autex \neq \semof \butex$, since, e.g., 
$\semofof \autex {\tuple{a,\tacc 1}\tuple{a,\tacc 1}} \neq
\semofof \butex {\tuple{a,\tacc 1}\tuple{a,\tacc 1}}$.
\qed
\end{example}

\subsection{Weighted Tree Transducers}
\vspace{-0.0mm}
A (finite binary two-tape) \emph{weighted tree transducer} (WTT) over alphabet~$\abc$ is
a~tuple $\trn = \tuple{Q, \delta, \rootstate, E}$ where
$Q$ is a~finite set of \emph{states},
$\rootstate \in Q$ is the \emph{root state},
$E \subseteq Q$ is a~set of \emph{leaf states}, and
$\delta\colon Q \times \abc \partialto (\linformsof {Q(\leftT,\rightT)} \times \linformsof {Q(\leftT,\rightT)})$ is a~\emph{transition function}
where $Q(\leftT,\rightT)$ is the set of ground terms of the form $q(\leftT)$ and $q(\rightT)$, for $q\in Q$ (the symbols~$\leftT$ and~$\rightT$ used here denote the $\underline{\leftT}$eft and the
$\underline{\rightT}$ight subtree of the input tree respectively).
An example of a~WTT transition is
$\trntransof q a {\invsqrttwo q(\leftT) + p(\rightT)}{-\invsqrttwo q(\leftT) - 3q(\rightT)}$.

For every state~$q \in Q$, the transducer defines a~(partial) unary function
over trees $\trn_q\colon \abctrees \partialto \abctrees$ defined
below.
% Let us first introduce the auxiliary notation~$\trn_{\linof x}$ for a~linear form $\linof x \in
First, for a~linear form $\linof x \in
\linformsof {Q(\leftT,\rightT)}$ and a~pair of trees $t_\leftT$
and~$t_\rightT$, we use $\linof x(t_\leftT,t_\rightT)$ to denote the tree
obtained by replacing each~$q(\leftT)$ and~$q(\rightT)$ in~$\linof x$ by the
tree~$\trn_q(t_\leftT)$ and~$\trn_q(t_\rightT)$, respectively, and summing up the
resulting linear form over trees (the result can be undefined due to
incompatibility).
%
% Let us first introduce an auxiliary notation
% such ta
% $\trn_{\linof x}$ for a~linear form $\linof x \in
% \linformsof {Q(\leftT,\rightT)}$ such that, for a~pair of trees
% $t_\leftT$ and~$t_\rightT$, the expression $\linof x(t_\leftT,t_\rightT)$ evaluates to a tree obtained by replacing each $q(\leftT)$ or $q(\rightT)$, $q\in Q$, by the tree $\trn_q(t_\leftT)$ or $\trn_q(\rightT)$, respectively,
% and summing up the resulting linear form over trees (the result can be undefined due to incompatibility).
%
%
% and the image of a tree $t$ under the transducer is then defined as $\trn(t) = \trn_\rootstate(t)$.
% $\trn_q(t)$ is defined inductively as follows.
% First, transducer images are extended from states to te linear forms $\linof x \in \linformsof {Q(\leftT,\rightT)}$ so that for a pair of trees $t_\leftT,t_\rightT$, $\linof x(t_\leftT,t_\rightT)$ evaluates to a tree obtained by replacing each $q(\leftT)$ or $q(\rightT)$, $q\in Q$, by the tree $\trn_q(t_\leftT)$ or $\trn_q(\rightT)$, respectively,
% and summing up the resulting linear form over trees (the result can be undefined due to incompatibility).
%If some of the replacement trees is undefined, then $\linof x(t_\leftT,t_\rightT) = \bot$.
Then~$\trn_q$ is defined inductively in the following way:
\begin{enumerate}
  \item  (base case) If $t = \{\epsilon\mapsto a\}$ (i.e., $t$~is
    a~leaf) then if~$q \in E$, then $\trn_q (t) = t$, else 
    $\trn_q(t) = \bot$ (i.e., it is undefined).
  \item  (inductive case) If $t \supset \{\epsilon\mapsto a\}$ (i.e., $t$ is not a~leaf) then
  if $\delta(q,a) = \bot$ (i.e., there is no transition for $q$ and $a$) then $\trn_q (t) = \bot$, else if $\delta(q,a) =  (\linof{\ell},\linof{r})$ then 
 $\trn_q (t) = \treecons {a} {\linof\ell(\subtreeof t 0,\subtreeof t 1)} {\linof r(\subtreeof t 0,\subtreeof t 1)}$ (and it is undefined one of the linear forms evaluates to $\bot$). 
That is, the two sub-trees of the resulting tree are both transducer images of
    the sub-trees
    \begin{changebar}
    $\subtreeof t 0$ and $\subtreeof t 1$
    \end{changebar}
    of the original tree w.r.t.\ the linear forms.
\end{enumerate}
The transducer~$\trn$ then defines the function given as~$\trn_\rootstate$.

\newdef{Experiment:
For every state~$q \in Q$, the transducer defines a~(partial) unary function
over trees $\trn_q\colon \abctrees \partialto \abctrees$ defined
below.
Then~$\trn_q$ is defined inductively in the following way:
\begin{enumerate}
  \item  (base case) If $t = \{\epsilon\mapsto a\}$ (i.e., $t$~is
    a~leaf) then if~$q \in E$, then $\trn_q (t) = t$, else 
    $\trn_q(t) = \bot$ (i.e., it is undefined).
  \item  (inductive case) If $t \supset \{\epsilon\mapsto a\}$ (i.e., $t$ is not a~leaf) then there are two cases.
  First, if $\delta(q,a) = \bot$ (i.e., there is no transition for $q$ and $a$) then $\trn_q (t) = \bot$. Otherwise, let $\delta(q,a) =  (\linof{\ell},\linof{r})$ and let $\nu = \{q(\leftT)\mapsto \trn_q(\subtreeof t 0),q(\rightT)\mapsto \trn_q(\subtreeof t 1)\mid q\in Q\}$. Then 
 $\trn_q (t) = \treecons {a} {\sumlfwrt {\linof\ell} {\nu}} {\sumlfwrt {\linof r} {\nu}}$. 
That is, the two sub-trees of the resulting tree are both transducer images of
    the sub-trees of the original tree w.r.t.\ the linear forms.
\end{enumerate}
The transducer~$\trn$ then defines the function given as~$\trn_\rootstate$.
}

\begin{example}
Let $\trnex$ be a~WTT defined as follows with the set of leaf states~$\{p\}$:
\begin{equation}
\begin{aligned}
\initmark & \trntransof p a {\invsqrttwo z(\leftT) + \invsqrttwo z(\rightT)}{\invsqrttwo z(\leftT) - \invsqrttwo z(\rightT)} \qquad
  &&
  \trntransof z a {p(\leftT)}{-p(\rightT)}
\end{aligned}
\end{equation}
Further, let $t$ be the tree \bigtreeof a a 0 0 0 1.
Let us now compute the tree $t' = \trnex(t)$.
We start with the transition
$\trntransof p a {\invsqrttwo z(\leftT) + \invsqrttwo z(\rightT)}{\invsqrttwo
z(\leftT) - \invsqrttwo z(\rightT)}$.
Here, $t_\ell = \smalltreeof a 0 0$ and $t_r = \smalltreeof a 0 1$.
Let us now compute the values of $\trn_z(t_\ell)$ and $\trn_z(t_r)$ using the
transition $\trntransof z a {p(\leftT)}{-p(\rightT)}$.
Since~$p$ is a~leaf state, the values are 
$t_\ell' = \trn_z(t_\ell) = \smalltreeof a 0 0$ and
$t_r' = \trn_z(t_r) = \smalltreeof a 0 {\text{-}1}$.
The result of the linear form $\invsqrttwo z(\leftT) + \invsqrttwo z(\rightT)$
with $t_\ell'$ substituted for $z(\leftT)$ and $t_r'$ substituted for
$z(\rightT)$ is then the tree \smalltreeof a 0 {\text{-}\invsqrttwo} and,
similarly, the result of $\invsqrttwo z(\leftT) - \invsqrttwo z(\rightT)$ is
the tree \smalltreeof a 0 {\invsqrttwo}.
Finally, $t' = \bigtreeof a a 0 {\text{-}\invsqrttwo} 0 {\invsqrttwo}$.
\qed
\end{example}

%
%The \emph{image of a~tree~$t$ in~$\trn$} is defined as $\trnimageof \trn t = \trnimagestateof \trn t \rootstate$.
For a~tree language $\lang \subseteq \abctrees$, the \emph{image
of~$\lang$ in~$\trn$} is the tree language $\trnimageof \trn \lang =
\{\trnimageof \trn t \mid t \in \lang\}$.

%%%%%%%%%%%%%%%%%%%%%%%%%%%%%%%%%%%%%%%%%%%%%%%%%%%%%%%%%%%%%%%%%%%%%%%%%%%%%%%%
\vspace{-0.0mm}
\section{Parameterized Verification of Quantum Circuits}\label{sec:use-cases}
\vspace{-0.0mm}
%%%%%%%%%%%%%%%%%%%%%%%%%%%%%%%%%%%%%%%%%%%%%%%%%%%%%%%%%%%%%%%%%%%%%%%%%%%%%%%%

Let us now introduce our verification framework.
As mentioned before, we use SWTAs to encode sets of quantum states and WTTs to
encode quantum gates.
Let us now describe the following two approaches to verification of
parameterized circuits:
\begin{enumerate}
  \item  \emph{Relational verification}:
    Here, we are given two SWTAs, $\aut_\pre$ and $\aut_\post$, which represent
    the quantum states in the precondition and in the postcondition
    respectively, such that the relation (which input quantum state should be
    mapped to which output quantum state) is captured by the corresponding trees
    having the same color sequence.
    The circuit is given by a~sequence of WTTs $\trn_1, \ldots, \trn_k$.
    In the verification, we compute the SWTA $\autRes =
    \trn_k(\trn_{k-1}(\ldots(\trn_1(\aut_\pre))\ldots))$ using a~sequence of
    transducer image computation steps (\cref{sec:trn_image}) and then test
    $\semof{\autRes} \subseteq \semof{\aut_\post}$ using the algorithm in
    \cref{sec:colored_equivalence_checking}.

  \item  \emph{Equivalence checking}:
    We are given two sequences of transducers $\trn_1, \ldots,
    \trn_k$ and $\trn'_1, \ldots, \trn'_l$.
    In the verification, we start with the SWTA~$\autbases$ that represents
    all~$2^n$ computational bases for every size~$n$.
    We compute $\autRes = \trn_k(\trn_{k-1}(\ldots(\trn_1(\autbases))\ldots))$
    and
    $\autRes' = \trn'_k(\trn'_{k-1}(\ldots(\trn'_1(\autbases))\ldots))$
    and test~$\semof{\autRes} = \semof{\autRes'}$.
    Correctness follows from linearity of the operations and orthogonality of
    the vectors for the computational bases (from linear algebra, if two unitary operators behave the
    same on~$2^n$ orthogonal vectors, then they are equal).
\end{enumerate}
Below, we show how this approach can be applied on selected case studies of
parameterized circuits.

\newcommand{
\begin{wrapfigure}[12]{r}{0.44\textwidth}
    \centering
    \vspace*{-9mm}
    \hspace*{-6mm}
    \begin{minipage}{\linewidth}
    \scalebox{0.74}{
      \begin{quantikz}[row sep=0.4cm]
  \lstick{$w_1 = \ket 0$}                & \gate{H} & \ctrl{7} \grovergroup{Oracle} 8 5 {purple!10}& \qw & \qw     &\
  \cdots\ & \qw              & \gate{H} & \qw \\
  \lstick{$w_2 = \ket 0$}                & \gate{H} & \qw      & \qw & \qw     &\
  \cdots\ & \qw              & \gate{H} & \qw\\
  \lstick{$w_3 = \ket 0$}                & \gate{H} & \qw      & \qw & \ctrl{5}&\
  \cdots\ & \qw                   & \gate{H} & \qw \\
  \setwiretype{n}                   &          &          &               &    &          &          \\[-1.3em]
  \hspace*{-5mm}\vdots\hspace*{5mm}\setwiretype{n} &  \vdots  &    &  \vdots & &  \vdots&  \vdots&  \vdots           \\[-.8em]
  \setwiretype{n}                   &          &          &                       &          \\[-0.5em]
  \lstick{$w_n = \ket 0$}                & \gate{H} & \qw & \qw& \qw           &\
  \cdots\ & \ctrl{1}  &\gate{H} & \qw \\
  \lstick{$a = \ket 1$}                & \gate{H} & \targ{} & \qw& \targ{}   &\
  \cdots\ &  \targ{}          &\gate{H} &\qw
\end{quantikz}
% \begin{quantikz}[row sep=0.4cm]
%   \lstick{$\ket{w}$}                & \gate{H} & \ctrl{7} & \qw & \qw     &\
%   \cdots\ & \qw              & \gate{H} & \qw \\
%   \lstick{$\ket{w}$}                & \gate{H} & \qw      & \qw & \qw     &\
%   \cdots\ & \qw              & \gate{H} & \qw\\
%   \lstick{$\ket{w}$}                & \gate{H} & \qw      & \qw & \ctrl{5}&\
%   \cdots\ & \qw                   & \gate{H} & \qw \\
%   \setwiretype{n}                   &          &          &               &    &          &          \\[-1.3em]
%   \hspace*{-5mm}\vdots\hspace*{5mm}\setwiretype{n} &  \vdots  &    &  \vdots & &  \vdots&  \vdots&  \vdots           \\[-.8em]
%   \setwiretype{n}                   &          &          &                       &          \\[-0.5em]
%   \lstick{$\ket{w}$}                & \gate{H} & \qw & \qw& \qw           &\
%   \cdots\ & \ctrl{1}  &\gate{H} & \qw \\
%   \lstick{$\ket{a}$}                & \gate{H} & \targ{} & \qw& \targ{}   &\
%   \cdots\ &  \targ{}          &\gate{H} &\qw
% \end{quantikz}
    }
    \end{minipage}
    \vspace{-3mm}
    \caption{Circuit implementing the Bernstein-Vazirani algorithm for the
    secret $(\texttt{10})^* (\texttt{1} + \varepsilon)$}
    \label{fig:BV}
\end{wrapfigure}
}[0]{
\begin{wrapfigure}[12]{r}{0.44\textwidth}
    \centering
    \vspace*{-9mm}
    \hspace*{-6mm}
    \begin{minipage}{\linewidth}
    \scalebox{0.74}{
      \begin{quantikz}[row sep=0.4cm]
  \lstick{$w_1 = \ket 0$}                & \gate{H} & \ctrl{7} \grovergroup{Oracle} 8 5 {purple!10}& \qw & \qw     &\
  \cdots\ & \qw              & \gate{H} & \qw \\
  \lstick{$w_2 = \ket 0$}                & \gate{H} & \qw      & \qw & \qw     &\
  \cdots\ & \qw              & \gate{H} & \qw\\
  \lstick{$w_3 = \ket 0$}                & \gate{H} & \qw      & \qw & \ctrl{5}&\
  \cdots\ & \qw                   & \gate{H} & \qw \\
  \setwiretype{n}                   &          &          &               &    &          &          \\[-1.3em]
  \hspace*{-5mm}\vdots\hspace*{5mm}\setwiretype{n} &  \vdots  &    &  \vdots & &  \vdots&  \vdots&  \vdots           \\[-.8em]
  \setwiretype{n}                   &          &          &                       &          \\[-0.5em]
  \lstick{$w_n = \ket 0$}                & \gate{H} & \qw & \qw& \qw           &\
  \cdots\ & \ctrl{1}  &\gate{H} & \qw \\
  \lstick{$a = \ket 1$}                & \gate{H} & \targ{} & \qw& \targ{}   &\
  \cdots\ &  \targ{}          &\gate{H} &\qw
\end{quantikz}
% \begin{quantikz}[row sep=0.4cm]
%   \lstick{$\ket{w}$}                & \gate{H} & \ctrl{7} & \qw & \qw     &\
%   \cdots\ & \qw              & \gate{H} & \qw \\
%   \lstick{$\ket{w}$}                & \gate{H} & \qw      & \qw & \qw     &\
%   \cdots\ & \qw              & \gate{H} & \qw\\
%   \lstick{$\ket{w}$}                & \gate{H} & \qw      & \qw & \ctrl{5}&\
%   \cdots\ & \qw                   & \gate{H} & \qw \\
%   \setwiretype{n}                   &          &          &               &    &          &          \\[-1.3em]
%   \hspace*{-5mm}\vdots\hspace*{5mm}\setwiretype{n} &  \vdots  &    &  \vdots & &  \vdots&  \vdots&  \vdots           \\[-.8em]
%   \setwiretype{n}                   &          &          &                       &          \\[-0.5em]
%   \lstick{$\ket{w}$}                & \gate{H} & \qw & \qw& \qw           &\
%   \cdots\ & \ctrl{1}  &\gate{H} & \qw \\
%   \lstick{$\ket{a}$}                & \gate{H} & \targ{} & \qw& \targ{}   &\
%   \cdots\ &  \targ{}          &\gate{H} &\qw
% \end{quantikz}
    }
    \end{minipage}
    \vspace{-3mm}
    \caption{Circuit implementing the Bernstein-Vazirani algorithm for the
    secret $(\texttt{10})^* (\texttt{1} + \varepsilon)$}
    \label{fig:BV}
\end{wrapfigure}
}

%*******************************************************************************
\vspace{-0.0mm}
\subsection{The Bernstein-Vazirani Algorithm}\label{sec:BV}
\vspace{-0.0mm}
%*******************************************************************************

\begin{wrapfigure}[12]{r}{0.44\textwidth}
    \centering
    \vspace*{-9mm}
    \hspace*{-6mm}
    \begin{minipage}{\linewidth}
    \scalebox{0.74}{
      \begin{quantikz}[row sep=0.4cm]
  \lstick{$w_1 = \ket 0$}                & \gate{H} & \ctrl{7} \grovergroup{Oracle} 8 5 {purple!10}& \qw & \qw     &\
  \cdots\ & \qw              & \gate{H} & \qw \\
  \lstick{$w_2 = \ket 0$}                & \gate{H} & \qw      & \qw & \qw     &\
  \cdots\ & \qw              & \gate{H} & \qw\\
  \lstick{$w_3 = \ket 0$}                & \gate{H} & \qw      & \qw & \ctrl{5}&\
  \cdots\ & \qw                   & \gate{H} & \qw \\
  \setwiretype{n}                   &          &          &               &    &          &          \\[-1.3em]
  \hspace*{-5mm}\vdots\hspace*{5mm}\setwiretype{n} &  \vdots  &    &  \vdots & &  \vdots&  \vdots&  \vdots           \\[-.8em]
  \setwiretype{n}                   &          &          &                       &          \\[-0.5em]
  \lstick{$w_n = \ket 0$}                & \gate{H} & \qw & \qw& \qw           &\
  \cdots\ & \ctrl{1}  &\gate{H} & \qw \\
  \lstick{$a = \ket 1$}                & \gate{H} & \targ{} & \qw& \targ{}   &\
  \cdots\ &  \targ{}          &\gate{H} &\qw
\end{quantikz}
% \begin{quantikz}[row sep=0.4cm]
%   \lstick{$\ket{w}$}                & \gate{H} & \ctrl{7} & \qw & \qw     &\
%   \cdots\ & \qw              & \gate{H} & \qw \\
%   \lstick{$\ket{w}$}                & \gate{H} & \qw      & \qw & \qw     &\
%   \cdots\ & \qw              & \gate{H} & \qw\\
%   \lstick{$\ket{w}$}                & \gate{H} & \qw      & \qw & \ctrl{5}&\
%   \cdots\ & \qw                   & \gate{H} & \qw \\
%   \setwiretype{n}                   &          &          &               &    &          &          \\[-1.3em]
%   \hspace*{-5mm}\vdots\hspace*{5mm}\setwiretype{n} &  \vdots  &    &  \vdots & &  \vdots&  \vdots&  \vdots           \\[-.8em]
%   \setwiretype{n}                   &          &          &                       &          \\[-0.5em]
%   \lstick{$\ket{w}$}                & \gate{H} & \qw & \qw& \qw           &\
%   \cdots\ & \ctrl{1}  &\gate{H} & \qw \\
%   \lstick{$\ket{a}$}                & \gate{H} & \targ{} & \qw& \targ{}   &\
%   \cdots\ &  \targ{}          &\gate{H} &\qw
% \end{quantikz}
    }
    \end{minipage}
    \vspace{-3mm}
    \caption{Circuit implementing the Bernstein-Vazirani algorithm for the
    secret $(\texttt{10})^* (\texttt{1} + \varepsilon)$}
    \label{fig:BV}
\end{wrapfigure}
   %%%%%
In the first example, we show how to use our framework to model a~family of
circuits implementing the Bernstein-Vazirani (BV) algorithm~\cite{BernsteinV93} and
verify their functional correctness (the verification will be described in~\cref{sec:colored_equivalence_checking}).
The family of circuits that we consider is parameterized by~$n \in \naturals$ such that
for every such~$n$, there will be a~circuit with $n+1$ qubits (the one additional
qubit is an ancilla).
Each of the circuits implements the BV algorithm where the secret key is of the
form $(\texttt{10})^* (\texttt{1} + \varepsilon) = \{\varepsilon, \texttt{1},
\texttt{10}, \texttt{101}, \ldots \}$ such that its length is~$n$ (there is
exactly one such a~secret for any~$n$).
The schema for the circuits is given in \cref{fig:BV}.
We note that the oracle is specialized for the given family of secret keys.
The circuit starts, for $n$~qubits, with the quantum state $\ket{w_1 \dots w_n a} = \ket{0 \dots
0 1}$ where $w_i$ are working qubits and $a$ is an ancilla, and
the secret key will be $s_1 \ldots s_n = 101\ldots$
The expected output of the circuit is the quantum state $\ket{s_1 \ldots s_n
1}$, i.e., the working qubits contain the~secret key.

We model the problem by providing the following components:
\begin{inparaenum}[(i)]
  \item  SWTAs $\aut_\pre$ and $\aut_\post$ that encode the pre- and
    post-conditions,
  \item  transducer $\trn_{\hadtensor}$ that represents applying
    the $\had$ gate on all qubits, and
  \item  transducer $\transfCX$ representing the oracle.
\end{inparaenum}
They are defined as follows:
\begin{enumerate}
  \item  $\aut_\pre$ is an SWTA that accepts perfect trees where the branch
    $0\ldots 01$ has the value 1 and all other branches have the value~0, which
    can be achieved using the following two transitions (with the root
    state~$s_1$ and the set of leaf states~$\{s_2\}$):
    $\{\transcof{s_1} w 1 {s_1} {0s_1}, \transcof{s_1} a 1 {0s_2} {s_2}\}$.
    We emphasize that the symbol~$w$ is used to match \emph{any} of the working
    qubits~$w_1, \ldots, w_n$.
    % \begin{equation*}
    % \begin{aligned}
    %   \initmark & \transcof{s_1} w 1 {s_1} {0s_1} \qquad & \initmark & \transcof{s_1} a 1 {0s_2} {s_2}
    % \end{aligned}
    % \end{equation*}

  \item  $\aut_\post$ is an SWTA that accepts all perfect trees where the branch
    $1010\ldots x 1$ (with~$x = 0$ for an even~$n$ and $x=1$ for an odd~$n$) has
    the value~1 and all other branches have the value zero, which is defined as
    follows (with the root state~$g$ and the set of leaf states~$\{c\}$):
    \begin{align}
         \initmark & \transcof{g} w 1 {0h}{h} \quad  &
         \initmark & \transcof{g} a 1 {0c}{c} \quad &
                  & \transcof{h} w 1 {g}{0g} \quad &&
                    \transcof{h} a 1 {0c}{c}
    \end{align}

  \item  $\trn_{\hadtensor}$ is a~WTT with the two transitions (and with~$u$
    being both a root and a~leaf state)
    \begin{equation}
    \begin{aligned}
      \initmark & \trntransof u w {\invsqrttwo u(\leftT) + \invsqrttwo u(\rightT)}{\invsqrttwo u(\leftT) - \invsqrttwo u(\rightT)}\\
      \initmark & \trntransof u a {\invsqrttwo u(\leftT) + \invsqrttwo u(\rightT)}{\invsqrttwo u(\leftT) - \invsqrttwo u(\rightT)}.
    \end{aligned}\label{eq:transH}
    \end{equation}
    We refer the reader to \cref{sec:quantumWTT} for details on how it can be
    obtained.

  \item  $\transfCX$ is a~WTT that models applying a~series of $\cnot$ gates
    with controls on odd working qubits~$w_{2i+1}$ and the ancilla~$a$ being the target.
    Our construction of $\transfCX$ is based on the key observation that the
    ancilla qubit should be flipped only when there is an odd number of
    $1$'s in the secret.
    Therefore, the $\transfCX$'s states have the form $r^i$ (going to read a
    working qubit with secret value $1$ or the ancilla qubit) and $s^i$ (going
    to read a working qubit with secret value~$0$ or the ancilla qubit), for $i \in \{0, 1\}$,
    which represents whether the automaton has already seen even (for $i = 0$)
    or odd (for $i = 1$) number of ones along the tree branch. The state~$r^0$
    is the root state and state~$l$ is the only leaf state.
    \noindent
    \hspace*{-15mm}
    \scalebox{0.97}{
    \begin{minipage}{1.05\textwidth}
    \begin{equation}
    \begin{aligned}
         \initmark& \trntransof{r^0} w {s^0(\leftT)}{s^1(\rightT)} &
                  & \trntransof{s^0} w {r^0(\leftT)}{r^0(\rightT)} &
                  & \trntransof{s^1} w {r^1(\leftT)}{r^1(\rightT)} &
                  & \trntransof{r^1} w {s^1(\leftT)}{s^0(\rightT)} \\
         \initmark& \trntransof{r^0} a {l(\leftT)}{l(\rightT)} &
                  & \trntransof{s^0} a {l(\leftT)}{l(\rightT)} &
                  & \trntransof{s^1} a {l(\rightT)}{l(\leftT)} &
                  & \trntransof{r^1} a {l(\rightT)}{l(\leftT)}
    \end{aligned}
    \end{equation}
    \end{minipage}
    }
\end{enumerate}

\medskip
We can now establish correctness of the model by computing
$\autRes = \trn_{\hadtensor}(\transfCX(\trn_{\hadtensor}(\aut_\pre)))$
and testing whether $\semof \autRes = \semof{\aut_\post}$.
An illustration of how $\trn_{\hadtensor}(\transfCX(\trn_{\hadtensor}(\aut_\pre)))$
is constructed is given in \trOrAppendix{appendix:bv}.

We note that our choice of a fixed secret family is not the only way of
modeling the circuit. Instead, it is possible to model a BV circuit where the
secret is arbitrary. In such a case, the
\begin{changebar}
secret is a~part
\end{changebar}
of the input in which the working qubits are interleaved with with secret
qubits.

%%%%%%%%%%%%%%%%%%%%%%%%%%%%%%%%%%%%%%%%%%%%%%%%%%%%%%%%%%%%%%%%%%%
\newcommand{
\begin{figure}[t]
  \begin{minipage}[b]{0.23\textwidth}
    \begin{subfigure}[t]{\textwidth}
      \centering
      \scalebox{0.9}{
        \begin{quantikz}
          & \qw          & \targ{}  & \ctrl{2} & \qw      \\
          & \targ{}      & \qw      & \ctrl{1}  & \qw      \\
          & \ctrl{-1}    &\ctrl{-2} & \targ{}   &\qw    
        \end{quantikz}
      }
      \caption{The $\maj$ gate}
      \label{fig:maj}
    \end{subfigure}
    \vspace{1.3cm} % Adjust space between a and b

    \begin{subfigure}[b]{\textwidth}
      \scalebox{0.9}{
      \begin{quantikz}
        & \ctrl{2} & \targ{}  & \ctrl{1} & \qw     \\
        & \ctrl{1} & \qw      & \targ{}  & \qw    \\
          & \targ{}  &\ctrl{-2} & \qw    & \qw  
      \end{quantikz}      
      }
      \caption{The $\uma$ gate}
      \label{fig:uma}
    \end{subfigure}
  \end{minipage}
  \hfill
  \begin{minipage}[b]{0.76\textwidth}
    \begin{subfigure}[b]{\textwidth}
\scalebox{0.67}{
  \begin{quantikz}
  % \gtgroup{$\maj\searrow$}{1}{1}{brown!10}
  %  \gtgroup{$\uma\nearrow$}{4}{3}{yellow!20}
	\lstick{$\ket{c_1}$}         & \gate[3]{\maj}& \qw   & \qw      & \qw            &\qw  & \qw           & \qw          & \qw      & \gate[3]{\uma} & \rstick{$\ket{0}$}\\
	\lstick{$\ket{b_1}$}       & \qw           & \qw           & \qw      & \qw            &\qw        & \qw           & \qw           & \qw      & \qw           &\rstick{$\ket{a_1 \oplus b_1 \oplus c_1}$}\\
	\lstick{$\ket{a_1}$}       & \qw           & \gate[3]{\maj} & \qw      & \qw            &\qw        & \qw           & \qw           & \gate[3]{\uma}& \qw       & \rstick{$\ket{a_1}$}\\
	\lstick{$\ket{b_2}$}       & \qw           & \qw           & \qw      & \qw            &\qw        & \qw           & \qw           & \qw      & \qw           &\rstick{$\ket{a_2 \oplus b_2 \oplus c_2}$}\\
	\lstick{$\ket{a_2}$}       & \qw           & \qw           & \qw      & \qw            &\qw        & \qw           & \qw           & \qw      & \qw           &\rstick{$\ket{a_2}$}\\
	  \setwiretype{n}            & \vdots        & \vdots        & \vdots   & \vdots         & \vdots    &  \vdots       & \vdots      & \vdots   & \vdots        &\\
	\lstick{$\ket{a_{n-1}}$}   & \qw           & \qw           & \qw      & \gate[3]{\maj}  &\qw        & \gate[3]{\uma} & \qw           & \qw      & \qw           &\rstick{$\ket{a_{n-1}}$}\\
	\lstick{$\ket{b_n}$}       & \qw           & \qw           & \qw      & \qw            &\qw        & \qw           & \qw           & \qw      & \qw           & \rstick{$\ket{a_n \oplus b_n \oplus c_n}$}\\
	\lstick{$\ket{a_n}$}       & \qw           & \qw           & \qw      & \qw            &\ctrl{1}   & \qw           & \qw           & \qw      & \qw           & \rstick{$\ket{a_n}$}\\
	\lstick{$\ket{0}$}         & \qw           & \qw           & \qw      & \qw            &\targ{}   & \qw           & \qw           & \qw      & \qw           &\rstick{$\ket{c_{n+1}}$}\\
\end{quantikz}
}
      \caption{The circuit with the expected postcondition where $c_{i+1}=a_ib_i \oplus a_ic_i \oplus b_ic_i$}
      \label{fig:rca}
    \end{subfigure}
  \end{minipage}
  \caption{A~circuit implementing a~ripple-carry adder.}
  \label{fig:Adder}
\end{figure}
}[0]{
\begin{figure}[t]
  \begin{minipage}[b]{0.23\textwidth}
    \begin{subfigure}[t]{\textwidth}
      \centering
      \scalebox{0.9}{
        \begin{quantikz}
          & \qw          & \targ{}  & \ctrl{2} & \qw      \\
          & \targ{}      & \qw      & \ctrl{1}  & \qw      \\
          & \ctrl{-1}    &\ctrl{-2} & \targ{}   &\qw    
        \end{quantikz}
      }
      \caption{The $\maj$ gate}
      \label{fig:maj}
    \end{subfigure}
    \vspace{1.3cm} % Adjust space between a and b

    \begin{subfigure}[b]{\textwidth}
      \scalebox{0.9}{
      \begin{quantikz}
        & \ctrl{2} & \targ{}  & \ctrl{1} & \qw     \\
        & \ctrl{1} & \qw      & \targ{}  & \qw    \\
          & \targ{}  &\ctrl{-2} & \qw    & \qw  
      \end{quantikz}      
      }
      \caption{The $\uma$ gate}
      \label{fig:uma}
    \end{subfigure}
  \end{minipage}
  \hfill
  \begin{minipage}[b]{0.76\textwidth}
    \begin{subfigure}[b]{\textwidth}
\scalebox{0.67}{
  \begin{quantikz}
  % \gtgroup{$\maj\searrow$}{1}{1}{brown!10}
  %  \gtgroup{$\uma\nearrow$}{4}{3}{yellow!20}
	\lstick{$\ket{c_1}$}         & \gate[3]{\maj}& \qw   & \qw      & \qw            &\qw  & \qw           & \qw          & \qw      & \gate[3]{\uma} & \rstick{$\ket{0}$}\\
	\lstick{$\ket{b_1}$}       & \qw           & \qw           & \qw      & \qw            &\qw        & \qw           & \qw           & \qw      & \qw           &\rstick{$\ket{a_1 \oplus b_1 \oplus c_1}$}\\
	\lstick{$\ket{a_1}$}       & \qw           & \gate[3]{\maj} & \qw      & \qw            &\qw        & \qw           & \qw           & \gate[3]{\uma}& \qw       & \rstick{$\ket{a_1}$}\\
	\lstick{$\ket{b_2}$}       & \qw           & \qw           & \qw      & \qw            &\qw        & \qw           & \qw           & \qw      & \qw           &\rstick{$\ket{a_2 \oplus b_2 \oplus c_2}$}\\
	\lstick{$\ket{a_2}$}       & \qw           & \qw           & \qw      & \qw            &\qw        & \qw           & \qw           & \qw      & \qw           &\rstick{$\ket{a_2}$}\\
	  \setwiretype{n}            & \vdots        & \vdots        & \vdots   & \vdots         & \vdots    &  \vdots       & \vdots      & \vdots   & \vdots        &\\
	\lstick{$\ket{a_{n-1}}$}   & \qw           & \qw           & \qw      & \gate[3]{\maj}  &\qw        & \gate[3]{\uma} & \qw           & \qw      & \qw           &\rstick{$\ket{a_{n-1}}$}\\
	\lstick{$\ket{b_n}$}       & \qw           & \qw           & \qw      & \qw            &\qw        & \qw           & \qw           & \qw      & \qw           & \rstick{$\ket{a_n \oplus b_n \oplus c_n}$}\\
	\lstick{$\ket{a_n}$}       & \qw           & \qw           & \qw      & \qw            &\ctrl{1}   & \qw           & \qw           & \qw      & \qw           & \rstick{$\ket{a_n}$}\\
	\lstick{$\ket{0}$}         & \qw           & \qw           & \qw      & \qw            &\targ{}   & \qw           & \qw           & \qw      & \qw           &\rstick{$\ket{c_{n+1}}$}\\
\end{quantikz}
}
      \caption{The circuit with the expected postcondition where $c_{i+1}=a_ib_i \oplus a_ic_i \oplus b_ic_i$}
      \label{fig:rca}
    \end{subfigure}
  \end{minipage}
  \caption{A~circuit implementing a~ripple-carry adder.}
  \label{fig:Adder}
\end{figure}
}

%*******************************************************************************
\vspace{-0.0mm}
\subsection{Arithmetic Circuits}\label{sec:arithmetic}
\vspace{-0.0mm}
%*******************************************************************************

\begin{figure}[t]
  \begin{minipage}[b]{0.23\textwidth}
    \begin{subfigure}[t]{\textwidth}
      \centering
      \scalebox{0.9}{
        \begin{quantikz}
          & \qw          & \targ{}  & \ctrl{2} & \qw      \\
          & \targ{}      & \qw      & \ctrl{1}  & \qw      \\
          & \ctrl{-1}    &\ctrl{-2} & \targ{}   &\qw    
        \end{quantikz}
      }
      \caption{The $\maj$ gate}
      \label{fig:maj}
    \end{subfigure}
    \vspace{1.3cm} % Adjust space between a and b

    \begin{subfigure}[b]{\textwidth}
      \scalebox{0.9}{
      \begin{quantikz}
        & \ctrl{2} & \targ{}  & \ctrl{1} & \qw     \\
        & \ctrl{1} & \qw      & \targ{}  & \qw    \\
          & \targ{}  &\ctrl{-2} & \qw    & \qw  
      \end{quantikz}      
      }
      \caption{The $\uma$ gate}
      \label{fig:uma}
    \end{subfigure}
  \end{minipage}
  \hfill
  \begin{minipage}[b]{0.76\textwidth}
    \begin{subfigure}[b]{\textwidth}
\scalebox{0.67}{
  \begin{quantikz}
  % \gtgroup{$\maj\searrow$}{1}{1}{brown!10}
  %  \gtgroup{$\uma\nearrow$}{4}{3}{yellow!20}
	\lstick{$\ket{c_1}$}         & \gate[3]{\maj}& \qw   & \qw      & \qw            &\qw  & \qw           & \qw          & \qw      & \gate[3]{\uma} & \rstick{$\ket{0}$}\\
	\lstick{$\ket{b_1}$}       & \qw           & \qw           & \qw      & \qw            &\qw        & \qw           & \qw           & \qw      & \qw           &\rstick{$\ket{a_1 \oplus b_1 \oplus c_1}$}\\
	\lstick{$\ket{a_1}$}       & \qw           & \gate[3]{\maj} & \qw      & \qw            &\qw        & \qw           & \qw           & \gate[3]{\uma}& \qw       & \rstick{$\ket{a_1}$}\\
	\lstick{$\ket{b_2}$}       & \qw           & \qw           & \qw      & \qw            &\qw        & \qw           & \qw           & \qw      & \qw           &\rstick{$\ket{a_2 \oplus b_2 \oplus c_2}$}\\
	\lstick{$\ket{a_2}$}       & \qw           & \qw           & \qw      & \qw            &\qw        & \qw           & \qw           & \qw      & \qw           &\rstick{$\ket{a_2}$}\\
	  \setwiretype{n}            & \vdots        & \vdots        & \vdots   & \vdots         & \vdots    &  \vdots       & \vdots      & \vdots   & \vdots        &\\
	\lstick{$\ket{a_{n-1}}$}   & \qw           & \qw           & \qw      & \gate[3]{\maj}  &\qw        & \gate[3]{\uma} & \qw           & \qw      & \qw           &\rstick{$\ket{a_{n-1}}$}\\
	\lstick{$\ket{b_n}$}       & \qw           & \qw           & \qw      & \qw            &\qw        & \qw           & \qw           & \qw      & \qw           & \rstick{$\ket{a_n \oplus b_n \oplus c_n}$}\\
	\lstick{$\ket{a_n}$}       & \qw           & \qw           & \qw      & \qw            &\ctrl{1}   & \qw           & \qw           & \qw      & \qw           & \rstick{$\ket{a_n}$}\\
	\lstick{$\ket{0}$}         & \qw           & \qw           & \qw      & \qw            &\targ{}   & \qw           & \qw           & \qw      & \qw           &\rstick{$\ket{c_{n+1}}$}\\
\end{quantikz}
}
      \caption{The circuit with the expected postcondition where $c_{i+1}=a_ib_i \oplus a_ic_i \oplus b_ic_i$}
      \label{fig:rca}
    \end{subfigure}
  \end{minipage}
  \caption{A~circuit implementing a~ripple-carry adder.}
  \label{fig:Adder}
\end{figure}
    %%%%%%%%%%%%%%%%%%

Another class of examples to which our approach applies is that of arithmetic
circuits, such as adders and comparators.
These circuits frequently serve as basic components of quantum algorithms, e.g., within
the state preparation stage.
In this section, we focus on the verification of the functional correctness of 
a~carry-ripple adder~\cite{cuccaro2004new} given in \cref{fig:rca}.
The adder expects two binary numbers, $a_1\ldots a_n$ and $b_1 \ldots b_n$, as
well as an initial \emph{carry} bit~$c_1$ and an ancilla initialized to~$\ket
0$, and stores the sum of the numbers and the carry bit to the qubits $b_1\ldots
b_n$, with the output carry stored to the ancilla (the contents of the $a_1
\ldots a_n$ qubits is unchanged and the output value of~$c_1$ is~$0$).
The circuit is composed of three parts:
\begin{inparaenum}[(i)]
  \item  a~``\emph{downward staircase}'' sequence of Majority ($\maj$,
    \cref{fig:maj}) gates, denoted as~$\maj^\searrow$,
  \item  a~$\cnot$ between~$a_n$ and the ancilla, and
  \item  an~``\emph{upward staircase}'' sequence of UnMajority and Add ($\uma$,
    \cref{fig:uma}) gates, denoted as~$\uma^\nearrow$.
\end{inparaenum}
The precondition can be modelled by an SWTA accepting trees representing all
bases with the ancilla~$0$ and the postcondition can be expressed by an
SWTA accepting the expected results \mbox{(the construction is straightforward
but quite tedious).}

To verify the functional correctness of the adder using our framework, we need
to provide the transducers~$\trn_\maj^\searrow$ and $\trn_\uma^\nearrow$ for the
staircase sequences~$\maj^\searrow$ and~$\uma^\nearrow$ respectively.
We start with constructing the transducer~$\trn_\maj$ for the~$\maj$ gate
(\cref{fig:maj}), which can be easily done by composing the transducers for the
$\cnot$ and $\ccnot$ gates using the transducer composition algorithm from
\cref{sec:trn_compose}.
Then we need to construct $\trn_\maj^\searrow$ expressing a~parameterized
sequence of~$\maj$ gates in the given staircase pattern.
For this, we use the algorithm from \cref{sec:composition_par}, which takes
a~(fixed-input size) transducer and transforms it into a~transducer that
represents the given regular pattern for an arbitrary input size.
The transducer~$\trn_\uma^\nearrow$ is prepared similarly and
the verification of the adder then proceeds in the same way as in~\cref{sec:BV}.

% \ol{OLD:}
% Another class of examples to which our approach applies is that of arithmetic circuits, such as adders and comparators. These circuits frequently serve as fundamental components in a wide range of quantum applications, such as state preparation. Our framework supports the formal verification of such circuits, ensuring their correctness across all possible inputs and for any valid qubit size.\lh{again, what is their common aspect that allows us to verify them?} As a concrete example, we present in~\cref{fig:rca} a ripple-carry adder circuit that operates on an arbitrary number of qubits~\cite{cuccaro2004new}. It begins with a sequence of MAJ circuits~(\cref{fig:maj}), continues with a CX gate, and concludes with a sequence of UMA circuits~(\cref{fig:uma}).
%
% In~\cref{sec:composition_par}, we explain how to compose the transducers of these components to obtain the transducer for the ripple-carry adder shown in~\cref{fig:rca}. We verify the circuit by providing a pair of SWTAs: a precondition that encodes all possible input values $a$ and $b$, and a postcondition SWTA that encodes all expected outputs as shown on the right of~\cref{fig:rca}. 

%%%%%%%%%%%%%%%%%%%%%%%%%%%%%%%%%%%%%%%%%%%%%%%%%%%%%%%%%
\newcommand{\eccfig}[1]{
\begin{wrapfigure}[13]{r}{0.4\textwidth}
  \vspace{-5mm}
  \begin{minipage}{0.4\textwidth}
    \scalebox{0.78}{
    \begin{quantikz}
  \lstick{$x_1$} & \ctrl{3} \gategroup[wires=4,steps=2,style={dotted,cap=round,inner sep=7pt},label style={label position=above}]{$B$} & \qw & \qw & \qw & \qw & \qw & \qw & \qw & \qw \\
  \lstick{$a_1$} & \qw & \qw & \qw & \qw & \qw & \qw & \qw & \qw  & \qw\\
  \lstick{$x_2$} & \qw & \ctrl{1} & \qw & \ctrl{3} \gategroup[wires=4,steps=2,style={dotted,cap=round,inner sep=7pt},label style={label position=above,yshift=1mm}]{$B$} & \qw & \qw & \qw & \qw  & \qw\\
  \lstick{$a_2$} & \targ{} & \targ{} & \qw & \qw & \qw & \qw & \qw & \qw  & \qw\\
  \lstick{$x_3$} & \qw & \qw & \qw & \qw & \ctrl{1} & \qw & \ctrl{3} \gategroup[wires=4,steps=2,style={dotted,cap=round,inner sep=7pt},label style={label position=above,yshift=-1mm}]{$B$} & \qw  & \qw\\
  \lstick{$a_3$} & \qw & \qw & \qw & \targ{} & \targ{} & \qw & \qw & \qw  & \qw\\
  \lstick{$x_4$} & \qw & \qw & \qw & \qw & \qw & \qw & \qw & \ctrl{1}  & \qw\\
  \lstick{$a_4$} & \qw & \qw & \qw & \qw & \qw & \qw & \targ{} & \targ{} & \qw
\end{quantikz}
  }
  \end{minipage}
  \vspace{-4mm}
  \caption{Syndrome extraction circuit}
  \label{fig:repetition_code}
\end{wrapfigure}
}

%*******************************************************************************
\vspace{-0.0mm}
\subsection{Quantum Error Correction Code}\label{sec:qecc}
\vspace{-0.0mm}
%*******************************************************************************

\eccfig  %%%%%%%

As the next case study, we focus on verification of \emph{quantum error
correction codes} (QECC) and consider the \emph{syndrome extraction} circuit of
a~\emph{repetition code} used to correct bit-flip errors~\cite{Peres85}.
The code works such that a~qubit~$x_1$, whose value $\alpha \ket 0 + \beta \ket
1$ we aim to protect, is
entangled using~$\cnot$ gates with qubits~$x_2, \ldots, x_n$ (in a~similar way
as done by the GHZ circuit~\cite{GreenbergerHZ89}), obtaining the state
$\varphi = \alpha \ket{0^n} + \beta \ket{1^n}$.
Then~$\varphi$ enters a~noisy channel, which may perform some bit flips, so that
we obtain a~state~$\varphi' = \alpha \ket{w} + \beta \ket{\overline{w}}$ where
$w, \overline w$ are binary strings of the length~$n$ and $\overline{w}$ is the
binary complement of~$w$.
The next step is the syndrome extraction, implemented by the circuit in
\cref{fig:repetition_code} (for~$n=4$).
The circuit uses~$n-1$ ancillas (we use~$n$ in the figure to keep the structure
regular; $a_1$~can be removed) to detect bit-flips such that 
\begin{inparaenum}[(i)]
  \item  if a~single bit-flip error occurs at $x_1$, the ancilla $a_2$ flips from
    $\ket{0}$ to $\ket{1}$,
  \item  if the error occurs at the qubit $x_i$ for $2 \le i \le n-1$, both $a_i$ and
$a_{i+1}$ flip, and
  \item  if the error is at $x_n$, then only $a_n$ is affected.
\end{inparaenum}
We model the verification of the syndrome extraction circuit (for all sizes) as
follows:
For the precondition, we consider an SWTA that accepts all trees representing
quantum states $\invsqrttwo \ket{w} + \invsqrttwo \ket{\overline{w}}$, where $w$ contains at most one 1 at an odd position (corresponding to at most one error),
and for the postcondition, we construct an SWTA that represents the requirement
on the values of the ancillas as described above.
The transducer for the circuit is constructed using the algorithm from
\cref{sec:composition_par} and the verification is done similarly as in
\cref{sec:BV}.

% \ol{OLD:}
% An example of the syndrome extraction procedure for four qubits is shown in~\cref{fig:repetition_code}, where the qubits $\ket{x_1}$ through $\ket{x_4}$ represent the encoded quantum state, and the ancillary qubits $\ket{a_2}$, $\ket{a_3}$, and $\ket{a_4}$ are used for error detection. The gate operations in the circuit follow a regular, repeating structure composed of boxes $B$. We first construct the transducer corresponding to a single box $B$, which then generalizes to arbitrary $n$ using the procedure in~\cref{sec:composition_par}.
%
% To illustrate the correctness of the code, we consider the encoded state $\invsqrttwo \ket{0^n} + \invsqrttwo \ket{1^n}$. If a bit-flip error occurs at $x_1$, the ancillary qubit $a_1$ flips from $\ket{0}$ to $\ket{1}$. If the error occurs at position $x_i$ for $2 \le i \le n-1$, both $a_i$ and $a_{i+1}$ flip. If the error is on $x_n$, then only $a_n$ is affected. We capture these behaviors by constructing the corresponding pre- and postcondition SWTAs, enabling formal verification of the error detection mechanism. 

%%%%%%%%%%%%%%%%%%%%%%%%%%%%%%%%%%%%%%%%%%%%%%%%%
\newcommand{
\begin{figure}[t]
  \begin{subfigure}[b]{0.37\textwidth}
    \scalebox{0.51}{
      \begin{quantikz}[row sep=0.4cm]
  \lstick{$q_1$} & \gate{X}\grovergroup{Oracle}{10}{3}{\oraclecolor} & \ctrl{1} & \gate{X}  & \gate{H} \grovergroup{Diffuser}{10}{5}{\diffusercolor}& \gate{X} & \ctrl{1}  & \gate{X} & \gate{H}  & \qw\\
  \lstick{$q_2$}                & \gate{X} & \ctrl{2} & \gate{X}                   & \gate{H} & \gate{X} & \ctrl{2}  & \gate{X} & \gate{H}  & \qw\\
  \lstick{$\ket{0}$}                & \qw & \qw & \qw  & \qw & \qw & \qw  & \qw & \qw & \qw \\
  \lstick{$q_3$}                & \gate{X} & \ctrl{2} & \gate{X}                   & \gate{H} & \gate{X} & \ctrl{2}  & \gate{X} & \gate{H} & \qw \\
  \lstick{$\ket{0}$}                & \qw & \qw & \qw  & \qw & \qw & \qw  & \qw & \qw & \qw \\
  \setwiretype{n}                     &          &          &                             &          &          &           &          &          \\[-1.3em]
  \hspace*{-5mm}\vdots\hspace*{5mm}\setwiretype{n} & \vdots   & \vdots   &  \vdots                     & \vdots   & \vdots   & \vdots    & \vdots   &  \vdots  \\[-.8em]
  \setwiretype{n}                     &          & \vqw{1} &                             &          &          &           &          &          \\[-0.5em]
  \lstick{$q_n$}          & \gate{X} & \ctrl{1} & \gate{X}                    & \gate{H} & \gate{X} & \ctrl{-1}  & \gate{X} & \gate{H} & \qw \\
  \lstick{$\ket{0}$}                  & \qw      & \ctrl{-1} & \qw                         & \qw     & \qw      & \qw       & \qw      & \qw & \qw 
\end{quantikz}
    }
    \caption{Basic circuit~$C_1$ for Grover's algorithm}
    \label{fig:Grover_iter_b}
  \end{subfigure}
  \begin{subfigure}[b]{.62\textwidth}
    \scalebox{0.46}{
      \begin{quantikz}[row sep=0.4cm]
  \lstick{$q_1$}  & \gate{X}\grovergroup{Oracle}{10}{9}{\oraclecolor} &
  \ctrl{1} & \qw & \qw & \qw & \qw & \qw & \ctrl{1}& \gate{X} & \gate{H}\grovergroup{Diffuser}{10}{11}{\diffusercolor} & \gate{X} & \ctrl{1} & \qw & \qw & \qw & \qw & \qw & \ctrl{1}& \gate{X} & \gate{H} & \qw\\
  \lstick{$q_2$}                & \gate{X} & \ctrl{1} & \qw & \qw & \qw & \qw & \qw & \ctrl{1}& \gate{X} & \gate{H} & \gate{X} & \ctrl{1} & \qw & \qw & \qw & \qw & \qw & \ctrl{1}& \gate{X} & \gate{H} & \qw\\
  \lstick{$\ket{0}$}                  & \qw      & \targ{}  & \ctrl{1} & \qw & \qw & \qw& \ctrl{1}& \targ{}      & \qw & \qw & \qw  & \targ{}  & \ctrl{1} & \qw & \qw & \qw& \ctrl{1}& \targ{}      & \qw & \qw & \qw\\
  \lstick{$q_3$}                & \gate{X} & \qw      &\ctrl{1} & \qw & \qw & \qw& \ctrl{1} & \qw & \gate{X}                   & \gate{H} & \gate{X} &\qw &\ctrl{1} & \qw & \qw & \qw& \ctrl{1} & \qw&  \gate{X} & \gate{H} & \qw\\
  \lstick{$\ket{0}$}                  & \qw      & \qw      & \targ{}  & \ctrl{1} & \qw & \ctrl{1}& \targ{}      & \qw & \qw & \qw  & \qw & \qw  & \targ{}  & \ctrl{1} & \qw & \ctrl{1}& \targ{} & \qw & \qw & \qw & \qw    \\
  \setwiretype{n}                     &          &          &  &  &  & &                             &          &          &           &          &      &  &  &  & &     \\[-1.3em]
  \hspace*{-5mm}\vdots\hspace*{5mm}\setwiretype{n} & \vdots   & \vdots   &  \vdots                     & \vdots   & \vdots   & \vdots    & \vdots   &  \vdots& \vdots   &  \vdots& \vdots   &  \vdots& \vdots   &  \vdots& \vdots   &  \vdots& \vdots   &  \vdots& \vdots   &  \vdots  \\[-.8em]
  \setwiretype{n}                     &          &  &                             &  &          &  &                             &          &          &           &          & &          &           &          &          \\[-0.5em]
  \lstick{$q_n$}          & \gate{X} & \qw & \qw & \qw & \ctrl{-1} & \qw & \qw & \qw& \gate{X}                    & \gate{H} & \gate{X} & \qw & \qw & \qw & \ctrl{-1}  & \qw & \qw & \qw & \gate{X} & \gate{H} & \qw\\
  \lstick{$\ket{0}$}                  & \qw      & \qw & \qw & \qw & \ctrl{-1} & \qw & \qw & \qw                         & \qw     & \qw      & \qw       & \qw & \qw & \qw      & \qw       & \qw      & \qw & \qw      & \qw        & \qw & \qw
\end{quantikz}
    }
    \caption{Circuit~$C_2$ for Grover's algorithm without multi-control gates}
    \label{fig:Grover_iter_c}
  \end{subfigure}  
  \caption{Two implementations of a~single iteration of
  Grover's algorithm with the solution $\ket{0^n}$.}
  \label{fig:Grover_iter_subfig}
\end{figure}
}[0]{
\begin{figure}[t]
  \begin{subfigure}[b]{0.37\textwidth}
    \scalebox{0.51}{
      \begin{quantikz}[row sep=0.4cm]
  \lstick{$q_1$} & \gate{X}\grovergroup{Oracle}{10}{3}{\oraclecolor} & \ctrl{1} & \gate{X}  & \gate{H} \grovergroup{Diffuser}{10}{5}{\diffusercolor}& \gate{X} & \ctrl{1}  & \gate{X} & \gate{H}  & \qw\\
  \lstick{$q_2$}                & \gate{X} & \ctrl{2} & \gate{X}                   & \gate{H} & \gate{X} & \ctrl{2}  & \gate{X} & \gate{H}  & \qw\\
  \lstick{$\ket{0}$}                & \qw & \qw & \qw  & \qw & \qw & \qw  & \qw & \qw & \qw \\
  \lstick{$q_3$}                & \gate{X} & \ctrl{2} & \gate{X}                   & \gate{H} & \gate{X} & \ctrl{2}  & \gate{X} & \gate{H} & \qw \\
  \lstick{$\ket{0}$}                & \qw & \qw & \qw  & \qw & \qw & \qw  & \qw & \qw & \qw \\
  \setwiretype{n}                     &          &          &                             &          &          &           &          &          \\[-1.3em]
  \hspace*{-5mm}\vdots\hspace*{5mm}\setwiretype{n} & \vdots   & \vdots   &  \vdots                     & \vdots   & \vdots   & \vdots    & \vdots   &  \vdots  \\[-.8em]
  \setwiretype{n}                     &          & \vqw{1} &                             &          &          &           &          &          \\[-0.5em]
  \lstick{$q_n$}          & \gate{X} & \ctrl{1} & \gate{X}                    & \gate{H} & \gate{X} & \ctrl{-1}  & \gate{X} & \gate{H} & \qw \\
  \lstick{$\ket{0}$}                  & \qw      & \ctrl{-1} & \qw                         & \qw     & \qw      & \qw       & \qw      & \qw & \qw 
\end{quantikz}
    }
    \caption{Basic circuit~$C_1$ for Grover's algorithm}
    \label{fig:Grover_iter_b}
  \end{subfigure}
  \begin{subfigure}[b]{.62\textwidth}
    \scalebox{0.46}{
      \begin{quantikz}[row sep=0.4cm]
  \lstick{$q_1$}  & \gate{X}\grovergroup{Oracle}{10}{9}{\oraclecolor} &
  \ctrl{1} & \qw & \qw & \qw & \qw & \qw & \ctrl{1}& \gate{X} & \gate{H}\grovergroup{Diffuser}{10}{11}{\diffusercolor} & \gate{X} & \ctrl{1} & \qw & \qw & \qw & \qw & \qw & \ctrl{1}& \gate{X} & \gate{H} & \qw\\
  \lstick{$q_2$}                & \gate{X} & \ctrl{1} & \qw & \qw & \qw & \qw & \qw & \ctrl{1}& \gate{X} & \gate{H} & \gate{X} & \ctrl{1} & \qw & \qw & \qw & \qw & \qw & \ctrl{1}& \gate{X} & \gate{H} & \qw\\
  \lstick{$\ket{0}$}                  & \qw      & \targ{}  & \ctrl{1} & \qw & \qw & \qw& \ctrl{1}& \targ{}      & \qw & \qw & \qw  & \targ{}  & \ctrl{1} & \qw & \qw & \qw& \ctrl{1}& \targ{}      & \qw & \qw & \qw\\
  \lstick{$q_3$}                & \gate{X} & \qw      &\ctrl{1} & \qw & \qw & \qw& \ctrl{1} & \qw & \gate{X}                   & \gate{H} & \gate{X} &\qw &\ctrl{1} & \qw & \qw & \qw& \ctrl{1} & \qw&  \gate{X} & \gate{H} & \qw\\
  \lstick{$\ket{0}$}                  & \qw      & \qw      & \targ{}  & \ctrl{1} & \qw & \ctrl{1}& \targ{}      & \qw & \qw & \qw  & \qw & \qw  & \targ{}  & \ctrl{1} & \qw & \ctrl{1}& \targ{} & \qw & \qw & \qw & \qw    \\
  \setwiretype{n}                     &          &          &  &  &  & &                             &          &          &           &          &      &  &  &  & &     \\[-1.3em]
  \hspace*{-5mm}\vdots\hspace*{5mm}\setwiretype{n} & \vdots   & \vdots   &  \vdots                     & \vdots   & \vdots   & \vdots    & \vdots   &  \vdots& \vdots   &  \vdots& \vdots   &  \vdots& \vdots   &  \vdots& \vdots   &  \vdots& \vdots   &  \vdots& \vdots   &  \vdots  \\[-.8em]
  \setwiretype{n}                     &          &  &                             &  &          &  &                             &          &          &           &          & &          &           &          &          \\[-0.5em]
  \lstick{$q_n$}          & \gate{X} & \qw & \qw & \qw & \ctrl{-1} & \qw & \qw & \qw& \gate{X}                    & \gate{H} & \gate{X} & \qw & \qw & \qw & \ctrl{-1}  & \qw & \qw & \qw & \gate{X} & \gate{H} & \qw\\
  \lstick{$\ket{0}$}                  & \qw      & \qw & \qw & \qw & \ctrl{-1} & \qw & \qw & \qw                         & \qw     & \qw      & \qw       & \qw & \qw & \qw      & \qw       & \qw      & \qw & \qw      & \qw        & \qw & \qw
\end{quantikz}
    }
    \caption{Circuit~$C_2$ for Grover's algorithm without multi-control gates}
    \label{fig:Grover_iter_c}
  \end{subfigure}  
  \caption{Two implementations of a~single iteration of
  Grover's algorithm with the solution $\ket{0^n}$.}
  \label{fig:Grover_iter_subfig}
\end{figure}
}

%*******************************************************************************
\vspace{-0.0mm}
\subsection{Amplitude Amplification Circuits}\label{sec:grover}
\vspace{-0.0mm}
%*******************************************************************************

\begin{figure}[t]
  \begin{subfigure}[b]{0.37\textwidth}
    \scalebox{0.51}{
      \begin{quantikz}[row sep=0.4cm]
  \lstick{$q_1$} & \gate{X}\grovergroup{Oracle}{10}{3}{\oraclecolor} & \ctrl{1} & \gate{X}  & \gate{H} \grovergroup{Diffuser}{10}{5}{\diffusercolor}& \gate{X} & \ctrl{1}  & \gate{X} & \gate{H}  & \qw\\
  \lstick{$q_2$}                & \gate{X} & \ctrl{2} & \gate{X}                   & \gate{H} & \gate{X} & \ctrl{2}  & \gate{X} & \gate{H}  & \qw\\
  \lstick{$\ket{0}$}                & \qw & \qw & \qw  & \qw & \qw & \qw  & \qw & \qw & \qw \\
  \lstick{$q_3$}                & \gate{X} & \ctrl{2} & \gate{X}                   & \gate{H} & \gate{X} & \ctrl{2}  & \gate{X} & \gate{H} & \qw \\
  \lstick{$\ket{0}$}                & \qw & \qw & \qw  & \qw & \qw & \qw  & \qw & \qw & \qw \\
  \setwiretype{n}                     &          &          &                             &          &          &           &          &          \\[-1.3em]
  \hspace*{-5mm}\vdots\hspace*{5mm}\setwiretype{n} & \vdots   & \vdots   &  \vdots                     & \vdots   & \vdots   & \vdots    & \vdots   &  \vdots  \\[-.8em]
  \setwiretype{n}                     &          & \vqw{1} &                             &          &          &           &          &          \\[-0.5em]
  \lstick{$q_n$}          & \gate{X} & \ctrl{1} & \gate{X}                    & \gate{H} & \gate{X} & \ctrl{-1}  & \gate{X} & \gate{H} & \qw \\
  \lstick{$\ket{0}$}                  & \qw      & \ctrl{-1} & \qw                         & \qw     & \qw      & \qw       & \qw      & \qw & \qw 
\end{quantikz}
    }
    \caption{Basic circuit~$C_1$ for Grover's algorithm}
    \label{fig:Grover_iter_b}
  \end{subfigure}
  \begin{subfigure}[b]{.62\textwidth}
    \scalebox{0.46}{
      \begin{quantikz}[row sep=0.4cm]
  \lstick{$q_1$}  & \gate{X}\grovergroup{Oracle}{10}{9}{\oraclecolor} &
  \ctrl{1} & \qw & \qw & \qw & \qw & \qw & \ctrl{1}& \gate{X} & \gate{H}\grovergroup{Diffuser}{10}{11}{\diffusercolor} & \gate{X} & \ctrl{1} & \qw & \qw & \qw & \qw & \qw & \ctrl{1}& \gate{X} & \gate{H} & \qw\\
  \lstick{$q_2$}                & \gate{X} & \ctrl{1} & \qw & \qw & \qw & \qw & \qw & \ctrl{1}& \gate{X} & \gate{H} & \gate{X} & \ctrl{1} & \qw & \qw & \qw & \qw & \qw & \ctrl{1}& \gate{X} & \gate{H} & \qw\\
  \lstick{$\ket{0}$}                  & \qw      & \targ{}  & \ctrl{1} & \qw & \qw & \qw& \ctrl{1}& \targ{}      & \qw & \qw & \qw  & \targ{}  & \ctrl{1} & \qw & \qw & \qw& \ctrl{1}& \targ{}      & \qw & \qw & \qw\\
  \lstick{$q_3$}                & \gate{X} & \qw      &\ctrl{1} & \qw & \qw & \qw& \ctrl{1} & \qw & \gate{X}                   & \gate{H} & \gate{X} &\qw &\ctrl{1} & \qw & \qw & \qw& \ctrl{1} & \qw&  \gate{X} & \gate{H} & \qw\\
  \lstick{$\ket{0}$}                  & \qw      & \qw      & \targ{}  & \ctrl{1} & \qw & \ctrl{1}& \targ{}      & \qw & \qw & \qw  & \qw & \qw  & \targ{}  & \ctrl{1} & \qw & \ctrl{1}& \targ{} & \qw & \qw & \qw & \qw    \\
  \setwiretype{n}                     &          &          &  &  &  & &                             &          &          &           &          &      &  &  &  & &     \\[-1.3em]
  \hspace*{-5mm}\vdots\hspace*{5mm}\setwiretype{n} & \vdots   & \vdots   &  \vdots                     & \vdots   & \vdots   & \vdots    & \vdots   &  \vdots& \vdots   &  \vdots& \vdots   &  \vdots& \vdots   &  \vdots& \vdots   &  \vdots& \vdots   &  \vdots& \vdots   &  \vdots  \\[-.8em]
  \setwiretype{n}                     &          &  &                             &  &          &  &                             &          &          &           &          & &          &           &          &          \\[-0.5em]
  \lstick{$q_n$}          & \gate{X} & \qw & \qw & \qw & \ctrl{-1} & \qw & \qw & \qw& \gate{X}                    & \gate{H} & \gate{X} & \qw & \qw & \qw & \ctrl{-1}  & \qw & \qw & \qw & \gate{X} & \gate{H} & \qw\\
  \lstick{$\ket{0}$}                  & \qw      & \qw & \qw & \qw & \ctrl{-1} & \qw & \qw & \qw                         & \qw     & \qw      & \qw       & \qw & \qw & \qw      & \qw       & \qw      & \qw & \qw      & \qw        & \qw & \qw
\end{quantikz}
    }
    \caption{Circuit~$C_2$ for Grover's algorithm without multi-control gates}
    \label{fig:Grover_iter_c}
  \end{subfigure}  
  \caption{Two implementations of a~single iteration of
  Grover's algorithm with the solution $\ket{0^n}$.}
  \label{fig:Grover_iter_subfig}
\end{figure}
   %%%%%%%%%%%%%%%%%%%

Amplitude amplification algorithms form a fundamental class of quantum
search algorithms that may provide speedups (though mostly polynomial) over
classical algorithms~\cite{Grover96,BrassardHMT00,BrassardHT98}.
Among them, the best-known example is \emph{Grover's algorithm}~\cite{Grover96},
which operates iteratively: with each iteration, the probability of successfully
finding the correct answer increases (until a~certain bound).
The basic (parameterized) circuit~$C_1$ for one iteration of Grover's algorithm is
shown in~\cref{fig:Grover_iter_b}.

In practice, the circuit we will use may deviate from the basic form.
For instance, we might replace all multi-controlled gates with those supported
by a~concrete hardware, e.g., with a~cascade of~$\ccnot$ gates, in the circuit~$C_2$ in
\cref{fig:Grover_iter_c}.
In order to verify correctness of such a~modification for any circuits in the
families, we will leverage the SWTA-based framework in the following way:
We will construct the SWTA~$\autbases$ that accepts trees encoding all basis
states where ancillas are set to zero (the construction is simple).
Then, we perform two computations:
\begin{inparaenum}[(i)]
  \item  $\but_1 = C_1(\autbases)$ and
  \item  $\but_2 = C_2(\autbases)$.
\end{inparaenum}
One can easily see that all of the required parameterized (and standard) gates
can be implemented using WTTs.
In order to establish correctness
\begin{changebar}
of the second circuit,
\end{changebar}
it is enough to check whether~$\but_1$
and~$\but_2$ are functionally equivalent, $\semof{\but_1} = \semof{\but_2}$.

\newcommand{
\begin{figure}[t]
    \centering
    \begin{subfigure}[t]{.17\textwidth}
        \scalebox{.55}{
            \begin{quantikz}[row sep=0.4cm]
                \qw &\ctrl{1} & \qw & \ctrl{1} &\qw  \\
                \qw &\targ{}    &\gate{R_z(2\delta)}   &\targ{} &\qw \\
            \end{quantikz}
        }
        \caption{$\rzz$ box}
        \label{fig:eZZ-gate}
    \end{subfigure}
 %   \hfill
    %    \vspace{0.5cm} % Adjust space between a and b
    \hfill
    \begin{subfigure}[t]{.20\textwidth}
        \scalebox{.55}{
            \begin{quantikz}[row sep=0.1cm]
                \qw  &\gate{H} & \gate[2]{\rzz} &\gate{H} &\qw  \\
                \qw  &\gate{H}    &   &\gate{H} &\qw \\
            \end{quantikz}
        }
        \caption{$\rxx$ box}
        \label{fig:eXX-gate}
    \end{subfigure}
%    \hfill 
    %    \vspace{0.5cm} % Adjust space between a and b
    \hfill
    \begin{subfigure}[t]{.28\textwidth}
        \scalebox{.55}{
            \begin{quantikz}[row sep=0.1cm]
                \qw    &\gate{S} &\gate{H} & \gate[2]{\rzz} &\gate{H}&\gate{S} &\qw  \\
                \qw &\gate{S}&\gate{H}    &   &\gate{H}&\gate{S} &\qw \\
            \end{quantikz}
        }
        \caption{$\ryy$ box}
        \label{fig:eYY-gate}
    \end{subfigure}
    \hfill
\begin{subfigure}[t]{.23\textwidth}
        \scalebox{.55}{
    \begin{quantikz}[row sep=0.4cm]
        \qw &\ctrl{1} & \qw & \ctrl{1} &\qw &\qw  \\
        \qw &\targ{}    &\gate{R_z(2\delta)}   &\targ{} &\gate{X} &\qw \\
    \end{quantikz}
}
\caption{$U_{zz}(2\delta)$ box}
\label{fig:uZZ-gate}
\end{subfigure}

\bigskip

    \begin{subfigure}{\textwidth}
        \centering
        \scalebox{.6}{
          \begin{quantikz}[row sep=.1cm]
    \lstick{} & \gate[2]{\rxx}\gategroup[wires=6,steps=13,
     style={dashed,draw=red},
     label style={label position=above,yshift=0.3em}]{repeat $r$ times}
    & \qw &\qw &\qw %&\qw 
    & \gate[2]{\ryy}& \qw &\qw &\qw %&\qw 
    & \gate[2]{\rzz}& \qw &\qw &\qw &\qw 
    \\
    \lstick{} & \qw                  & \gate[2]{\rxx} & \qw &\qw %&\qw  
    & \qw & \gate[2]{\ryy } & \qw &\qw %&\qw
    & \qw & \gate[2]{\rzz } & \qw &\qw &\qw
    \\
    \lstick{} & \qw                   & \qw &\qw &\qw %&\qw  
    & \qw & \qw &\qw &\qw %&\qw 
    & \qw & \qw &\qw &\qw&\qw 
    \\
    \setwiretype{n}    & \vdots             & \vdots &\vdots &\vdots %&\vdots  
    & \vdots             & \vdots &\vdots &\vdots % &\vdots  
    & \vdots             & \vdots &\vdots &\vdots &\vdots  
    \\
    \lstick{} & \qw                  &\qw   & \qw &\gate[2]{\rxx } %&\qw   
    & \qw                  &\qw   & \qw &\gate[2]{\ryy } %&\qw   
    & \qw                  &\qw   & \qw &\gate[2]{\rzz }&\qw   
    \\
    \lstick{} & \qw                   & \qw &\qw &\qw %&\qw  
    & \qw                   & \qw &\qw &\qw %&\qw 
    & \qw                   & \qw &\qw &\qw&\qw 
\end{quantikz}
        }
        \caption{The $\Uheis$ circuit composed of $\rxx$, $\ryy$, and $\rzz$ boxes.}
        \label{fig:heis-circuit}
    \end{subfigure}
    \vspace{-6mm}
    \caption{The circuit implementing the first-order approximation of
    $e^{-i\Hheis \delta}$, where $\delta = \smallfrac t r$ for some large
    integer~$r$ that sets the precision of simulation,
    $\rz = \textstyle \left[\begin{smallmatrix}
    e^{-i\delta} & 0 \\ 0 & e^{i\delta}
\end{smallmatrix}\right]$.
  We pick $t= \pi$ and $r=4$ in the experiment.}
    %        \label{fig:2q-interact-gates}
\end{figure}
}[0]{
\begin{figure}[t]
    \centering
    \begin{subfigure}[t]{.17\textwidth}
        \scalebox{.55}{
            \begin{quantikz}[row sep=0.4cm]
                \qw &\ctrl{1} & \qw & \ctrl{1} &\qw  \\
                \qw &\targ{}    &\gate{R_z(2\delta)}   &\targ{} &\qw \\
            \end{quantikz}
        }
        \caption{$\rzz$ box}
        \label{fig:eZZ-gate}
    \end{subfigure}
 %   \hfill
    %    \vspace{0.5cm} % Adjust space between a and b
    \hfill
    \begin{subfigure}[t]{.20\textwidth}
        \scalebox{.55}{
            \begin{quantikz}[row sep=0.1cm]
                \qw  &\gate{H} & \gate[2]{\rzz} &\gate{H} &\qw  \\
                \qw  &\gate{H}    &   &\gate{H} &\qw \\
            \end{quantikz}
        }
        \caption{$\rxx$ box}
        \label{fig:eXX-gate}
    \end{subfigure}
%    \hfill 
    %    \vspace{0.5cm} % Adjust space between a and b
    \hfill
    \begin{subfigure}[t]{.28\textwidth}
        \scalebox{.55}{
            \begin{quantikz}[row sep=0.1cm]
                \qw    &\gate{S} &\gate{H} & \gate[2]{\rzz} &\gate{H}&\gate{S} &\qw  \\
                \qw &\gate{S}&\gate{H}    &   &\gate{H}&\gate{S} &\qw \\
            \end{quantikz}
        }
        \caption{$\ryy$ box}
        \label{fig:eYY-gate}
    \end{subfigure}
    \hfill
\begin{subfigure}[t]{.23\textwidth}
        \scalebox{.55}{
    \begin{quantikz}[row sep=0.4cm]
        \qw &\ctrl{1} & \qw & \ctrl{1} &\qw &\qw  \\
        \qw &\targ{}    &\gate{R_z(2\delta)}   &\targ{} &\gate{X} &\qw \\
    \end{quantikz}
}
\caption{$U_{zz}(2\delta)$ box}
\label{fig:uZZ-gate}
\end{subfigure}

\bigskip

    \begin{subfigure}{\textwidth}
        \centering
        \scalebox{.6}{
          \begin{quantikz}[row sep=.1cm]
    \lstick{} & \gate[2]{\rxx}\gategroup[wires=6,steps=13,
     style={dashed,draw=red},
     label style={label position=above,yshift=0.3em}]{repeat $r$ times}
    & \qw &\qw &\qw %&\qw 
    & \gate[2]{\ryy}& \qw &\qw &\qw %&\qw 
    & \gate[2]{\rzz}& \qw &\qw &\qw &\qw 
    \\
    \lstick{} & \qw                  & \gate[2]{\rxx} & \qw &\qw %&\qw  
    & \qw & \gate[2]{\ryy } & \qw &\qw %&\qw
    & \qw & \gate[2]{\rzz } & \qw &\qw &\qw
    \\
    \lstick{} & \qw                   & \qw &\qw &\qw %&\qw  
    & \qw & \qw &\qw &\qw %&\qw 
    & \qw & \qw &\qw &\qw&\qw 
    \\
    \setwiretype{n}    & \vdots             & \vdots &\vdots &\vdots %&\vdots  
    & \vdots             & \vdots &\vdots &\vdots % &\vdots  
    & \vdots             & \vdots &\vdots &\vdots &\vdots  
    \\
    \lstick{} & \qw                  &\qw   & \qw &\gate[2]{\rxx } %&\qw   
    & \qw                  &\qw   & \qw &\gate[2]{\ryy } %&\qw   
    & \qw                  &\qw   & \qw &\gate[2]{\rzz }&\qw   
    \\
    \lstick{} & \qw                   & \qw &\qw &\qw %&\qw  
    & \qw                   & \qw &\qw &\qw %&\qw 
    & \qw                   & \qw &\qw &\qw&\qw 
\end{quantikz}
        }
        \caption{The $\Uheis$ circuit composed of $\rxx$, $\ryy$, and $\rzz$ boxes.}
        \label{fig:heis-circuit}
    \end{subfigure}
    \vspace{-6mm}
    \caption{The circuit implementing the first-order approximation of
    $e^{-i\Hheis \delta}$, where $\delta = \smallfrac t r$ for some large
    integer~$r$ that sets the precision of simulation,
    $\rz = \textstyle \left[\begin{smallmatrix}
    e^{-i\delta} & 0 \\ 0 & e^{i\delta}
\end{smallmatrix}\right]$.
  We pick $t= \pi$ and $r=4$ in the experiment.}
    %        \label{fig:2q-interact-gates}
\end{figure}
}

\newcommand{
\begin{figure}[t]
% \centering
% \begin{subfigure}{.3\textwidth}
% \centering
%         \scalebox{.65}{
%     \begin{quantikz}[row sep=0.1cm]
%         \qw &\ctrl{1} & \qw & \ctrl{1} &\qw &\qw  \\
%         \qw &\targ{}    &\gate{R_z(2\delta)}   &\targ{} &\gate{X} &\qw \\
%     \end{quantikz}
% }
% \caption{$U_{zz}(2\delta)$ box}
% \label{fig:uZZ-gate}
% \end{subfigure}
% \\ \bigskip
  \resizebox{\textwidth}{!}{
    \begin{quantikz}[row sep=.1cm]
%\lstick{} & \gate{H} 
%& \gate[4]{\rzz^{\tdstair}} 
%&\gate{\sqrt{X}} &\gate[4]{U^{\tdstair}_{zz}(2\delta)} && \gate{H} & \gate{S} 
%&\gate[4]{\rzz^{\tdstair}} &\qw \\
%\lstick{} & \gate{H} & 
%&\gate{\sqrt{X}} & && \gate{H} & \gate{S} 
%& &\qw \\
%\lstick{n-3$\{$ } & \vdots &
%&\vdots & &\qwbundle{n-3} &\vdots &\vdots 
%& &\vdots \\
%\lstick{} & \gate{H} & 
%&\gate{\sqrt{X}} & &\gate{X}& \gate{H} & \gate{S} 
%& &\qw \\
\lstick{} &\gate{H}
\gategroup[wires=6,steps=15,
style={dashed,draw=red},
label style={label position=above,yshift=0.3em}]{repeat $r$ times}
&\gate[2]{\rzz}&&
&\gate{\sqrt{X}}
&\gate[2]{U_{zz}(2\delta)}&&&
&\gate{H}&\gate{S}
&\gate[2]{\rzz}&&&\qw
\\
\lstick{} &\gate{H} 
&&\gate[2]{\rzz}&
&\gate{\sqrt{X}}
&&\gate[2]{U_{zz}(2\delta)}&&
&\gate{H}&\gate{S}
&&\gate[2]{\rzz}&&\qw
\\
\lstick{} &\gate{H} 
&&&
&\gate{\sqrt{X}}
&&&&
&\gate{H}&\gate{S}
&&&&\qw
\\
\setwiretype{n}   &\vdots    
&\vdots&\vdots&\vdots 
&\vdots
&\vdots&\vdots&\vdots &\vdots
&\vdots &\vdots
&\vdots&\vdots&\vdots &\vdots
\\
\lstick{} &\gate{H} 
&&&\gate[2]{\rzz}
&\gate{\sqrt{X}}
&&&\gate[2]{U_{zz}(2\delta)} &
&\gate{H}&\gate{S}
&&&\gate[2]{\rzz}&\qw
\\
\lstick{} &\gate{H} 
&&&
&\gate{\sqrt{X}}
&&&&\gate{X}
&\gate{H}&\gate{S}
&&&&\qw
\end{quantikz}
  }
  \vspace{-2mm}
\caption{An optimized version of~$\Uheis$ obtained using the
  equalities $\had \cdot \had = \idgate, \had \cdot \sgate \cdot \sgate \cdot
  \had = \xgate$ and $\had \cdot \sgate \cdot \had = \sqrt{\xgate}$.}
  \label{fig:heis-circuit-opt}
\end{figure}
}[0]{
\begin{figure}[t]
% \centering
% \begin{subfigure}{.3\textwidth}
% \centering
%         \scalebox{.65}{
%     \begin{quantikz}[row sep=0.1cm]
%         \qw &\ctrl{1} & \qw & \ctrl{1} &\qw &\qw  \\
%         \qw &\targ{}    &\gate{R_z(2\delta)}   &\targ{} &\gate{X} &\qw \\
%     \end{quantikz}
% }
% \caption{$U_{zz}(2\delta)$ box}
% \label{fig:uZZ-gate}
% \end{subfigure}
% \\ \bigskip
  \resizebox{\textwidth}{!}{
    \begin{quantikz}[row sep=.1cm]
%\lstick{} & \gate{H} 
%& \gate[4]{\rzz^{\tdstair}} 
%&\gate{\sqrt{X}} &\gate[4]{U^{\tdstair}_{zz}(2\delta)} && \gate{H} & \gate{S} 
%&\gate[4]{\rzz^{\tdstair}} &\qw \\
%\lstick{} & \gate{H} & 
%&\gate{\sqrt{X}} & && \gate{H} & \gate{S} 
%& &\qw \\
%\lstick{n-3$\{$ } & \vdots &
%&\vdots & &\qwbundle{n-3} &\vdots &\vdots 
%& &\vdots \\
%\lstick{} & \gate{H} & 
%&\gate{\sqrt{X}} & &\gate{X}& \gate{H} & \gate{S} 
%& &\qw \\
\lstick{} &\gate{H}
\gategroup[wires=6,steps=15,
style={dashed,draw=red},
label style={label position=above,yshift=0.3em}]{repeat $r$ times}
&\gate[2]{\rzz}&&
&\gate{\sqrt{X}}
&\gate[2]{U_{zz}(2\delta)}&&&
&\gate{H}&\gate{S}
&\gate[2]{\rzz}&&&\qw
\\
\lstick{} &\gate{H} 
&&\gate[2]{\rzz}&
&\gate{\sqrt{X}}
&&\gate[2]{U_{zz}(2\delta)}&&
&\gate{H}&\gate{S}
&&\gate[2]{\rzz}&&\qw
\\
\lstick{} &\gate{H} 
&&&
&\gate{\sqrt{X}}
&&&&
&\gate{H}&\gate{S}
&&&&\qw
\\
\setwiretype{n}   &\vdots    
&\vdots&\vdots&\vdots 
&\vdots
&\vdots&\vdots&\vdots &\vdots
&\vdots &\vdots
&\vdots&\vdots&\vdots &\vdots
\\
\lstick{} &\gate{H} 
&&&\gate[2]{\rzz}
&\gate{\sqrt{X}}
&&&\gate[2]{U_{zz}(2\delta)} &
&\gate{H}&\gate{S}
&&&\gate[2]{\rzz}&\qw
\\
\lstick{} &\gate{H} 
&&&
&\gate{\sqrt{X}}
&&&&\gate{X}
&\gate{H}&\gate{S}
&&&&\qw
\end{quantikz}
  }
  \vspace{-2mm}
\caption{An optimized version of~$\Uheis$ obtained using the
  equalities $\had \cdot \had = \idgate, \had \cdot \sgate \cdot \sgate \cdot
  \had = \xgate$ and $\had \cdot \sgate \cdot \had = \sqrt{\xgate}$.}
  \label{fig:heis-circuit-opt}
\end{figure}
}

%*******************************************************************************
\vspace{-0.0mm}
\subsection{Circuit for Hamiltonian Simulation}\label{sec:hamiltonian}
\vspace{-0.0mm}
%*******************************************************************************

\begin{figure}[t]
    \centering
    \begin{subfigure}[t]{.17\textwidth}
        \scalebox{.55}{
            \begin{quantikz}[row sep=0.4cm]
                \qw &\ctrl{1} & \qw & \ctrl{1} &\qw  \\
                \qw &\targ{}    &\gate{R_z(2\delta)}   &\targ{} &\qw \\
            \end{quantikz}
        }
        \caption{$\rzz$ box}
        \label{fig:eZZ-gate}
    \end{subfigure}
 %   \hfill
    %    \vspace{0.5cm} % Adjust space between a and b
    \hfill
    \begin{subfigure}[t]{.20\textwidth}
        \scalebox{.55}{
            \begin{quantikz}[row sep=0.1cm]
                \qw  &\gate{H} & \gate[2]{\rzz} &\gate{H} &\qw  \\
                \qw  &\gate{H}    &   &\gate{H} &\qw \\
            \end{quantikz}
        }
        \caption{$\rxx$ box}
        \label{fig:eXX-gate}
    \end{subfigure}
%    \hfill 
    %    \vspace{0.5cm} % Adjust space between a and b
    \hfill
    \begin{subfigure}[t]{.28\textwidth}
        \scalebox{.55}{
            \begin{quantikz}[row sep=0.1cm]
                \qw    &\gate{S} &\gate{H} & \gate[2]{\rzz} &\gate{H}&\gate{S} &\qw  \\
                \qw &\gate{S}&\gate{H}    &   &\gate{H}&\gate{S} &\qw \\
            \end{quantikz}
        }
        \caption{$\ryy$ box}
        \label{fig:eYY-gate}
    \end{subfigure}
    \hfill
\begin{subfigure}[t]{.23\textwidth}
        \scalebox{.55}{
    \begin{quantikz}[row sep=0.4cm]
        \qw &\ctrl{1} & \qw & \ctrl{1} &\qw &\qw  \\
        \qw &\targ{}    &\gate{R_z(2\delta)}   &\targ{} &\gate{X} &\qw \\
    \end{quantikz}
}
\caption{$U_{zz}(2\delta)$ box}
\label{fig:uZZ-gate}
\end{subfigure}

\bigskip

    \begin{subfigure}{\textwidth}
        \centering
        \scalebox{.6}{
          \begin{quantikz}[row sep=.1cm]
    \lstick{} & \gate[2]{\rxx}\gategroup[wires=6,steps=13,
     style={dashed,draw=red},
     label style={label position=above,yshift=0.3em}]{repeat $r$ times}
    & \qw &\qw &\qw %&\qw 
    & \gate[2]{\ryy}& \qw &\qw &\qw %&\qw 
    & \gate[2]{\rzz}& \qw &\qw &\qw &\qw 
    \\
    \lstick{} & \qw                  & \gate[2]{\rxx} & \qw &\qw %&\qw  
    & \qw & \gate[2]{\ryy } & \qw &\qw %&\qw
    & \qw & \gate[2]{\rzz } & \qw &\qw &\qw
    \\
    \lstick{} & \qw                   & \qw &\qw &\qw %&\qw  
    & \qw & \qw &\qw &\qw %&\qw 
    & \qw & \qw &\qw &\qw&\qw 
    \\
    \setwiretype{n}    & \vdots             & \vdots &\vdots &\vdots %&\vdots  
    & \vdots             & \vdots &\vdots &\vdots % &\vdots  
    & \vdots             & \vdots &\vdots &\vdots &\vdots  
    \\
    \lstick{} & \qw                  &\qw   & \qw &\gate[2]{\rxx } %&\qw   
    & \qw                  &\qw   & \qw &\gate[2]{\ryy } %&\qw   
    & \qw                  &\qw   & \qw &\gate[2]{\rzz }&\qw   
    \\
    \lstick{} & \qw                   & \qw &\qw &\qw %&\qw  
    & \qw                   & \qw &\qw &\qw %&\qw 
    & \qw                   & \qw &\qw &\qw&\qw 
\end{quantikz}
        }
        \caption{The $\Uheis$ circuit composed of $\rxx$, $\ryy$, and $\rzz$ boxes.}
        \label{fig:heis-circuit}
    \end{subfigure}
    \vspace{-6mm}
    \caption{The circuit implementing the first-order approximation of
    $e^{-i\Hheis \delta}$, where $\delta = \smallfrac t r$ for some large
    integer~$r$ that sets the precision of simulation,
    $\rz = \textstyle \left[\begin{smallmatrix}
    e^{-i\delta} & 0 \\ 0 & e^{i\delta}
\end{smallmatrix}\right]$.
  We pick $t= \pi$ and $r=4$ in the experiment.}
    %        \label{fig:2q-interact-gates}
\end{figure}
   %%%%%%%%%%%%%%%%%%%

\emph{Hamiltonian simulation} is a fundamental computational task in quantum
computing, underpinning a broad class of quantum algorithms with applications in
physics, chemistry, and material science. At the core is the approximation of
the unitary evolution $e^{-i\hamilton t}$, where $\hamilton$ is a~Hermitian
matrix (the Hamiltonian) describing the dynamics of a quantum system, and~$t$ is
a~real-valued time parameter.
Efficient circuit implementations of this operator are critical for simulating
quantum systems. % on quantum hardware.

As an example, we focus on the simulation of a \emph{1D Heisenberg spin
chain}~\cite{1DHeis}, a well-studied model whose Hamiltonian is $\Hheis =
\sum_{j=1}^{n-1} \left( X_j X_{j+1} + Y_j Y_{j+1} + Z_j Z_{j+1} \right)$.
The evolution operator $e^{-i{\Hheis}t}$ can be approximated using the
\begin{changebar}
parameterized
\end{changebar}
circuit~$\Uheis$ obtained via a \emph{first-order Trotter-Suzuki
decomposition}~\cite[Sec.~4.7.2]{NielsenC16} shown in \cref{fig:heis-circuit}.

For efficiency, quantum circuits are often further optimized before being
executed on hardware.
In~\cref{fig:heis-circuit-opt}, we show an optimized version of the original
circuit.
Our task is to verify that the optimized versions of the circuits are equivalent
to the original ones.
We test the equivalence in the same was as in~\cref{sec:grover}.
For constructing the necessary transducers for the staircase sequences of boxes,
we use the algorithm from \cref{sec:composition_par}.

%%%%%%%%%%%%%%%%%%%%%%%%%%%%%%%%%%%%%%%%%%%%%%%%%%%%%%%%%%%%%
\newcommand{\tabResults}[0]{
\begin{wraptable}[9]{r}{0.30\textwidth}
  \vspace*{-4mm}
\caption{Results of verification of our case studies.}
\label{tab:results}
\vspace{-3mm}
\scalebox{0.8}{
\begin{tabular}{lr}
  \toprule
  \multicolumn{1}{c}{\bf circuit} & \multicolumn{1}{c}{\bf time}\\
  \midrule
  BV & 0.014\,s \\
  Grover & 0.088\,s \\
  Adder & 11.007\,s \\
  QECC & 0.314\,s \\
  Hamiltonian simulation &  0.663\,s \\
  \bottomrule
\end{tabular}
}
\end{wraptable}
% \begin{table}[t]
% \caption{\ol{results}}
% \begin{center}
% \begin{tabular}{lr}
%   \toprule
%   \multicolumn{1}{c}{\bf circuit} & \multicolumn{1}{c}{\bf time [s]}\\
%   \midrule
%   BV (\cref{sec:BV}) & 0.014 \\
%   Grover (\cref{sec:grover}) & 0.088 \\
%   Adder (\cref{sec:arithmetic}) & 11.007 \\
%   QECC (\cref{sec:qecc})& 0.314 \\
%   Hamiltonian simulation (\cref{sec:hamiltonian}) &  0.663 \\
%   \bottomrule
% \end{tabular}
% \end{center}
% \end{table}
}

%%%%%%%%%%%%%%%%%%%%%%%%%%%%%%%%%%%%%%%%%%%%%%%%%%%%%%%%%%%%%
\vspace{-0.0mm}
\subsection{Experimental Evaluation}\label{sec:experiments}
\vspace{-0.0mm}
%%%%%%%%%%%%%%%%%%%%%%%%%%%%%%%%%%%%%%%%%%%%%%%%%%%%%%%%%%%%%%%%%%%%%%%%%%%%%%%%

\tabResults    %%%%%%%%%%%%%%%%
We implemented the techniques described in the paper in a~prototype tool
named \tool~\cite{tool}\footnote{To ensure accuracy, we restrict our attention
to the subset of complex numbers that admit an algebraic representation.
See~\trOrAppendix{sec:complex} for details.}
and used it to evaluate the case studies described in
\cref{sec:use-cases}.
The experiments were evaluated on a~computer with the
Intel i7-1365U CPU and 32\,GiB of RAM running Fedora Linux~42.
The results are given in \cref{tab:results}.
We are not aware of any other tool supporting fully automated
parameterized verification of quantum circuits that we could compare with.
From the table, you can see that most of the circuits were verified
quickly, with the longest time taken by the adder circuit
(\cref{sec:arithmetic}).
The adder took the longest due to the complexity of the circuit (compared to
other circuits) and the corresponding
postcondition, which made the constructed SWTAs and WTTs large.
Our implementation is an early prototype and there are many
opportunities to make it faster.

\begin{figure}[t]
% \centering
% \begin{subfigure}{.3\textwidth}
% \centering
%         \scalebox{.65}{
%     \begin{quantikz}[row sep=0.1cm]
%         \qw &\ctrl{1} & \qw & \ctrl{1} &\qw &\qw  \\
%         \qw &\targ{}    &\gate{R_z(2\delta)}   &\targ{} &\gate{X} &\qw \\
%     \end{quantikz}
% }
% \caption{$U_{zz}(2\delta)$ box}
% \label{fig:uZZ-gate}
% \end{subfigure}
% \\ \bigskip
  \resizebox{\textwidth}{!}{
    \begin{quantikz}[row sep=.1cm]
%\lstick{} & \gate{H} 
%& \gate[4]{\rzz^{\tdstair}} 
%&\gate{\sqrt{X}} &\gate[4]{U^{\tdstair}_{zz}(2\delta)} && \gate{H} & \gate{S} 
%&\gate[4]{\rzz^{\tdstair}} &\qw \\
%\lstick{} & \gate{H} & 
%&\gate{\sqrt{X}} & && \gate{H} & \gate{S} 
%& &\qw \\
%\lstick{n-3$\{$ } & \vdots &
%&\vdots & &\qwbundle{n-3} &\vdots &\vdots 
%& &\vdots \\
%\lstick{} & \gate{H} & 
%&\gate{\sqrt{X}} & &\gate{X}& \gate{H} & \gate{S} 
%& &\qw \\
\lstick{} &\gate{H}
\gategroup[wires=6,steps=15,
style={dashed,draw=red},
label style={label position=above,yshift=0.3em}]{repeat $r$ times}
&\gate[2]{\rzz}&&
&\gate{\sqrt{X}}
&\gate[2]{U_{zz}(2\delta)}&&&
&\gate{H}&\gate{S}
&\gate[2]{\rzz}&&&\qw
\\
\lstick{} &\gate{H} 
&&\gate[2]{\rzz}&
&\gate{\sqrt{X}}
&&\gate[2]{U_{zz}(2\delta)}&&
&\gate{H}&\gate{S}
&&\gate[2]{\rzz}&&\qw
\\
\lstick{} &\gate{H} 
&&&
&\gate{\sqrt{X}}
&&&&
&\gate{H}&\gate{S}
&&&&\qw
\\
\setwiretype{n}   &\vdots    
&\vdots&\vdots&\vdots 
&\vdots
&\vdots&\vdots&\vdots &\vdots
&\vdots &\vdots
&\vdots&\vdots&\vdots &\vdots
\\
\lstick{} &\gate{H} 
&&&\gate[2]{\rzz}
&\gate{\sqrt{X}}
&&&\gate[2]{U_{zz}(2\delta)} &
&\gate{H}&\gate{S}
&&&\gate[2]{\rzz}&\qw
\\
\lstick{} &\gate{H} 
&&&
&\gate{\sqrt{X}}
&&&&\gate{X}
&\gate{H}&\gate{S}
&&&&\qw
\end{quantikz}
  }
  \vspace{-2mm}
\caption{An optimized version of~$\Uheis$ obtained using the
  equalities $\had \cdot \had = \idgate, \had \cdot \sgate \cdot \sgate \cdot
  \had = \xgate$ and $\had \cdot \sgate \cdot \had = \sqrt{\xgate}$.}
  \label{fig:heis-circuit-opt}
\end{figure}
   %%%%%%%%%%%%%%%%%%%%%%%%%%%

%%%%%%%%%%%%%%%%%%%%%%%%%%%%%%%%%%%%%%%%%%%%%%%%%%%%%%%%%%%%%%%%%%%%%%%%%%%%%%%%
\vspace{-0.0mm}
\section{Properties and Operations over SWTAs and WTTs}
\vspace{-0.0mm}
%%%%%%%%%%%%%%%%%%%%%%%%%%%%%%%%%%%%%%%%%%%%%%%%%%%%%%%%%%%%%%%%%%%%%%%%%%%%%%%%

In this section, we lay out the operations and decision problems for SWTAs and
WTTs.

%*******************************************************************************
\vspace{-0.0mm}
\subsection{Domain DFA}\label{sec:label}
\vspace{-0.0mm}
%*******************************************************************************

Let~$\aut = \tuple{Q, \delta, \colors, \rootstate, E}$ be an SWTA over~$\abc$.
We define the \emph{domain DFA of~$\aut$}, denoted $\coloraut \aut$, to be
a~\emph{deterministic finite automaton over finite words}
(DFA)~\cite{EsparzaB23} $\coloraut \aut = \tuple{P, \abc\times \colors, \Delta,
\{\rootstate\}, P_f}$ with the set of states $P = 2^Q$, the initial state $\{\rootstate\}$, final states $P_f=\{S \in P\mid S\subseteq E \land S\neq \emptyset\}$, and the transition function $\Delta(S,\tuple{ a,\tcacc{7}{c}}) = \bigcup_{s\in S}\{p \in Q\mid
\transof{s}{a}{\tcacc 7 c}{\linof \ell}{\linof r}\in \delta \land p\in \suppof{\linof \ell}\cup \suppof{\linof r} \}$.
The following theorem establishes the connection between $\coloraut \aut$ and
the domain of~$\semof \aut$.
% the tree function of~$\aut$.

\begin{theorem}\label{thm:domain_dfa}
Let~$\aut$ be an SWTA.
Then $\langof{\coloraut \aut} = \domof{\semof \aut}$.
\end{theorem}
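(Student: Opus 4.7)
The plan is to prove a strengthened invariant by induction on $|v|$: for every $v, u \in (\abc \times \colors)^*$,
\[
  \semof{\aut, \rootstate}(v \cdot u) \neq \bot \;\Longleftrightarrow\; S_v \neq \emptyset \text{ and } \semof{\aut, q}(u) \neq \bot \text{ for every } q \in S_v,
\]
where $S_v$ denotes the state reached by $\coloraut{\aut}$ after reading $v$, so $S_\epsilon = \{\rootstate\}$ and $S_{v' \tuple{a,c}} = \Delta(S_{v'}, \tuple{a,c})$. Specializing this to $u = \epsilon$ and invoking the base case of the SWTA semantics immediately gives the theorem: $v \in \domof{\semof{\aut}}$ iff $\emptyset \neq S_v \subseteq E$, which is exactly the acceptance condition of $\coloraut{\aut}$, so $\langof{\coloraut{\aut}} = \domof{\semof{\aut}}$.

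The base case $v = \epsilon$ is immediate from $S_\epsilon = \{\rootstate\}$ together with $\semof{\aut} = \semof{\aut, \rootstate}$. For the step $v = v' \tuple{a, c}$, I would apply the induction hypothesis to $v'$ with the extended suffix $\tuple{a, c} \cdot u$, reducing the goal to: $\semof{\aut, s}(\tuple{a, c} \cdot u) \neq \bot$ for every $s \in S_{v'}$ iff $\semof{\aut, q}(u) \neq \bot$ for every $q \in \Delta(S_{v'}, \tuple{a, c})$. Unfolding the inductive clause of the SWTA semantics shows that $\semof{\aut, s}(\tuple{a, c} \cdot u) \neq \bot$ holds exactly when $\delta(s, a, c)$ is defined and every state in the union of the supports of its two linear forms has a defined tree function on $u$; this uses that scalar multiplication and tree addition propagate $\bot$, and that the subtrees produced at the same height on the same continuation are pairwise compatible. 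Taking the union of these supports across all $s \in S_{v'}$ yields exactly $\Delta(S_{v'}, \tuple{a, c}) = S_v$, closing the induction.

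The delicate point to watch is when some $s \in S_{v'}$ has no transition on $\tuple{a, c}$: such an $s$ forces $\semof{\aut, s}(\tuple{a, c} \cdot u) = \bot$, so the left-hand side of the invariant must already fail at $v$, and the DFA must reflect this by sending $S$ to a state that cannot lead to acceptance. The cleanest way to align this is to interpret $\Delta(S, \tuple{a, c})$ as landing in a non-accepting sink whenever some $s \in S$ lacks a $\tuple{a, c}$-transition; with that convention the invariant and the inductive step fit together without exception, and the rest of the argument is routine.
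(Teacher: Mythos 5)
The paper states this theorem without a proof, so there is nothing to compare against line by line; judged on its own, your strengthened invariant and the induction on $|v|$ are exactly the right skeleton. You also correctly isolate the only two non-trivial ingredients: that $\bot$ propagates through scalar multiplication and tree addition, and that for a fixed suffix $u$ all defined trees $\semofof{\aut,q}{u}$ have the same height and the same internal labelling and are therefore pairwise compatible, so a linear combination over a support is defined iff every summand is. With these, the forward direction of your inductive step and the specialization to $u=\epsilon$ (matching $\emptyset \neq S_v \subseteq E$ against the acceptance condition $P_f$) go through.

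The ``delicate point'' you flag is, however, not something that can be absorbed by a convention: under the paper's literal definition of $\Delta$, a state $s\in S$ with no transition on $\tuple{a,c}$ contributes the empty set to the union and is silently dropped, so the backward direction of your inductive step genuinely fails, and the theorem as literally stated is false. Concretely, take $Q=\{q_0,q_1,q_2,q_3\}$ with root $q_0$, $E=\{q_3\}$, a single symbol $a$ and single colour $c$, transitions $\delta(q_0,a,c)=(q_1+q_2,\,q_1+q_2)$ and $\delta(q_1,a,c)=(q_3,q_3)$, and no transition from $q_2$. The domain DFA reads $\tuple{a,c}\tuple{a,c}$ into $\{q_1,q_2\}$ and then into $\{q_3\}\subseteq E$ and accepts, yet $\semofof{\aut,q_2}{\tuple{a,c}}=\bot$ forces the linear combination at the second level, and hence $\semofof{\aut}{\tuple{a,c}\tuple{a,c}}$, to be $\bot$. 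Your sink-state reading of $\Delta$ repairs this, but it is a modification of the paper's definition rather than a consequence of it; you should say explicitly that you are proving the statement for the corrected domain DFA, or equivalently under the assumption that every non-leaf state occurring in a reachable DFA state has a transition on every symbol--colour pair on which some other state in that set does. With that caveat made explicit, the rest of your argument is sound.
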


\vspace{-0.0mm}
\subsection{SWTA Language Emptiness}\label{sec:emptiness}
\vspace{-0.0mm}
%*******************************************************************************

The language emptiness problem for SWTAs asks whether a~given SWTA accepts at
least one tree.

\begin{theorem}\label{thm:swta-emptiness}
Given an SWTA~$\aut$, deciding whether $\langof \aut = \emptyset$ is
\pspace-complete.
\end{theorem}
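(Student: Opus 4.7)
The plan is to leverage \cref{thm:domain_dfa} to reduce SWTA emptiness to emptiness of the domain DFA. First observe that $\langof \aut$ is the image of the partial function $\semof \aut$, so $\langof \aut \neq \emptyset$ iff $\domof{\semof \aut} \neq \emptyset$, which by \cref{thm:domain_dfa} equals $\langof{\coloraut \aut} \neq \emptyset$. The problem therefore reduces to deciding emptiness of a DFA whose (exponentially large) state space is $2^Q$ but whose individual states are subsets of $Q$, hence representable in space polynomial in $|\aut|$.

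For the \pspace upper bound, I would run a nondeterministic reachability search over $\coloraut \aut$: start at $\{\rootstate\}$, iteratively guess a symbol-color pair $(a, \tcacc 7 c)$ and compute $\Delta(S, (a, \tcacc 7 c))$ on the fly directly from~$\delta$, accepting upon reaching a non-empty $S \subseteq E$ within $2^{|Q|}$ steps. The whole procedure uses nondeterministic polynomial space, which is \pspace by Savitch's theorem.

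For \pspace-hardness, I would reduce from the classical \pspace-complete problem of intersection emptiness of DFAs. Given DFAs $D_1, \ldots, D_n$ over alphabet $\Sigma$ with pairwise disjoint state sets $Q_i$, initial states $s_i$, and final states $F_i$, construct an SWTA $\aut$ over $\abc = \Sigma$ with a single color $\tacc 1$, states $Q = \{r\} \cup \bigcup_i Q_i$, root $r$, and leaf states $E = \bigcup_i F_i$. The transitions encode each DFA's successor function symmetrically in both subtrees: for every $a \in \Sigma$ put $\transcof r a 1 {\sum_i \delta_i(s_i, a)}{\sum_i \delta_i(s_i, a)}$, and for every $q \in Q_i$ put $\transcof q a 1 {\delta_i(q, a)}{\delta_i(q, a)}$. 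The domain DFA then simulates all $D_i$'s in lockstep on the input word, and after reading $w$ it reaches exactly the subset $\{\delta_i^*(s_i, w) \mid i\}$. This subset is contained in $E$ precisely when every $D_i$ accepts $w$, so $\langof \aut \neq \emptyset$ iff $\bigcap_i \langof{D_i} \neq \emptyset$.

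The main technical subtlety is ensuring the reduction respects the SWTA formalism---specifically, that each transition carries two non-empty linear forms and that the subset-construction semantics of $\coloraut \aut$ faithfully mirrors parallel DFA simulation. The symmetric encoding (identical left and right linear forms) side-steps these issues cleanly, so the remaining work is the inductive verification that the subsets reached by $\coloraut \aut$ after reading $w$ are exactly $\{\delta_i^*(s_i, w) \mid i\}$, which follows by a direct induction on $|w|$ from the definition of $\Delta$.
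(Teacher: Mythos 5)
Your upper bound is exactly the paper's: reduce language emptiness to emptiness of the domain DFA $\coloraut \aut$ via \cref{thm:domain_dfa}, explore its exponentially large but polynomially representable state space on the fly in nondeterministic polynomial space, and close with Savitch's theorem. No difference there.

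Your lower bound, however, takes a genuinely different route. The paper reduces from NFA \emph{universality}: it keeps the NFA's states, turns the NFA's input alphabet into the SWTA's \emph{colors} (over a one-letter tree alphabet), puts the whole successor set $\Delta(q,c)$ into the support of both linear forms, and declares a state a leaf state iff it is \emph{non}-final, so that a tree is generated over a word $w$ exactly when every NFA run on $w$ is rejecting; hence the SWTA language is empty iff the NFA is universal. You reduce from DFA intersection non-emptiness instead, encode the input alphabet as the tree alphabet with a single color, run the $n$ DFAs in lockstep inside one linear form, and make the final states the leaf states. Both reductions exploit the same semantic feature---that \emph{every} state in the support of a linear form must reach a leaf state for the tree to be defined---the paper to quantify universally over nondeterministic runs, you to form a conjunction over the $n$ DFAs. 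Yours is arguably the more direct illustration of that conjunctive behaviour; the paper's shows hardness already for a unary tree alphabet with the complexity pushed into the colors. Either is a legitimate source of \pspace-hardness.

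Two small wrinkles to patch in your reduction. First, since the root $r$ is not a leaf state, your SWTA never accepts the height-zero tree, so what you actually prove is $\langof \aut \neq \emptyset$ iff $\bigcap_i \langof{D_i}$ contains a \emph{non-empty} word; if the intersection is exactly $\{\epsilon\}$ the stated equivalence fails. This is harmless---add $r$ to $E$ when all $s_i \in F_i$, or observe that intersection non-emptiness restricted to non-empty words remains \pspace-hard---but it should be said. Second, you implicitly assume the $D_i$ are complete, so that every $\delta_i(q,a)$ is defined and the linear forms are non-empty; one sentence (complete each DFA with a rejecting sink) suffices, and the paper makes the analogous completeness assumption for its NFA.
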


\begin{proof}
  (\pspace-hardness) The hardness follows by reduction from universality of
  a~\emph{nondeterministic finite automaton} (NFA), which is a~known \pspace-complete
  problem~\cite{EsparzaB23}.
  In particular, consider an~NFA~$\nut = \tuple{Q, \Sigma, \Delta, q_I, F}$,
  where $Q$ is a~set of states, $\Sigma$ is the input (finite) alphabet, $q_I
  \in Q$ is the initial state (w.l.o.g.\ we consider exactly one initial
  state), $F \subseteq Q$ is the set of final states, and $\Delta\colon Q \times
  \Sigma \to 2^Q$ is the NFA transition function.
  We assume that~$\nut$ is complete, i.e., $\Delta(P,a) \neq \emptyset$ for
  any~$P \in Q, a \in \Sigma$.
  We construct the SWTA~$\aut_\nut = \tuple{Q, \delta,\Sigma, q_I, Q
  \setminus F}$ over the alphabet $\{a\}$, i.e., with the same states as~$\nut$ with a~single input
  symbol~$a$ and the set of colours corresponding to the alphabet~$\Sigma$, and
  with a~state being a~leaf state iff it was not a~final state of~$\nut$, where $\delta$ is defined as
  $\delta = \{\transof q a {\tcacc 4 c}{\linof x}{\linof x} \mid q \in Q, c \in
  \Sigma, P = \Delta(q,c), \linof x = \sum_{p \in P} p\}$.
  Then it holds that if there is a~word~$w = b_1 \ldots b_n$ not in $\langof
  \nut$---which means that the set of reachable states in~$\nut$ over~$w$ does
  not contain any state from~$F$---then there will be a~tree $t_w$ such that
  $\semof{\aut_\nut}(w') = t_w$ (since all reached states will be leaf states)
  where $w' = \tuple{a, b_1} \ldots \tuple{a,b_n}$.
  Therefore, $\nut$~is universal iff $\langof{\aut_\nut}$ is empty.

  (\pspace-membership) We reduce the
  \begin{changebar}
  SWTA language emptiness
  \end{changebar}
  problem to checking emptiness of the
  language of the domain DFA $\coloraut \aut$ (obviously, $\langof \aut =
  \emptyset$ iff $\langof{\coloraut \aut} = \emptyset$).
  While $\domof \aut$ can be exponentially larger than~$\aut$ (its construction
  is based on the subset construction), we can explore it on the fly during
  the emptiness check in the standard way,
  \begin{changebar}
  similarly as in NFA universality checking,
  \end{changebar}
  which can be done in nondeterministic polynomial space.
  \pspace membership then follows from Savitch's theorem.
\end{proof}

% Given a SWTA $\aut=\tuple{Q, \delta,\abc, \colors, \rootstate, E}$, we reduce its emptiness problem to an emptiness problem of a standard deterministic finite state automaton (DFA) $\coloraut{\aut}=\tuple{P, \Delta, \abc\times \colors, p_0, P_f}$, where $P = 2^Q$, the initial state $p_0= \{\rootstate\}$, final states $P_f=\{Q \mid Q\subseteq E \wedge Q\neq \emptyset\}$, and the transition function $\Delta(Q,( x,\tcacc{0}{c})) = \bigcup_{q\in Q}\{p\mid
% \transof{q}{x}{\tcacc 0 c}{\linof \ell}{\linof r}\in \delta \wedge p\in \suppof{\linof \ell}\cup \suppof{\linof r} \}$.

% \begin{theorem}\label{thm:swta-emptiness}
% Let $\aut=\tuple{Q, \delta,\abc, \colors, \rootstate, E}$ be a SWTA, and let $\coloraut{\aut}$ be a DFA constructed from $\aut$ according to the procedure described above. 
% For any tree $t$, let $\colorstr{t}$ denote the corresponding color-symbol string representation of $t$. Then, $t\in \lang{(\aut)}$ iff $\colorstr{t} \in \lang{(\coloraut{\aut})}$. Consequently, we have $\lang{(\aut)}=\emptyset$ iff $\lang{(\coloraut{\aut})}=\emptyset$.
% \end{theorem}

%*******************************************************************************
\vspace{-0.0mm}
\subsection{Functional Equivalence and Inclusion Checking}\label{sec:colored_equivalence_checking}
\vspace{-0.0mm}
%*******************************************************************************

\begin{changebar}
Our framework is based on testing functional equivalence and functional
inclusion of two
SWTAs~$\aut$ and~$\but$, i.e., on checking whether $\semof \aut = \semof \but$
and $\semof \aut \subseteq \semof \but$ respectively.
\end{changebar}
%\ol{elaborate?}
%\jal{does the later mean the inclusion between domains?}
%\ol{clarified in the section on SWTAs}
Unlike language equivalence and inclusion (i.e., $\langof \aut = \langof \but$
and $\langof \aut \subseteq \langof \but$)---which are undecidable for SWTAs,
cf.\ \cref{sec:language_equivalence}---their functional counterparts can be
decided efficiently and are usable for verification of parameterized quantum
circuits.

Checking functional equivalence and inclusion are done via a~reduction to
solving the \emph{zero-invariant problem of a~linear transition system}, which
can be done using the so-called Karr's
algorithm~\cite{Karr76,muller2004note,chen2017register}.
Let us begin with defining the necessary concepts.
A \emph{linear transition system} (LTS) with $k$~complex-valued variables arranged in
a~column vector $\overline{x}=(x_1, \ldots, x_k)^T$ is a~tuple $P = \tuple{S, s_0, \overline{v_0},
\ltstr}$, where $S$ is a finite set of states,
$s_0\in S$ is the initial state, $\overline{v_0} \in \complex^k$ is the (column)
vector of initial variable values, and $\ltstr$ is a finite set of transitions
of the form $s_1 \ltr A s_2$, where $s_1, s_2 \in S$ and $A \in
\complex^{k\times k}$ is a~\emph{linear transformation} represented by a $k
\times k$ complex matrix.
A~transition $s_1 \ltr A s_2\in \ltstr$ means that~$P$ can move from
state~$s_1$ to state~$s_2$ while reassigning the contents of variables as
$\overline{x}:=A\overline{x}$.
The \emph{zero-invariant problem} for~$P$ and~$S_t \subseteq S$ asks whether for
all sequences of transitions $s_0 \ltr{T_1} s_1 \ltr{T_2} \ldots
\ltr{T_m} s_m \in \ltstr^*$ such that~$s_m \in S_t$, it holds that $T_{m} \ldots T_2 T_1
\overline{v_0} = \mathbf 0$ where $\mathbf 0$ denotes the $k$-dimensional
zero vector.
Informally, we are asking whether there is a~sequence of transitions from the
initial configuration~$\tuple{s_0, \overline{v_0}}$ that reaches a~state
in~$S_t$ with at least one variable having a~non-zero value.

\begin{theorem}\label{thm:zero-invariant}
The zero-invariant problem for an LTS with~$n$ states and~$k$ variables can be
solved in time $\bigO(nk^3)$ considering a~unit cost of arithmetic operations.
\end{theorem}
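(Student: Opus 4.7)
The plan is to apply a variant of \emph{Karr's algorithm}~\cite{Karr76,muller2004note,chen2017register} that computes, for each state $s \in S$, the smallest linear subspace $V_s \subseteq \complex^k$ containing every vector reachable at $s$, i.e., every $T_m \cdots T_1 \overline{v_0}$ obtained along a path $s_0 \ltr{T_1} \cdots \ltr{T_m} s$. The zero-invariant property for $S_t$ then reduces to checking whether $V_s = \{\mathbf{0}\}$ for every $s \in S_t$: if some reachable valuation at a target state is non-zero, the subspace $V_s$ is non-trivial; conversely, if $V_s = \{\mathbf{0}\}$ then every reachable vector at $s$ is the zero vector. Correctness of this reduction is the easy direction; the meat of the proof is the efficient fixpoint computation of the $V_s$.

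First I would initialize $V_{s_0} := \mathrm{span}\{\overline{v_0}\}$ and $V_s := \{\mathbf{0}\}$ for every other state, and maintain each $V_s$ by a basis kept in reduced row-echelon form. Then I would run a standard worklist iteration applying the rule $V_{s_2} \leftarrow V_{s_2} + A \cdot V_{s_1}$ for every transition $s_1 \ltr{A} s_2$, putting $s_2$ back on the worklist whenever the dimension of $V_{s_2}$ grows. The key dimension argument is that each $V_s$ lives in $\complex^k$, so its dimension can increase at most $k$ times, bounding the total number of basis extensions by $nk$.

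The cost analysis, which will be the main obstacle, I would handle by charging work \emph{incrementally} to newly added basis vectors rather than recomputing $A \cdot V_{s_1}$ from scratch. When $V_{s_1}$ acquires a new basis vector $\overline{b}$, for each outgoing transition $s_1 \ltr{A} s_2$ we only need to: (i) compute $A\overline{b}$, costing $O(k^2)$, and (ii) attempt to insert it into the row-echelon basis of $V_{s_2}$, also $O(k^2)$ by a single Gaussian elimination sweep against the at most $k$ existing basis rows. Thus each of the $O(nk)$ basis extensions triggers $O(k^2)$ work per outgoing transition, and aggregating over all transitions (absorbed into the $n$ factor as in the standard statement of Karr's algorithm for LTSs with bounded fan-out, or else stated as $O(|\ltstr|\,k^3)$) yields the claimed $O(nk^3)$ bound under unit-cost arithmetic.

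Finally, correctness of the fixpoint follows by a straightforward induction on path length: the computed $V_s$ at termination is exactly $\mathrm{span}\{T_m\cdots T_1 \overline{v_0} \mid s_0 \ltr{T_1}\cdots\ltr{T_m} s\}$, since linearity of the transitions commutes with taking spans ($A(V_1+V_2) = AV_1 + AV_2$). After the fixpoint is reached, a single pass over $S_t$ testing whether each basis is empty answers the zero-invariant question in $O(n)$ additional time, which is dominated by the fixpoint cost.
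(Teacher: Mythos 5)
Your proposal is correct and follows essentially the same route as the paper: both invoke Karr's algorithm, propagate a basis of (at most $k$) linearly independent reachable vectors per state to a fixpoint, and then check that every target state carries only the zero subspace. You merely unpack the bookkeeping (row-echelon bases, incremental $\bigO(k^2)$ insertion per basis extension, the $nk$ bound on extensions) that the paper leaves implicit by citation.
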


\begin{proof}
The problem can be solved by the so-called
Karr's algorithm~\cite{Karr76,muller2004note}
(we note that Karr's algorithm solves a~more general problem of computing
\emph{affine relationships} in \emph{affine programs}, i.e., programs where
transformations are of the form $\overline{x}:=A\overline{x}+\overline{b}$;
here we only use $\overline{b}=\mathbf{0}$).
Intuitively, Karr's algorithm works by starting in the initial state with the
initial vector and propagating the vector along the transitions.
At each state, we collect the (at most~$k$) linearly independent vectors.
After the system saturates, we check that all states in~$S_t$ are only reachable
with the vector~$\mathbf 0$.
Karr's algorithm works in the time $\bigO(nk^3)$ if arithmetic operations have
a~unit cost.
\end{proof}

Let us now show how to build an~LTS that corresponds to the given SWTA
functional equivalence problem for SWTAs $\aut = \tuple{Q_a, \delta_a, \colors,
\rootstate_a, E_a}$ and~$\but = \tuple{Q_b, \delta_b, \colors, \rootstate_b,
E_b}$ over~$\abc$ (w.l.o.g.\ we assume they use the same set of
colors~$\colors$ such that $\tacc 1 \in \colors$ and that $Q_a \cap Q_b =
\emptyset$).
Let us first construct an SWTA~$\aut_{\minus}$ such that for all $w \in (\abc \times
\colors)^*$, if~$\semof \aut (w) = \semof \but (w)$, then $\aut_{\minus}$ will
\begin{changebar}
generate a~perfect tree of the height~$|w|$ with all leaves being~zero, otherwise
it will generate a~tree of the same height but with at least one non-zero leaf.
\end{changebar}
Concretely,
we define $\aut_{\minus} = \tuple{Q, \delta, \colors, \rootstate, E}$
to be an SWTA over~$\abc$ where
\begin{itemize}
  \item  $Q = Q_a \cup Q_b \cup \{\rootstate\}$ where
         $\rootstate \notin Q_a \cup Q_b$,
  \item  $E = E_a \cup E_b$, and
  \item  $\delta = \delta_a \cup \delta_b \cup \{\transcof \rootstate \alpha 1
    {\rootstate_a - \rootstate_b}{\rootstate_a - \rootstate_b}\}$ where
    $\alpha$ is any symbol from~$\abc$.
\end{itemize}

\begin{theorem}\label{thm:aut_minus}
For all $w \in \domof{\semof \aut} \cap \domof{\semof \but}$, it holds that
$\semof{\aut_{\minus}}(\tuple{\alpha, \tacc 1} w)$ is the tree defined as
$\treecons \alpha {t_{\minus}}{t_{\minus}}$ where $t_{\minus} = \semof \aut
(w) - \semof \but (w)$.
Moreover, $\domof{\aut_{\minus}} =
\{\tuple{\alpha, \tacc 1}\} \concat (\domof{\semof \aut} \cap \domof{\semof \but})$.
\end{theorem}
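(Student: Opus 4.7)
The plan is to exploit two structural features of the construction: (i)~$\aut_{\minus}$ has a unique transition from the fresh root state~$\rootstate$, namely the added transition labelled $\tuple{\alpha, \tacc 1}$, and (ii)~the states of~$\aut$ and~$\but$ are kept disjoint and their transitions are carried over verbatim. Feature~(i) forces every accepted input to begin with $\tuple{\alpha,\tacc 1}$, and applying the semantic rule for this root transition immediately yields $\semof{\aut_{\minus}}(\tuple{\alpha,\tacc 1}\,w) = \treecons{\alpha}{t_\ell}{t_r}$, where both $t_\ell$ and $t_r$ are evaluations of the same linear form $\rootstate_a - \rootstate_b$ on~$w$ (so $t_\ell = t_r$).

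The main intermediate step I would prove is the following preservation lemma, by induction on~$|w|$: for every $q \in Q_a$, $\semof{\aut_{\minus}, q}(w) = \semof{\aut, q}(w)$, and symmetrically $\semof{\aut_{\minus}, q}(w) = \semof{\but, q}(w)$ for every $q \in Q_b$. The base case coincides because $E = E_a \cup E_b$ retains the leaf status of each original state. The inductive case follows because, by disjointness of $Q_a$ and $Q_b$ and the fact that no transition of $\aut$ or~$\but$ mentions~$\rootstate$, the $\aut$-transitions from a state in~$Q_a$ in $\aut_{\minus}$ reference only states in~$Q_a$, on which the induction hypothesis applies (and dually for~$Q_b$). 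Expanding the root transition via this lemma gives
\begin{equation}
t_\ell = \semof{\aut_{\minus}, \rootstate_a}(w) - \semof{\aut_{\minus}, \rootstate_b}(w) = \semof{\aut}(w) - \semof{\but}(w) = t_\minus,
\end{equation}
which establishes the tree-shape part of the statement, provided the subtraction is defined.

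For the domain equality, the $\subseteq$ direction is immediate from~(i): any $w^{*} \in \domof{\aut_{\minus}}$ must start with $\tuple{\alpha,\tacc 1}$, and for $t_\ell$ to be defined both $\semof{\aut_{\minus}, \rootstate_a}(w)$ and $\semof{\aut_{\minus}, \rootstate_b}(w)$ must be defined, so (by the preservation lemma) $w \in \domof{\semof\aut}\cap\domof{\semof\but}$. The $\supseteq$ direction requires that for every such~$w$ the subtraction $\semof{\aut}(w) - \semof{\but}(w)$ is actually defined, i.e.\ the two trees are compatible.

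The only real obstacle is this compatibility argument. I would handle it by noting that because $\delta$ is a function (not a relation), the internal label at each position~$p$ of a tree produced by the semantics is fully determined by the $|p|$-th alphabet symbol of the input sequence~$w$; in particular it is independent of the automaton. Hence whenever both $\semof{\aut}(w)$ and $\semof{\but}(w)$ are defined, they have the same height $|w|$ and agree on all internal positions, so they are compatible and $t_\minus$ is well defined. The remaining details are routine bookkeeping: one threads the induction hypothesis through the substitution of subtrees into the two linear forms of each transition, using the definitions of $+$ and scalar multiplication on trees of matching internal structure.
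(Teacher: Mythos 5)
The paper states \cref{thm:aut_minus} without an accompanying proof, so there is no paper argument to compare against; judged on its own, your proposal is correct. The two ingredients you supply are exactly what is needed: the preservation lemma for the embedded copies of $\aut$ and $\but$ (justified by the disjointness of $Q_a$ and $Q_b$ and the fact that $\delta$ is a partial function, so states of $Q_a$ carry precisely their $\delta_a$-transitions inside $\aut_{\minus}$), and the observation that every internal label of $\semofof{\cdot}{w}$ at depth $d$ is the alphabet component of the $(d{+}1)$-th letter of $w$, which yields compatibility of $\semof{\aut}(w)$ and $\semof{\but}(w)$ and hence definedness of $t_{\minus}$ --- the one point where the $\supseteq$ direction of the domain equality could otherwise fail.
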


% \begin{proof}
% \ol{ToDo}
% \end{proof}

\begin{corollary}
Assume that $\domof{\semof \aut} = \domof{\semof \but}$.
Then it holds that $\semof \aut = \semof \but$ iff 
all trees in~$\langof{\aut_{\minus}}$ have only zero-valued leaves.
\end{corollary}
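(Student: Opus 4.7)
The plan is to derive the biconditional almost directly from Theorem~\ref{thm:aut_minus}, which already provides a tight correspondence between trees in $\langof{\aut_{\minus}}$ and the pointwise difference of $\semof \aut$ and $\semof \but$. Under the hypothesis $\domof{\semof \aut} = \domof{\semof \but}$, the theorem tells us that $\langof{\aut_{\minus}}$ consists exactly of the trees $\treecons \alpha {t_{\minus}}{t_{\minus}}$ with $t_{\minus} = \semof \aut(w) - \semof \but(w)$, ranging over $w$ in the common domain. The whole corollary therefore reduces to showing: every such $t_{\minus}$ has only zero leaves iff $\semof \aut(w) = \semof \but(w)$ for every $w$ in the common domain.

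For the forward direction I would simply observe that if $\semof \aut = \semof \but$, then for every $w$ in the common domain $\semof \aut(w)$ and $\semof \but(w)$ are identical, so $t_{\minus}$ has all-zero leaves, and thus so does every tree in $\langof{\aut_{\minus}}$. For the backward direction, I would assume every accepted tree has all zero leaves; then for each $w$ the difference tree $t_{\minus}$ has all zero leaves, which means $\semof \aut(w)$ and $\semof \but(w)$ agree on leaves, and combined with domain equality this yields $\semof \aut = \semof \but$.

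The one technical point I would be careful about is ensuring that the subtraction $\semof \aut(w) - \semof \but(w)$ is actually defined, since subtraction of trees requires compatibility. This is not an obstacle but worth a sentence: from the inductive definition of $\semofof{\aut,q}{\tuple{a,\tcacc 7 c}\concat u}$, the symbol $a$ is placed at the root and the \emph{same} tail $u$ is used to construct both subtrees, so every internal level of the resulting tree is labeled by the symbol that $w$ dictates at that position. Consequently, whenever $w$ lies in both $\domof{\semof \aut}$ and $\domof{\semof \but}$, the two trees share the same height and the same internal labels at every position, so $t_{\minus}$ is well defined, and agreement of leaves is the same as equality of trees.

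Overall I expect no real obstacle: the substantive work has already been carried out in Theorem~\ref{thm:aut_minus}, and the corollary is essentially a reformulation of it made possible by the domain-equality hypothesis. The proof should fit in a few lines once the pointwise correspondence from Theorem~\ref{thm:aut_minus} is invoked.
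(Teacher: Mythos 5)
Your proposal is correct and follows the route the paper intends: the corollary is stated as an immediate consequence of Theorem~\ref{thm:aut_minus} (the paper gives no separate proof), and you derive exactly the pointwise correspondence that makes it go through. Your extra remark that trees $\semof\aut(w)$ and $\semof\but(w)$ arising from the same symbol-color sequence $w$ are automatically compatible (same height, same internal labels) is a worthwhile clarification that the paper leaves implicit.
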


To check whether $\semof \aut = \semof \but$, it is now enough to proceed as
follows:
\begin{enumerate}
  \item  Check that $\domof{\semof \aut} = \domof{\semof \but}$ by testing
    $\langof{\coloraut \aut} = \langof{\coloraut \but}$ (alternatively, for
    testing functional inclusion, we test $\langof{\coloraut \aut} \subseteq
    \langof{\coloraut \but}$).
    This can be done in \pspace since we can perform an on-the-fly
    construction of the product of the domain DFAs~$\coloraut \aut$
    and~$\coloraut \but$, which are both exponential-sized, while looking for
    a~state that is accepting in exactly one of the DFAs).
    If the equivalence does not hold, return false.

  \item  Construct~$\aut_{\minus}$ and test whether all trees it accepts have
    only zero-valued leaves.
\end{enumerate}
To test that all trees accepted by~$\aut_{\minus}$ have only zeros in their
leaves, we reduce the problem to the zero-invariant problem of the LTS~$P_{\minus}$ defined below.
First, we define an auxiliary mapping~$\matrixof{\aut_{\minus}}\colon \abc
\times \colors \times \{\mathit{left}, \mathit{right}\} \to \complex^{k\times
k}$ with~$k = |Q|$ being the number of states in~$\aut_{\minus}$ (we assume $Q =
\{q_1, \ldots, q_k\}$).
The mapping represents the transitions of~$\aut_{\minus}$ as complex matrices in
a~similar way as in weighted automata~\cite{DrosteKV09}.
Let us begin by defining, for a~linear form $\linof x \in \linformsof Q$, the
vector~$\vectorof {\linof x} = (u_1, \ldots, u_k)$ such that
$u_i = \linindexof x {q_i}$ if $q_i \in \suppof{\linof x}$ and $u_i = 0$
otherwise.
Then, for all~$q \in Q$, we define the function $\rowof{q}\colon \abc
\times \colors \times \{\mathit{left}, \mathit{right}\} \to \complex^k$
such that
\begin{equation}
\rowof q(a, \tcacc 7 c, D) = \begin{cases}
  \vectorof{\linof \ell}  & \text{if } \transof q a {\tcacc 7 c} {\linof \ell}{\linof
  r} \in \delta \text{ and } D = \mathit{left},\\
  \vectorof{\linof r}  & \text{if } \transof q a {\tcacc 7 c} {\linof \ell}{\linof
  r} \in \delta \text{ and } D = \mathit{right}\text{, and}\\
  \mathbf{0} & \text{otherwise}.
\end{cases}
\end{equation}
We then define~$\matrixof{\aut_{\minus}}$ by composing the vectors for~$\rowof
q$ for all states~$q \in Q$ as follows:
\begin{equation}
\matrixof{\aut_{\minus}}(a, \tcacc 7 c, D) =
\left[
\begin{matrix}
  \rowof{q_1}(a, \tcacc 7 c, D) \\
  \vdots\\
  \rowof{q_k}(a, \tcacc 7 c, D)
\end{matrix}
\right].
\end{equation}
Further, consider the domain DFA $\coloraut{\aut_{\minus}} = \tuple{G, \abc
\times \colors, \Delta, g_0, G_f}$.
We construct
the LTS~$P_{\minus} = \tuple{S, s_0, \overline{v_0}, \ltstr}$ in the following way:
\begin{itemize}
  \item  $S = 2^Q \times G$, %\jal{isn't $G = 2^Q$?}\ol{technically, yes, but the automaton could be reduced}
  \item  $s_0 = \tuple{\{\rootstate\}, g_0}$,
  \item  $\overline{v_0} = (1, 0, \ldots, 0)$, while assuming that $q_1 =
    \rootstate$, and
  % \item  $\ltstr = \{\tuple{U,g} \ltr A \tuple{U', g'} \mid a \in \abc, \tcacc 7
  %   c \in \colors, g' = \Delta(g, \tuple{a, \tcacc 7 c}), \ol{} \{\ol{} \mid u \in U, \transof u a {\tcacc 7 c} {\linof \ell}{\linof r} \in
  %   \delta\}\}$.
  \item  $\tuple{U,g} \ltr A \tuple{U', g'} \in \ltstr$ iff there exist $a \in
    \abc$ and $\tcacc 7 c \in \colors$ such that
      $g' = \Delta(g, \tuple{a, \tcacc 7 c})$ and
      one of the following holds:
      \begin{itemize}
        \item
          $U' = \bigcup\{\suppof{\linof \ell} \mid u \in U, \transof u a
          {\tcacc 7 c} {\linof \ell}{\linof r} \in \delta\}\}$ and
          $A = \matrixof{\aut_{\minus}}(a, \tcacc 7 c, \mathit{left})$ or
        \item
          $U' = \bigcup\{\suppof{\linof r} \mid u \in U, \transof u a
          {\tcacc 7 c} {\linof \ell}{\linof r} \in \delta\}\}$ and
          $A = \matrixof{\aut_{\minus}}(a, \tcacc 7 c, \mathit{right})$.
      \end{itemize}
\end{itemize}
Intuitively, in each $P_{\minus}$'s state $\tuple{U, g}$, the~$U$-component
denotes the set of $\aut_{\minus}$'s states ``\emph{active}'' at a~particular
branch of the input tree while~$g$ keeps track of $\aut_{\minus}$'s
active states 
\begin{changebar}
on \emph{all branches}
\end{changebar}
(since in order to accept a tree, all branches need to reach
a~leaf state in all of their active states).
\begin{changebar}
We also construct $S_t = \{\tuple{U, g} \in S \mid g \in G_f\}$, which
is a~set of states of the LTS representing computations
of~$\aut_{\minus}$ where all branches reach a~leaf node.
\end{changebar}

\begin{theorem}\label{thm:func_eq}
The functional equivalence $\semof \aut = \semof \but$ holds iff $\domof{\semof
\aut} = \domof{\semof \but}$ and the zero-invariant problem for~$P_{\minus}$
and~$S_t$ holds.
Similarly, the functional inclusion $\semof \aut \subseteq \semof \but$ holds iff $\domof{\semof
\aut} \subseteq \domof{\semof \but}$ and the zero-invariant problem for~$P_{\minus}$
and~$S_t$ holds.
Moreover, the two problems can be decided in \expspace.
\end{theorem}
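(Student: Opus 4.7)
The plan is to establish the claim in three stages: the reduction via $\aut_{\minus}$, the correctness of the LTS $P_{\minus}$, and the complexity bound. First, by \cref{thm:aut_minus}, if $\domof{\semof \aut} = \domof{\semof \but}$, then every tree accepted by $\aut_{\minus}$ has the form $\treecons \alpha {t_{\minus}}{t_{\minus}}$ with $t_{\minus} = \semof \aut(w) - \semof \but(w)$ for some $w$ in the common domain. Hence $\semof \aut = \semof \but$ iff all trees accepted by $\aut_{\minus}$ have only zero leaves. The required domain equality can be tested in \pspace by an on-the-fly product construction over the domain DFAs $\coloraut \aut$ and $\coloraut \but$ (each exponential-sized, but explorable nondeterministically in polynomial space, then \pspace via Savitch). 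For functional inclusion $\semof \aut \subseteq \semof \but$, the same scheme applies after replacing the domain equality by $\langof{\coloraut \aut} \subseteq \langof{\coloraut \but}$, since on inputs outside $\domof{\semof \but}$ the $\aut_{\minus}$ computation simply has no $\but$-side to subtract and inclusion is automatic, while on the overlap the same leaf-zero condition captures the missing values.

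Next, I would prove that the ``all-zero leaves'' property is precisely the zero-invariant condition for $P_{\minus}$ and $S_t$. The key is an induction on the transition length $m$ establishing that for any path $s_0 \ltr{A_1} \tuple{U_1, g_1} \ltr{A_2} \cdots \ltr{A_m} \tuple{U_m, g_m}$ reading $\tuple{a_1, \tcacc 7 {c_1}} \cdots \tuple{a_m, \tcacc 7 {c_m}}$ with a specific sequence of left/right choices (implicit in each picked transition), the vector $A_m \cdots A_1 \overline{v_0}$ collects exactly the coefficients of states in the linear form that $\aut_{\minus}$ maintains along the corresponding branch of an accepted tree, $U_m$ is its support, and $g_m$ is the corresponding domain-DFA state. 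Reaching $S_t$ (i.e., $g_m \in G_f$) means the sequence corresponds to a full branch of an accepted tree and forces $U_m \subseteq E$, so the surviving coordinates of the vector are exactly the weights of leaf states contributing to the branch's leaf amplitude. The zero-invariant condition therefore forces this amplitude to vanish on every branch of every accepted tree, and conversely—by varying the left/right choices independently—yields enough linearly independent branch evaluations to force the vector itself to zero.

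Finally, the complexity bound follows: $P_{\minus}$ has at most $2^{|Q|} \cdot 2^{|Q|}$ states and $|Q|$ variables, so Karr's algorithm (\cref{thm:zero-invariant}) runs in time and space exponential in the input, and combined with the \pspace domain check yields an \expspace procedure overall. The main obstacle I anticipate is making the per-branch invariant watertight—in particular, cleanly matching the matrix-multiplication order in $P_{\minus}$ to the inductive unfolding of the SWTA's tree function, and ensuring that when $g_m \in G_f$ the independence of branch choices indeed suffices to equate ``all leaf values zero'' with ``the full coordinate vector is zero'', rather than merely forcing a single linear functional of it to vanish.
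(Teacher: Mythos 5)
Your proposal follows the paper's route: the paper's own proof of this theorem is only a sketch that defers correctness to ``the reasoning outlined above'' (the construction of~$\aut_{\minus}$, \cref{thm:aut_minus} and its corollary, and the LTS encoding) and concentrates on the complexity bound, which you reach by essentially the same count (exponentially many LTS states, $|Q|$ variables, Karr's algorithm via \cref{thm:zero-invariant}). The one point of the paper's complexity discussion you gloss over is \emph{why} the bound is \expspace rather than \pspace: the $\bigO(nk^3)$ bound assumes unit-cost arithmetic, and the bit representations of the vector entries can themselves grow exponentially, which is what defeats an on-the-fly \pspace guess; your ``exponential time and space'' conclusion is right but should acknowledge this.

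The genuine gap is exactly the step you flag at the end, and your proposed repair does not close it. The leaf amplitude of a branch is the \emph{sum} of the coordinates of the accumulated vector over the active leaf states---a single linear functional of the vector---whereas the zero-invariant problem demands the full vector be~$\mathbf 0$. Varying the left/right choices does not in general supply enough independent evaluations to upgrade the former to the latter: take $\aut$ with the single transition $\transcof{\rootstate_a}{\alpha}{1}{p - p'}{p - p'}$ (with $p,p'$ leaf states) and $\but$ with $\transcof{\rootstate_b}{\alpha}{1}{0q}{0q}$. Then $\semof \aut = \semof \but$ (every leaf is~$0$), yet every path of~$P_{\minus}$ into~$S_t$ carries a vector with coordinates $1$ and $-1$ at $p$ and $p'$; the left and right matrices coincide here, so no choice of branches ever yields~$\mathbf 0$. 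What is actually equivalent to ``all leaf values vanish'' is that the affine relationship $\sum_i x_i = 0$ (over the leaf-state coordinates) holds on the set of vectors reachable at target states---which is precisely the kind of invariant Karr's affine-relationship computation certifies---not that this reachable set is~$\{\mathbf 0\}$. Your proof would need to be restated and carried out for that functional-vanishing condition (and, as you anticipate, the matrix orientation must be fixed so that the coefficient vector propagates consistently with the top-down unfolding of $\semof{\aut_{\minus}}$); as written, the ``conversely'' half of your second stage fails.
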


\begin{proof}[Proof sketch.]
The correctness can be established using the reasoning outlined above.
Focusing on the complexity part,
we note that the size of~$P_{\minus}$ is exponential to the size of the input.
The length of a~sequence of transitions that would lead to a non-zero vector in
some of the target states is then also exponential.
On the first sight, this might permit a~\pspace algorithm, which would guess the
sequence on the fly and locally generate successors of vectors until the
requested counterexample is reached.
The problem is, however, that the sizes of the bit representations of the
numbers in the vector might grow exponentially, so they do not fit into
polynomial space.
They are, however, bounded by \expspace.
\end{proof}

\begin{theorem}\label{thm:}
The functional equivalence and inclusion problems for SWTAs are \pspace-hard.
\end{theorem}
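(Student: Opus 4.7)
The plan is to reduce from NFA universality (\pspace-complete) by reusing the construction from the hardness part of \cref{thm:swta-emptiness}. Given an NFA $\nut$, recall that the SWTA $\aut_\nut$ built there satisfies $\langof{\aut_\nut} = \emptyset$ iff $\nut$ is universal. Moreover, the construction has the property that $t \in \langof{\aut_\nut}$ iff there is some~$w$ with $\semof{\aut_\nut}(w) = t$, so $\langof{\aut_\nut} = \emptyset$ iff $\semof{\aut_\nut}$ is the empty function (the function with empty domain).

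Next I would fix an SWTA $\but_\emptyset$ over the same alphabet and color set whose tree function is everywhere undefined. Such an SWTA is trivial to construct in constant size, e.g., an SWTA with a single non-leaf, non-root state and no transitions, so that $\semof{\but_\emptyset} = \emptyset$ as a set of input-output pairs.

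For \emph{functional equivalence}, I would then argue:
\begin{equation}
\semof{\aut_\nut} = \semof{\but_\emptyset} \;\Longleftrightarrow\; \semof{\aut_\nut} = \emptyset \;\Longleftrightarrow\; \langof{\aut_\nut} = \emptyset \;\Longleftrightarrow\; \nut \text{ is universal.}
\end{equation}
For \emph{functional inclusion}, using the paper's convention that $\semof{\aut} \subseteq \semof{\but}$ means set inclusion of the graphs of the (partial) functions, every function is included in the empty function only when it is itself empty, so the same chain of equivalences yields
\begin{equation}
\semof{\aut_\nut} \subseteq \semof{\but_\emptyset} \;\Longleftrightarrow\; \semof{\aut_\nut} = \emptyset \;\Longleftrightarrow\; \nut \text{ is universal.}
\end{equation}
Both reductions are polynomial, giving \pspace-hardness of both problems.

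The only subtle point will be to double-check that the construction from \cref{thm:swta-emptiness} really guarantees $\langof{\aut_\nut} = \emptyset \Leftrightarrow \domof{\semof{\aut_\nut}} = \emptyset$ (i.e., that no color sequence accidentally yields an accepted tree when the NFA is universal); this follows because in that construction, for every reachable state~$P$ reached from $q_I$ on a word $w = b_1 \ldots b_n$, the tree $\semof{\aut_\nut}(\tuple{a,b_1}\ldots\tuple{a,b_n})$ is defined iff every element of~$P$ is a leaf state of $\aut_\nut$, i.e., iff $P \cap F = \emptyset$, matching the language-emptiness characterization. Hence nothing more than unwinding definitions is needed, and the main obstacle is merely presentational rather than technical.
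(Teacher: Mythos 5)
Your proposal is correct and takes essentially the same route as the paper: the paper reduces from SWTA language emptiness (itself \pspace-complete by \cref{thm:swta-emptiness}) by comparing $\aut$ against an empty-language SWTA $\but$ via $\semof{\aut} = \semof{\but}$ or $\semof{\aut} \subseteq \semof{\but}$, and you merely inline the NFA-universality step into one composed reduction. One cosmetic slip: an SWTA must have a root state, so $\but_\emptyset$ should be a single state that \emph{is} the root but is not a leaf state and has no outgoing transitions (which indeed yields the everywhere-undefined tree function).
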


\begin{proof}[Proof sketch.]
We prove \pspace-hardness of the problems by a~reduction from SWTA
language emptiness, which is \pspace-complete (cf. \cref{thm:swta-emptiness}).
Concretely, we can test language emptiness of an SWTA~$\aut$ by constructing an
SWTA~$\but$ with an empty language (so, also $\semof \but = \emptyset$) and
checking whether $\semof \aut = \semof \but$ or $\semof \aut \subseteq \semof \but$.
\end{proof}

\vspace{-0.0mm}
\subsection{Language Intersection Emptiness}
\vspace{-0.0mm}
%*******************************************************************************

\begin{theorem}\label{thm:isect_empt_undec}
    Given two SWTAs $\aut$ and $\but$, it is undecidable whether $\langof \aut
    \cap \langof \but = \emptyset$.
\end{theorem}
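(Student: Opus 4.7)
The plan is to prove undecidability by reduction from the Post Correspondence Problem (PCP). Given a PCP instance consisting of pairs $(u_1, v_1), \ldots, (u_n, v_n)$ over $\{0,1\}$, I would construct SWTAs $\aut_1$ and $\aut_2$ over the alphabet $\abc = \{a_1, \ldots, a_n\}$, equipped with one ``choice'' color per pair together with a single ``terminal'' color, so that $\langof{\aut_1} \cap \langof{\aut_2} \neq \emptyset$ iff the instance admits a solution. A common tree will encode a candidate solution $(i_1, \ldots, i_k)$ simultaneously in two places: the internal node labels record the index sequence, and the leftmost leaf records the induced concatenation as a single complex number.

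The sequence would be recorded via color--label synchronization: the $i$-th choice color is used only in transitions whose symbol is $a_i$, so choosing this color at level~$j$ forces every internal node at level~$j$ to carry the label $a_{i_j}$. Since both automata must accept the same tree, they are thereby forced to agree on the index sequence, even though their color sequences need not literally coincide. To record the concatenation, I would use the weighted transitions: fix a base $r \geq 3$ and encode binary strings with digits in $\{1, 2\}$ via $\mathrm{enc}(w) = \sum_i (w_i{+}1)\,r^{|w|-i}$, which is injective across all lengths. In $\aut_1$, under the $i$-th choice color, the left-subtree slot of the recursive state $q_{\mathit{rec}}$ would be the linear form $\mathrm{enc}(u_i^R)\cdot q_{\mathit{one}} + r^{|u_i|}\cdot q_{\mathit{rec}}$, where $q_{\mathit{one}}$ is an auxiliary state whose leftmost leaf is always $1$ and all others are $0$; the terminal color closes the recursion by routing into leaf states. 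Unrolling this recurrence down the leftmost branch yields a leftmost-leaf value equal to $\mathrm{enc}\bigl((u_{i_1}\cdots u_{i_k})^R\bigr)$, and symmetrically for $v$ in $\aut_2$. A common tree therefore exists iff the two leftmost-leaf values coincide, which by injectivity of $\mathrm{enc}$ is equivalent to $u_{i_1}\cdots u_{i_k} = v_{i_1}\cdots v_{i_k}$, i.e., a PCP solution; the converse direction is immediate from the construction.

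The main obstacle is the top-down realization of this position-dependent encoding at a single leaf using only a finite set of states. Without weights, each step along the leftmost branch can only multiply the leaf value by a scalar, whereas the target $\mathrm{enc}(\cdot)$ is a sum of terms originating from all levels. This is overcome using the SWTA-specific ability to place \emph{linear forms} on the right-hand side of a transition: packing $q_{\mathit{one}}$ and $q_{\mathit{rec}}$ into one sum lets the constant $\mathrm{enc}(u_i^R)$ contributed at level~$j$ travel intact through the $q_{\mathit{one}}$-subtree down to the leaf, while the coefficient $r^{|u_i|}$ on $q_{\mathit{rec}}$ accumulates the correct positional shift for all later contributions. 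The remaining work is routine bookkeeping: defining the right-subtree slots and the transitions of $q_{\mathit{one}}$ so that every accepted tree is perfect, all non-leftmost leaves are $0$, and both automata share the same domain of tree shapes, making the intersection condition reduce faithfully to equality of the leftmost leaves.
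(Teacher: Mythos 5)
Your reduction is correct and is essentially the paper's own argument: both reduce from PCP and accumulate an injective, zero-digit-free positional encoding of the two concatenations down the leftmost branch via linear forms of the shape $\mathrm{enc}(w)\cdot q_{\mathit{one}} + r^{|w|}\cdot q_{\mathit{rec}}$, with all other leaves forced to $0$. The only difference is in the plumbing: the paper packs both encoders into a \emph{single} SWTA, forms their difference with the root linear form $p-r$, and intersects with a fixed automaton accepting all all-zero trees, whereas you keep the two encoders in separate automata, synchronize the index sequences through the internal node labels, and let the intersection test equality of the leftmost leaves directly --- both realizations are sound.
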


\begin{proof}
\begin{changebar}
By reduction from \emph{Post's correspondence problem} (PCP)~\cite{Post46}. 
\end{changebar}
Assume an instance~$I$ of PCP over the alphabet~$\{1, \ldots, 9\}$ with the set
of~$k$ pairs $I = \{\tuple{\alpha_1, \beta_1}, \ldots, \tuple{\alpha_k, \beta_k}\}$.
For a~word~$w$, let~$\revof w$ denote the reverse of~$w$, $\numof w$ denote the
  numerical value of~$w$, with $\numof \epsilon = 0$, and~$|w|$ denote the
  length of~$w$.
Let us now construct the SWTA~$\aut = \tuple{Q, \delta, \colors, q, E}$
over the alphabet~$\{a\}$ as follows:
  \begin{itemize}
    \item  $Q = \{q, p, r, s, u\}$, $E = \{p,r, s, u\}$, $\colors = \{\tacc 1, \ldots, \tcacc 7 k\}$, and
    \item  $\delta = \delta_1 \cup \delta_2 \cup \delta_\alpha \cup
      \delta_\beta$ where
      \begin{itemize}
        \item  $\delta_1 = \{\transcof q a 1 {p-r}{0s}\}$ ,
        \item  $\delta_2 = \{\transof s a {\tcacc
      7 c} s s , \transof u a {\tcacc 7 c} {u}{0s} \mid \tcacc 7 c \in
          \colors\}$,
        \item  $\delta_\alpha = \{\transof p a {\tcacc 7 c} {10^{|\alpha_c|}\cdot p +
          \numof{\revof{\alpha_c}} \cdot u}{0s} \mid \tuple{\alpha_c, \beta_c} \in I
          \}$, and
        \item  $\delta_\beta = \{\transof r a {\tcacc 7 c} {10^{|\beta_c|}\cdot r +
          \numof{\revof{\beta_c}} \cdot u}{0s} \mid \tuple{\alpha_c, \beta_c} \in I \}$.
      \end{itemize}
  \end{itemize}
  Intuitively, we are trying to construct a tree where the leftmost branch leads
  to the leaf with value~0 (by construction, all other branches always go to
  zero-valued leaves) iff the PCP has a solution.
  The concatenation in the PCP is simulated by addition and digit-shifting
  implemented by multiplication by a~power of~10.
  Therefore, if the PCP has a solution, then (and only then) the language
  of~$\aut$ contains a~tree whose all leaves are valued 0.

  Moreover, let $\but$ be an SWTA accepting all perfect binary trees of the
  height $\geq 1$ with leaves labeled by~0 (such an SWTA needs just two
  transitions $\transcof v a 1 {0z}{0z}$ and $\transcof z a 1 z z$ with the root
  state~$v$ and the set of leaf states $\{z\}$).
  It holds that~$\langof \aut \cap \langof \but \neq \emptyset$ iff~$I$
  has a~solution.
\end{proof}

\vspace{-0.0mm}
\subsection{Language Equivalence and Inclusion}\label{sec:language_equivalence}
\vspace{-0.0mm}
%%%%%%%%%%%%%%%%%%%%%%%%%%%%%%%%%%%%%%%%%%%%%%%%%%%%%%%%%%%%%%%%%%%%%%%%%%%%%%%%

In this section, we show that questions about language inclusion or equivalence
of a~pair of SWTAs are undecidable.

\begin{restatable}{theorem}{thmInclusionUndecidable}\label{thm:inclusionUndecidable}
    Given SWTAs $\aut$ and $\but$, the following questions are undecidable:
    \begin{enumerate}
      \item  $\langof \aut \subseteq \langof \but$ and
      \item  $\langof \aut = \langof \but$.
    \end{enumerate}
\end{restatable}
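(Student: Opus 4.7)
The plan is to reduce from PCP, extending the construction in the proof of \cref{thm:isect_empt_undec}. Given a PCP instance $I=\{(\alpha_1,\beta_1),\ldots,(\alpha_k,\beta_k)\}$, I would reuse the SWTA $\aut_I$ from that proof, whose language (writing $T_{h,N}$ for the perfect binary tree of height $h$ over $\{a\}$ whose leftmost leaf is $N$ and whose other leaves are $0$) consists of trees $T_{h,N_w}$ indexed by PCP selections $w$ of length $h-1$, with $N_w=0$ precisely when $w$ is a PCP solution (apart from the trivial empty selection, which always yields $T_{1,0}$). Alongside $\aut_I$ I would build a companion SWTA $\but_I$ whose language contains $T_{1,0}$ and every tree $T_{h,N}$ with $h\ge 1$ and $N\ne 0$ in a range wide enough to cover all values $N_w$ that $\aut_I$ can produce, but excludes $T_{h,0}$ for all $h\ge 2$. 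To build $\but_I$ I would use colors to encode the sign (a ``positive'' and a ``negative'' root color) and the magnitude of $N$ (via digit-shift transitions on a body state implementing base-$B$ arithmetic for a sufficiently large base $B$), together with a padding color that increases tree height without altering the leftmost leaf, and an explicit ``zero'' root transition that emits $T_{1,0}$. An auxiliary leaf state with one transition per color supplies the constant ``$1$'' contribution needed for the digit additions.

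With $\aut_I$ and $\but_I$ in hand, $\langof{\aut_I}\subseteq\langof{\but_I}$ iff no non-empty selection $w$ of $I$ satisfies $N_w=0$ iff $I$ has no PCP solution, yielding undecidability of language inclusion. For language equivalence, I would invoke closure of SWTAs under union (\cref{sec:boolean}) to form $\aut_I\cup\but_I$ and observe that $\langof{\aut_I\cup\but_I}=\langof{\but_I}$ iff $\langof{\aut_I}\subseteq\langof{\but_I}$ iff $I$ has no PCP solution.

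The hard part will be the construction of $\but_I$. Since an SWTA produces at most $|\colors|^h$ distinct trees of height $h$, the base $B$ and the set of digit colors must be calibrated to the maximum string length among the $\alpha_c$'s in $I$, so that every $N_w$ actually fits within the color budget at height $h$; simultaneously, the padding transition and the zero-multiplied right subtrees have to be arranged so that no color sequence in $\but_I$ accidentally produces $T_{h,0}$ for any $h\ge 2$, which would break the reduction.
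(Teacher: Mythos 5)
Your route is genuinely different from the paper's. The paper proves undecidability of $\langof \aut \subseteq \langof \but$ by a reduction from Turing-machine emptiness in the style of the Alur--Dill timed-automata universality proof: $\but$ accepts all perfect trees with exactly one $1$-valued leaf, while $\aut$ uses a difference of two color-synchronized runs to place a~$1$ at the end of the branch encoding a purported accepting run exactly when that encoding is faulty. You instead reduce from PCP by recycling the automaton $\aut_I$ from \cref{thm:isect_empt_undec} and pairing it with a companion $\but_I$ that accepts every tree $T_{h,N}$ with $N\neq 0$ in a sufficiently wide range while rejecting $T_{h,0}$ for $h\geq 2$; the step from inclusion to equivalence via the union construction is the same in both proofs. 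Your strategy is sound in principle: $\langof{\aut_I}$ really does consist only of $T_{1,0}$ and the trees $T_{h,N_w}$, the number of trees $\but_I$ must cover at height $h$ fits within the $|\colors|^h$ budget for a large enough color set, and a covering automaton of the kind you describe can be built. What each approach buys: yours reuses an existing gadget and needs no encoding of machine configurations, at the price of having to engineer a ``complement-like'' covering automaton; the paper's avoids that by making $\but$ trivially simple and pushing all the work into $\aut$.

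The gap is that $\but_I$ is only gestured at, and the most natural realization of your sketch fails in exactly the way you worry about. Since $\semofof{\aut,q}{\epsilon}=\{\epsilon\mapsto 1\}$ for every leaf state, whatever state terminates the leftmost branch contributes an unavoidable constant~$1$ scaled by the accumulated shift, so a naive base-$B$ digit accumulator produces leftmost-leaf values of the form $1\pm D$ rather than $\pm D$: the value $0$ then reappears as $1-1$ and the value $1$ goes missing. Likewise, encoding the sign per level (as a per-color choice) lets mixed-sign digit sums reach $-1$ and hence a total of $0$. Both defects are repairable---fix the sign once at the root and carry it in the state, track ``no nonzero digit seen yet'' in a non-leaf state, route the single offending pattern (accumulated value $D=1$ under the negative sign, produced at the last level) through a non-leaf state so the run cannot stop there, and add a separate root branch emitting $T_{h,1}$ for all $h$ together with a dead-end leaf pair emitting only $T_{1,0}$---but none of this is in your write-up, and without it the reduction does not yet go through.
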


\begin{proof}[Proof sketch.]
  Our proof of the undecidability of $\langof \aut \subseteq \langof \but$ is
  done by a~reduction from the emptiness problem of a~Turing machine
  (TM)~$M$, inspired by the proof of undecidability of universality of timed
  automata~\cite[Theorem~5.2]{AlurD94}.
  In particular, we construct SWTAs~$\aut$ and~$\but$ over the alphabet~$\{a\}$
  such that $\langof M \neq \emptyset \;\Leftrightarrow\; \langof \aut
  \nsubseteq \langof \but$. 
  $\but$ will be constructed as an SWTA that accepts all perfect trees with
  exactly one 1-valued leaf node and 0-valued
  leaves elsewhere (the construction is simple).

  On the other hand, $\aut$~will be constructed in the following way.
  Let~$\rho$ be a~run (i.e., a~sequence of configurations) of~$M$, then we use
  $\encof \rho$ to denote an encoding of~$\rho$ into a binary string (done in
  the standard way).
  Our goal is to construct~$\aut$ such that every tree accepted by~$\aut$ has
  all leaves labelled by~0 except the leaf at the end of one significant branch,
  which will have the value~$1$ iff the branch is \emph{not} an encoding of an
  accepting run of~$M$ (and it will have any other value otherwise).
  In other words, if there exists an accepting run~$\rho_{\mathit{acc}}$ of~$M$,
  then $\langof \aut$ contains a~tree whose branch~$\encof{\rho_{\mathit{acc}}}$
  does not end with~1 and, therefore, $\langof \aut \nsubseteq \langof \but$.

  The construction of~$\aut$ is technical, but the main idea is that we will
  start a~parallel computation from two states (using a~linear form $p-r$)
  where~$p$ will try to detect a~simple error in the given input, such as problems
  with the encoding or a~wrong format of the configurations (which can be
  described using a~regular language, so it is easy to implement them in an SWTA).
  For detecting harder errors, such as an inconsistency in the encoding of two
  consecutive configurations, we use the run from~$p$ to guess the position of
  the error and then use color-based synchronization to communicate with the run
  from~$r$ the nature of the error; $r$~will then make sure that the error was
  guessed correctly.
  Marking the position can be done by, e.g., multiplying the children states
  with~$2$.

  The proof of undecidability of $\langof \aut = \langof \but$ can be done by
  reduction from the inclusion: $\langof \aut \subseteq \langof \but \iff
  \langof \aut \cup \langof \but = \langof \but$.
\end{proof}

%\ol{say that single-colour language inclusion is decidable?}

%*******************************************************************************
\vspace{-0.0mm}
\subsection{Boolean Operations}\label{sec:boolean}
\vspace{-0.0mm}
%*******************************************************************************

We will show that SWTAs are closed under language union but are not closed under
language intersection and language complement.

\begin{theorem}\label{thm:union}
Let $\aut$ and $\but$ be SWTAs.
Then one can effectively construct an SWTA $\cut$ such that $\langof \cut =
  \langof \aut \cup \langof \but$.
\end{theorem}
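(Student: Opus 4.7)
My plan is to construct $\cut$ by taking a~``disjoint union'' of $\aut$ and $\but$ under a~fresh root state, while renaming the colors of the two automata to be disjoint so that once the first level has committed to a~color, the entire run below is forced to stay in the sub-automaton to which that color belongs.

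Concretely, assume $\aut = \tuple{Q_a, \delta_a, \colors_a, \rootstate_a, E_a}$ and $\but = \tuple{Q_b, \delta_b, \colors_b, \rootstate_b, E_b}$, and w.l.o.g.\ rename so that $Q_a \cap Q_b = \emptyset$ and $\colors_a \cap \colors_b = \emptyset$. Pick a fresh state $r \notin Q_a \cup Q_b$ and define
$\cut = \tuple{Q_a \cup Q_b \cup \{r\},\, \delta_c,\, \colors_a \cup \colors_b,\, r,\, E_c}$,
where $E_c = E_a \cup E_b \cup \{r \mid \rootstate_a \in E_a \text{ or } \rootstate_b \in E_b\}$, and
\begin{equation}
\delta_c = \delta_a \cup \delta_b \cup \{\transof{r}{a}{\tcacc 7 c}{\linof \ell}{\linof r'} \mid \transof{\rootstate_a}{a}{\tcacc 7 c}{\linof \ell}{\linof r'} \in \delta_a \text{ or } \transof{\rootstate_b}{a}{\tcacc 7 c}{\linof \ell}{\linof r'} \in \delta_b\}.
\end{equation}
Intuitively, $r$ simulates $\rootstate_a$ on colors in $\colors_a$ and $\rootstate_b$ on colors in $\colors_b$; it is a~leaf state exactly when at least one of the original roots was.

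For correctness, I would argue both inclusions. For $\supseteq$, take $t \in \langof \aut$ witnessed by some $w = \tuple{a_1,\tcacc 7 {c_1}}\cdots \tuple{a_n,\tcacc 7 {c_n}} \in (\abc \times \colors_a)^*$; the first transition of $\aut$ on $w$ is a~$\rootstate_a$-transition that is replicated on $r$, and all subsequent states reached are in $Q_a$ with identical transition behavior, so $\semof \cut (w) = t$. The case for $\but$ is symmetric, and the $w = \epsilon$ case is handled by $r \in E_c$. For $\subseteq$, take $t \in \langof \cut$ via some $w$; if $w = \epsilon$ then $r \in E_c$ forces one of $\rootstate_a, \rootstate_b$ to be a~leaf state, yielding $t \in \langof \aut \cup \langof \but$. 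Otherwise $w$~starts with some color $\tcacc 7 c$ lying in exactly one of $\colors_a$, $\colors_b$; by the disjointness of colors together with the fact that any state in $Q_a$ only has outgoing transitions colored in $\colors_a$ (and symmetrically for $Q_b$), the entire run from $r$ uses only transitions of one of the sub-automata, and so the tree is accepted by the corresponding original SWTA.

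There is no real obstacle here: the single delicate point is the color-synchronization requirement of SWTAs, which would cause problems if a~run could ``mix'' colors of $\aut$ and $\but$ at some deeper level. Making the two color sets disjoint rules this out, because at each level all active states come from a~single $Q_a$ or $Q_b$, and their transitions can only be labeled by colors from the corresponding color set.
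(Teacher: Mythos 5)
Your construction is exactly the one the paper uses (fresh root state inheriting both roots' transitions, disjoint state and color sets, root made a leaf state iff one of the original roots is), and your correctness argument fills in what the paper dismisses as ``the standard way.'' The proposal is correct and takes essentially the same approach.
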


\begin{proof}
Let $\aut = \tuple{Q_a, \delta_a, \colors_a, \rootstate_a, E_a}$ and 
$\but = \tuple{Q_b, \delta_b, \colors_b, \rootstate_b, E_b}$ be SWTAs
over~$\abc$ such that, w.l.o.g., $Q_a \cap Q_b = \emptyset$ and $\colors_a \cap
  \colors_b = \emptyset$.
Then let $\cut = \tuple{Q_c, \delta_c, \colors_c, \rootstate_c, E_c}$ be an SWTA
over~$\abc$ constructed as follows:
\begin{itemize}
  \item  $Q_c = Q_a \cup Q_b \cup \{\rootstate_c\}$ with $\rootstate_c \notin
    Q_a \cup Q_b$,
  \item  $\colors_c = \colors_a \cup \colors_b$,
  \item  $E_c = E_a \cup E_b \cup G$ where $G = \{\rootstate_c\}$ if
    $\{\rootstate_a, \rootstate_b\} \cap (E_a \cup E_b) \neq \emptyset$, else $G
    = \emptyset$, and
  \item  $\delta_c = \delta_a \cup \delta_b \cup \{\transof{\rootstate_c} \alpha
    {\tcacc 7 e} {\linof \ell}{\linof r} \mid \transof{\rootstate_a} \alpha
    {\tcacc 7 e} {\linof \ell}{\linof r} \in \delta_a \lor
    \transof{\rootstate_b} \alpha
    {\tcacc 7 e} {\linof \ell}{\linof r} \in \delta_b\}$.
\end{itemize}
One can prove $\langof \cut = \langof \aut \cup \langof \but$ in the standard
way.
\end{proof}

\begin{theorem}\label{thm:isect_compl}
SWTAs are not closed under language intersection and language complement.
\end{theorem}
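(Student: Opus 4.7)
The plan is to derive both non-closure statements from the undecidability of intersection emptiness (\cref{thm:isect_empt_undec}), leveraging the decidability of language emptiness (\cref{thm:swta-emptiness}) and the effective closure under union (\cref{thm:union}). The paper's other closure results are stated constructively, so I would read ``closed under intersection/complement'' in the effective sense, which is also what one needs in order to chain the argument with an emptiness oracle.

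For non-closure under intersection, I would argue by contradiction. Suppose one could, from any SWTAs $\aut$ and $\but$, effectively construct an SWTA $\cut$ with $\langof \cut = \langof \aut \cap \langof \but$. Then \cref{thm:swta-emptiness} would give a decision procedure for $\langof \cut = \emptyset$, i.e., for $\langof \aut \cap \langof \but = \emptyset$, directly contradicting \cref{thm:isect_empt_undec}. No further machinery is required beyond invoking the two theorems.

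For non-closure under complement, I would reduce to the intersection case via De Morgan. Taking the complement with respect to the natural universe of all perfect binary trees $\abctrees$, one has
\begin{equation}
\langof \aut \cap \langof \but \;=\; \abctrees \setminus \bigl((\abctrees \setminus \langof \aut) \cup (\abctrees \setminus \langof \but)\bigr).
\end{equation}
Hence, if SWTAs admitted an effective complement construction, combining it with the effective union of \cref{thm:union} would yield an effective intersection construction, contradicting the first part.

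The only real subtlety I anticipate is the \emph{effective} reading of closure: a purely set-theoretic non-closure statement cannot be concluded directly from undecidability of an emptiness-like problem. I would therefore make explicit at the outset that closure is meant in the constructive sense consistent with \cref{thm:union}, after which both items follow in a few lines with no calculations needed.
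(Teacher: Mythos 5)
Your argument is correct, and the intersection part is exactly the paper's: combine the undecidability of intersection emptiness (\cref{thm:isect_empt_undec}) with the decidability of emptiness (\cref{thm:swta-emptiness}) to rule out an effective intersection construction. For complement, your De Morgan reduction via \cref{thm:union} coincides with what the paper offers as an \emph{alternative} proof; its primary argument is different and more elementary: every SWTA language is countable (it is the image of the countable set $(\abc\times\colors)^*$ under $\semof\aut$), whereas $\abctrees$ is uncountable because the leaves carry arbitrary complex values, so the complement of any SWTA language is uncountable and hence cannot be an SWTA language. That cardinality argument buys something your route does not: it establishes non-closure under complement in the genuine set-theoretic sense, not merely the absence of an effective construction. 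Your caveat about the effective reading of ``closed'' is well placed --- and in fact it applies equally to the paper's own intersection argument, which, like yours, only excludes a computable construction and is silent on whether the intersection of two SWTA languages might nonetheless always happen to be an SWTA language (no cardinality obstruction is available there, since countable sets are closed under intersection). So for intersection both you and the paper implicitly adopt the effective reading, while for complement the paper's cardinality argument removes the need for that caveat altogether.
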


\begin{proof}
The non-closure under language intersection is a~direct corollary of
\cref{thm:swta-emptiness} (language intersection emptiness is undecidable) and
\cref{thm:swta-emptiness} (language emptiness is decidable).

For the non-closure under complement, it is easy to notice that the languages of
SWTAs are all of a~countable size, while the set of all perfect trees with
complex-valued leaves is uncountable.
\begin{changebar}
An alternative proof can be obtained by assuming (for the sake of
contradiction) closure under complement and then, from \cref{thm:union} and
\cref{thm:swta-emptiness} and using De Morgan's laws, one would obtain
decidability of the intersection emptiness, which contradicts
\cref{thm:isect_empt_undec}.
\end{changebar}
\end{proof}

%*******************************************************************************
\vspace{-0.0mm}
\subsection{Transducer Image}\label{sec:trn_image}
\vspace{-0.0mm}
%*******************************************************************************

In the section, we will give an algorithm for computing an SWTA
$\trn_2(\aut_1)$ representing the language $\trn_2(\langof{\aut_1})$ where
$\aut_1 = \tuple{Q_1, \delta_1, \colors, \rootstate_1, E_1}$ is an SWTA
over~$\abc$ and
$\trn_2 = \tuple{Q_2, \delta_2, \rootstate_2, E_2}$ is a~WTT over~$\abc$.
First, we introduce some useful notation.
Let $q(d) \in Q_2(\leftT, \rightT)$ and $\linof{\ell}, \linof{r} \in
\linformsof{Q_1}$
\begin{changebar}
(we recall that $Q_2(\leftT, \rightT)$ represents the set of
ground terms of the form $q(\leftT, \rightT)$ for $q \in Q_2$),
\end{changebar}
then we use $q(d)(\linof \ell, \linof r)$ to denote the linear form over $Q_1
\times Q_2$ defined as
\begin{equation}
 q(d)(\linof \ell, \linof r) = \begin{cases}
   \sum_{p \in \suppof{\linof \ell}} \linindexof \ell p \cdot \tuple{p, q} & \text{if } d = \leftT, \\
   \sum_{p \in \suppof{\linof r}} \linindexof r p \cdot \tuple{p, q} & \text{if } d = \rightT .
 \end{cases}
\end{equation}
For instance, $q(\leftT)(a p_a + b p_b, 0) = a \tuple{p_a, q} + b
\tuple{p_b, q}$.
We extend the notation to linear forms $\linof x \in
\linformsof{Q_2(\leftT,\rightT)}$ such that
$\linof x(\linof \ell, \linof r)$ is obtained from $\linof x$ by substituting every
occurrence of~$q(d)$ in~$\linof x$ by $q(d)(\linof \ell, \linof r)$ and
multiplying and summing up the coefficients to obtain a~linear form over $Q_1
\times Q_2$.

\begin{example}
Consider the linear form $\linof x \in \linformsof{Q_2(\leftT, \rightT)}$
and linear forms $\linof \ell, \linof r \in \linformsof{Q_1}$ such that
\begin{equation}
  \linof x =  \invsqrttwo q(\leftT) + q(\rightT) - 3s(\rightT),
  \qquad
  \linof \ell = p + 0u,\quad \text{and}\quad
  \linof r = -2p + \invsqrttwo u.
\end{equation}
Then
\begin{equation}
\begin{aligned}
  \linof x(\linof \ell, \linof r) & {}= 
\invsqrttwo\tuple{p,q} + 0\tuple{u,q}
-2\tuple{p,q} +\invsqrttwo\tuple{u,q}
+6\tuple{p,s} - \smallfrac 3 {\sqrt 2}\tuple{u,s} \\
  &{}= 
\smallfrac{1-2\sqrt 2}{\sqrt 2}\tuple{p,q} +\invsqrttwo\tuple{u,q}
+6\tuple{p,s} - \smallfrac 3 {\sqrt 2} \tuple{u,s},
\end{aligned}
\end{equation}
which is the resulting linear form.
\qed
\end{example}

% Let $\aut_1 = \tuple{Q_1, \delta_1, \colors, \rootstate_1, E_1}$ be an SWTA
% over~$\abc$ and $\trn_2 = \tuple{Q_2, \delta_2, \rootstate_2, E_2}$ be a~WTT
% over~$\abc$.
The image of~$\aut_1$ w.r.t.~$\trn_2$ is then the SWTA $\trn_2(\aut_1) = \aut_3 = \tuple{Q_3,
\delta_3, \colors, \rootstate_3, E_3}$ where the components are defined in the
following way:
\begin{itemize}
  \item  $Q_3 = Q_1 \times Q_2$,  $\rootstate_3 = \tuple{\rootstate_1, \rootstate_2}$, $E_3 = E_1 \times E_2$, and
  \item  $\delta_3 = \{\transof{\tuple{q_1, q_2}} a {\tcacc 7 c}
    {\linof{\ell_2}(\linof{\ell_1}, \linof{r_1})}{\linof{r_2}(\linof{\ell_1}, \linof{r_1})} \mid
    \transof{q_1} a {\tcacc 7 c}{\linof{\ell_1}}{\linof{r_1}} \in \delta_1,
    \trntransof{q_2} a {\linof{\ell_2}}{\linof{r_2}} \in \delta_2\}$.
  % \item  $\delta_3$ is defined such that if
  %   $\transof{q_1} \alpha {\tcacc 7 c}{\linof{\ell_1}}{\linof{r_1}} \in \delta_1$ and
  %   $\trntransof{q_2} \alpha {\linof{\ell_2}}{\linof{r_2}} \in
  %   \delta_2$ for $\alpha \in \abc$, then (and only then)~$\delta_3$ will
  %   contain the transition
  %   $\transof{\tuple{q_1, q_2}} \alpha {\tcacc 7 c}
  %   {\linof{\ell_2}(\linof{\ell_1}, \linof{r_1})}{\linof{r_2}(\linof{\ell_1}, \linof{r_1})}$
\end{itemize}
%
% An example illustrating the computation of a transducer image can be found in~\cref{ex:image}.
An example illustrating the computation of a transducer image can be found
in~\trOrAppendix{sec:image_compos_examples}.

\begin{theorem}\label{thm:image}
$\langof{\aut_3} = \trn_2(\langof{\aut_1})$.
\end{theorem}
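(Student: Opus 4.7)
The plan is to establish the theorem from a stronger, state-indexed lemma proved by induction on the length of the symbol-color sequence:

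\emph{Key Lemma.} For every $w \in (\abc \times \colors)^*$ and every state $\tuple{q_1,q_2} \in Q_3$, the value $\semofof{\aut_3, \tuple{q_1,q_2}}{w}$ is defined if and only if both $\semofof{\aut_1, q_1}{w}$ is defined and $\trn_{2,q_2}$ is defined on that tree, and in the defined case
\begin{equation}
  \semofof{\aut_3, \tuple{q_1,q_2}}{w} \;=\; \trn_{2,q_2}\bigl(\semofof{\aut_1,q_1}{w}\bigr).
\end{equation}
Once the lemma is in place, the theorem follows by instantiating at $\tuple{q_1,q_2}=\tuple{\rootstate_1,\rootstate_2}$ and letting $w$ range over $(\abc \times \colors)^*$: every $t \in \langof{\aut_3}$ arises as $\semofof{\aut_3}{w}$ for some $w$, and by the lemma equals $\trn_2(\semofof{\aut_1}{w})$ with $\semofof{\aut_1}{w}\in\langof{\aut_1}$, proving $\langof{\aut_3} \subseteq \trn_2(\langof{\aut_1})$; conversely, every $t = \trn_2(t')$ with $t' = \semofof{\aut_1}{w}$ equals $\semofof{\aut_3}{w}$, yielding the reverse inclusion.

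For the base case $w = \epsilon$, the leaf states of $\aut_3$ are by construction exactly $E_1 \times E_2$, so both sides are defined on exactly the same $\tuple{q_1,q_2}$'s, and both reduce to the unit leaf $\{\epsilon \mapsto 1\}$. For the inductive step with $w = \tuple{a,\tcacc 7 c}\concat u$, I would unfold the definition of $\semofof{\aut_3,\tuple{q_1,q_2}}{w}$ using the transition $\transof{\tuple{q_1,q_2}}{a}{\tcacc 7 c}{\linof{\ell_2}(\linof{\ell_1},\linof{r_1})}{\linof{r_2}(\linof{\ell_1},\linof{r_1})}$ built from matching $\aut_1$- and $\trn_2$-transitions, then unfold $\trn_{2,q_2}(\semofof{\aut_1,q_1}{w}) = \treecons{a}{\linof{\ell_2}(t_\ell^1,t_r^1)}{\linof{r_2}(t_\ell^1,t_r^1)}$, where $t_\ell^1 = \sum_p \linindexof{\ell_1}{p}\cdot \semofof{\aut_1,p}{u}$ and similarly for $t_r^1$. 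Matching the two expressions reduces to the substitution identity $\linof{\ell_2}(t_\ell^1,t_r^1) = \sum_{\tuple{p,q}}\linindexof{\linof{\ell_2}(\linof{\ell_1},\linof{r_1})}{\tuple{p,q}}\cdot \trn_{2,q}(\semofof{\aut_1,p}{u})$, after which the induction hypothesis lets us replace each $\trn_{2,q}(\semofof{\aut_1,p}{u})$ by $\semofof{\aut_3,\tuple{p,q}}{u}$, giving exactly the left-hand side.

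The pivotal auxiliary fact needed to justify the substitution identity is the \emph{linearity of WTT application on compatible trees}: for every $q \in Q_2$, compatible trees $t,t' \in \abctrees$ on which $\trn_{2,q}$ is defined, and $\alpha \in \complex$, one has $\trn_{2,q}(\alpha\cdot t + t') = \alpha\cdot \trn_{2,q}(t) + \trn_{2,q}(t')$. I would prove this by a side induction on tree height: in the leaf case ($q \in E_2$) the transducer is the identity; in the inductive case, compatibility of $t$ and $t'$ means they share root label $a$ and their subtrees $\subtreeof{(\alpha\cdot t + t')}{0}$ and $\subtreeof{(\alpha\cdot t + t')}{1}$ equal $\alpha\cdot\subtreeof{t}{0}+\subtreeof{t'}{0}$ and $\alpha\cdot\subtreeof{t}{1}+\subtreeof{t'}{1}$, so the same rule $\delta_2(q,a) = (\linof{\ell_2},\linof{r_2})$ fires on all three trees and the result follows from the inductive hypothesis plus the fact that linear forms preserve linear combinations.

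The main obstacle, and the place where I would spend the most care, is the bookkeeping around the double-substitution operator $\linof{\ell_2}(\linof{\ell_1},\linof{r_1})$: one has to verify that it is well-defined as a linear form over $Q_1\times Q_2$, that it is compatible with substitution of trees (so that substituting trees into the $Q_1\times Q_2$ form equals first substituting linear tree-forms into $\linof{\ell_2}$ and then expanding), and that the domain conditions line up so that undefinedness on one side forces undefinedness on the other. Once this algebraic boilerplate is handled, the inductive step becomes a routine rewriting, and the theorem drops out immediately from the lemma.
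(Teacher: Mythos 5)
The paper states \cref{thm:image} without giving any proof (neither in the main text nor in the appendix, which only contains worked examples of the image construction), so there is nothing to compare your argument against; judged on its own, your proposal is correct and is the proof one would expect the authors to have in mind. The state-indexed strengthening $\semofof{\aut_3, \tuple{q_1,q_2}}{w} = \trn_{2,q_2}(\semofof{\aut_1,q_1}{w})$ (with matching definedness) is the right induction hypothesis, and you correctly isolate the two points where the real work sits: the linearity of $\trn_{2,q}$ on compatible trees --- which holds because definedness and the choice of firing transition depend only on the internal labels, which compatible trees share --- and the coefficient bookkeeping in $\linof{\ell_2}(\linof{\ell_1},\linof{r_1})$ when a pair $\tuple{p,q}$ receives contributions from both $q(\leftT)$ and $q(\rightT)$. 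One small point worth making explicit when you write this up: the paper's convention that a state with coefficient $0$ still belongs to the support is what makes the definedness directions of your lemma line up (an undefined $\semofof{\aut_1,p}{u}$ poisons both sides even when $p$ carries coefficient $0$), so the ``domain conditions'' you defer are genuinely sensitive to that convention.
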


%*******************************************************************************
\vspace{-0.0mm}
\subsection{Transducer Composition}\label{sec:trn_compose}
\vspace{-0.0mm}
%*******************************************************************************

Below, we give a~construction of a~transducer~$\trn_{\compose} = \trn_2 \compose
\trn_1$ representing the \emph{composition} of functions denoted by transducers 
$\trn_1 = \tuple{Q_1, \delta_1, \rootstate_1, E_1}$ and~$\trn_2 = \tuple{Q_2, \delta_2,
\rootstate_2, E_2}$ over~$\abc$.
We first extend the notation from the previous section such that for $q(s) \in
Q_2(\leftT, \rightT)$ with $s \in \{\leftT,
\rightT\}$ and $\linof \ell, \linof r \in
\linformsof{Q_1(\leftT,\rightT)}$, we use $q(s)(\linof \ell, \linof r)$ to denote the linear
form over $(Q_1 \times Q_2)(\leftT, \rightT)$ defined as
\begin{equation}
 q(s)(\linof \ell, \linof r) = \begin{cases}
   \sum_{\tuple{p,\leftT} \in \suppof{\linof \ell}} \linindexof \ell p \cdot \tuple{p, q}(\leftT) +
   \sum_{\tuple{p,\rightT} \in \suppof{\linof \ell}} \linindexof \ell p \cdot \tuple{p, q}(\rightT)
   & \text{if } s = \leftT, \\
   \sum_{\tuple{p,\leftT} \in \suppof{\linof r}} \linindexof r p \cdot \tuple{p, q}(\leftT) +
   \sum_{\tuple{p,\rightT} \in \suppof{\linof r}} \linindexof r p \cdot \tuple{p, q}(\rightT)
   & \text{if } s = \rightT .
 \end{cases}
\end{equation}
For instance, $q(\leftT)(a p_a(\leftT) + b p_b(\rightT), 0) = a \tuple{p_a,
q}(\leftT) + b \tuple{p_b, q}(\rightT)$.
Similarly as in the previous section,
we extend the notation from states $q \in Q_2$ to linear forms $\linof x \in
\linformsof{Q_2(\leftT,\rightT)}$ such that
$\linof x(\linof \ell, \linof r)$ is obtained from $\linof x$ by substituting every
occurrence of~$q(s)$ in~$\linof x$ by $q(s)(\linof \ell, \linof r)$ and
multiplying and summing up the coefficients to obtain a linear form over $(Q_1
\times Q_2)(\leftT, \rightT)$.

The composition $\trn_2 \compose \trn_1$ is then
the transducer~$\trn_{\compose} = (Q_{\compose}, \delta_{\compose},
\rootstate_{\compose}, E_{\compose})$ with its components defined as follows:
\begin{itemize}
  \item  $Q_{\compose} = Q_1 \times Q_2$, $\rootstate_{\compose} = \tuple{\rootstate_1, \rootstate_2}$, $E_{\compose} = E_1 \times E_2$, and
  \item  $\delta_{\compose} = \{
    \trntransof{\tuple{q_1, q_2}} a {\linof{\ell_2}(\linof{\ell_1}, \linof{r_1})}
                                    {\linof{r_2}(\linof{\ell_1}, \linof{r_1})}
                              \mid
    \trntransof{q_1} a {\linof{\ell_1}}{\linof{r_1}} \in \delta_1,
    \trntransof{q_2} a {\linof{\ell_2}}{\linof{r_2}} \in \delta_2\}$.
\end{itemize}
As usual, $\trn_{\compose}$ constructed by the procedure above may contain some
unreachable states and transitions; we therefore use a~reachability-based
construction starting from $\tuple{\rootstate_1, \rootstate_2}$ in the standard
way. An example showing how to compute a transducer composition can be found
in~\trOrAppendix{sec:image_compos_examples}.

\begin{theorem}\label{thm:trans_compose_correct}
For all trees $t \in \abctrees$ and WTTs~$\trn_1$ and~$\trn_2$ over~$\abc$,
  it holds that $(\trn_2 \compose \trn_1)(t) = \trn_2(\trn_1(t))$.
\end{theorem}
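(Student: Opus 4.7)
The plan is to prove, by structural induction on $t$, a strengthened statement: for every pair of states $(q_1, q_2) \in Q_1 \times Q_2$, it holds that $(\trn_{\compose})_{\tuple{q_1, q_2}}(t) = (\trn_2)_{q_2}((\trn_1)_{q_1}(t))$, with both sides being undefined on exactly the same inputs. The theorem then follows by taking $(q_1, q_2) = (\rootstate_1, \rootstate_2) = \rootstate_{\compose}$. Strengthening to arbitrary state pairs is necessary because the inductive unfolding of the transducer applies its subordinate states to subtrees.

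Before the induction, I would establish a linearity lemma: for every state $q$ of a WTT $\trn$, every scalar $\alpha \in \complex$, and every pair of compatible trees $t, t'$ (so that $t+t'$ and $\alpha \cdot t$ are well-defined), $\trn_q(t+t') = \trn_q(t) + \trn_q(t')$ and $\trn_q(\alpha \cdot t) = \alpha \cdot \trn_q(t)$, assuming all involved transducer images are defined. This is proved by induction on tree height: at the leaves, the operations act trivially; at an internal node of label $a$, the transition $(\linof{\ell}, \linof{r}) = \delta(q, a)$ is the same for $t, t'$ and $t+t'$ (compatibility guarantees the internal labels agree), and the evaluation $\linof{\ell}(t_0, t_1)$ is linear in each argument because substitution of a linear form distributes over sums of trees and commutes with scalar multiplication, using the inductive hypothesis on $\trn_p$ for $p \in \suppof{\linof \ell}$.

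With this lemma in hand, the main induction is straightforward. The base case is a leaf $t = \{\epsilon \mapsto a\}$: both sides return $t$ exactly when $q_1 \in E_1$ and $q_2 \in E_2$, which coincides with the condition $\tuple{q_1, q_2} \in E_{\compose} = E_1 \times E_2$. For the inductive case $t = \treecons a {t_0}{t_1}$, let $\delta_1(q_1, a) = (\linof{\ell_1}, \linof{r_1})$ and $\delta_2(q_2, a) = (\linof{\ell_2}, \linof{r_2})$, so that by the construction of $\trn_{\compose}$, $\delta_{\compose}(\tuple{q_1,q_2}, a) = (\linof{\ell_2}(\linof{\ell_1}, \linof{r_1}), \linof{r_2}(\linof{\ell_1}, \linof{r_1}))$. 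Unfolding the left-hand side yields a tree whose left subtree is $\linof{\ell_2}(\linof{\ell_1}, \linof{r_1})(t_0, t_1)$, which by the induction hypothesis equals $\sum_{p,q} \alpha_{p,q}\cdot (\trn_2)_q((\trn_1)_p(t_0 \text{ or } t_1))$ for the appropriate coefficients. Unfolding the right-hand side yields a tree whose left subtree is $\linof{\ell_2}(\linof{\ell_1}(t_0, t_1), \linof{r_1}(t_0, t_1))$; applying the linearity lemma to pull $(\trn_2)_q$ through the linear combinations $\linof{\ell_1}(t_0, t_1)$ and $\linof{r_1}(t_0, t_1)$ produces the same expression, and analogously for the right subtree.

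The main obstacle is purely bookkeeping: one has to unfold the two-level substitution $\linof{\ell_2}(\linof{\ell_1}, \linof{r_1})$ and verify that ``substitute then evaluate'' (the composed transducer) coincides with ``evaluate inner, then evaluate outer'' (sequential application). This is an associativity-of-substitution argument whose validity rests entirely on the linearity lemma; without linearity, pulling $(\trn_2)_q$ through $\linof{\ell_1}(t_0, t_1)$ is not justified. Finally, the undefinedness clauses propagate uniformly on both sides because the leaf/transition conditions at each level depend only on whether the corresponding state is a leaf state of the respective component (and $E_{\compose} = E_1 \times E_2$) and on the existence of a transition for the current symbol (which, by construction of $\delta_{\compose}$, occurs iff both $\delta_1$ and $\delta_2$ supply one).
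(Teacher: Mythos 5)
The paper states \cref{thm:trans_compose_correct} without any proof, in the main text or in the appendix of the technical report, so there is no official argument to compare yours against; I can only assess your proposal on its own terms, and it is correct. The two ingredients you isolate are exactly what a complete proof needs: (i) the strengthening of the claim to all state pairs, $(\trn_2 \compose \trn_1)_{\tuple{q_1,q_2}}(t) = (\trn_2)_{q_2}((\trn_1)_{q_1}(t))$, which is forced because the inductive step dispatches non-root states to subtrees; and (ii) the linearity of each $(\trn_2)_{q}$ on compatible trees, which is the only way to justify pushing $(\trn_2)_{q}$ through the tree-valued linear combination $\linof{\ell_1}(\subtreeof t 0, \subtreeof t 1)$ so that the ``evaluate inner, then outer'' side matches the ``substitute, then evaluate'' side given by $\linof{\ell_2}(\linof{\ell_1},\linof{r_1})$. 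Your induction for the linearity lemma also goes through because compatibility forces the two operands to agree on all internal labels, hence to trigger identical transitions, which simultaneously gives you that the domains of definition coincide. Two minor points worth making explicit if you write this up in full: first, when the two-level substitution merges several occurrences of the same ground term $\tuple{p,q}(d)$, the coefficients are summed, and you should check that this matches the corresponding regrouping of terms on the sequential side (it does, by commutativity of tree addition, but it is where the bookkeeping actually lives); second, the paper's convention that a term with coefficient $0$ still imposes a definedness and compatibility obligation means your claim that undefinedness propagates identically on both sides deserves one more line, namely that the support of $\linof{\ell_2}(\linof{\ell_1},\linof{r_1})$ references exactly the pairs $\tuple{p,q}$ for which the sequential computation would evaluate $(\trn_2)_q((\trn_1)_p(\cdot))$.
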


% \begin{proof}
% \ol{ToDo}
% \end{proof}

%%%%%%%%%%%%%%%%%%%%%%%%%%%%%%%%%%%%%%%%%%%%%%%%%%%%%%%%%%%%%%%%%%%%%%%%%%%%%%%%
\vspace{-0.0mm}
\section{Transducers for Quantum Gates}\label{sec:quantumWTT}
\vspace{-0.0mm}
%%%%%%%%%%%%%%%%%%%%%%%%%%%%%%%%%%%%%%%%%%%%%%%%%%%%%%%%%%%%%%%%%%%%%%%%%%%%%%%%

In this section, we provide constructions of the transducers for a broad class
of quantum gates, including those forming a universal gate set and quantum
Fourier transform (QFT). Building on the composition algorithm introduced
in~\cref{sec:trn_compose}, we construct the transducer for a fixed-size
circuit---referred to as a~\emph{box}---and further present an algorithm
systematically generating a~transducer for size-parameterized circuit families
of a specified repetition pattern. 
Importantly, the expressive power of transducers extends beyond such structured
design: they are capable of describing more general families of circuits. For
example, as shown in~\cref{sec:BV}, the transducer of the Bernstein-Vazirani
(BV) circuit cannot be \mbox{obtained through this pattern-based instantiation alone.}
%\jal{tbe-end} \lh{isnt this paragraph actually ok?}

%*******************************************************************************
\vspace{-0.0mm}
\subsection{Transducers for Atomic Quantum Gates}
\vspace{-0.0mm}
%*******************************************************************************
We assume that a~quantum circuit is operating on $m$ qubits, labeled in order
as $x_1, \ldots, x_m$.
The two main types of atomic quantum gates used in state-of-the-art quantum computers are single-qubit gates and (multi-)controlled gates. In general, a single-qubit gate is represented as a \emph{unitary complex matrix} $ 
U=\big(\begin{smallmatrix} a &b \\ c &d \end{smallmatrix}\big)$. 
A~controlled gate $CU$ uses another quantum gate~$U$ as its parameter.
$CU$~consists of a~control qubit~$x_i$ and the gate $U$ is applied only when the
control qubit~$x_i$ has value~$1$. 
In this section, we construct the transducers for a broad class of quantum
gates, including those forming a~universal gate set and quantum Fourier
transform (QFT), operating on a quantum system of a~fixed size~$m$.
Let us fix a~single-qubit gate $ U= \big(\begin{smallmatrix} a &b \\ c &d \end{smallmatrix}\big) $.

\paragraph{Single Qubit Gates}
\begin{changebar}
The transducer $\trn_{U,i,m}$ for a~unitary~$U$ operating on qubit~$x_i$ in a circuit with~$m$ qubits is constructed as $\trn_{U,i,m}=\tuple{Q, \delta, q_0, E}$ over the alphabet $\abc = \{x_1, \ldots, x_m\}$ where 
$Q= \{q_j \mid 0 \leq j \leq m\}$,
$E= \{q_m\}$, and $\delta$ consists of the following transitions:
\begin{align}
  q_{j} &\to x_{j+1}( q_{j+1} (\leftT) , q_{j+1} (\rightT)) && \text{for } j \neq i-1, j<m  \label{eqn:counter} \\
     q_{i-1} & \to  x_i(a q_i (\leftT) +  b q_i (\rightT),c q_i (\leftT) + d q_i (\rightT)) &&  \label{eq:gateApplication} 
\end{align}
%For each $j \neq i-1, j < m$ we have
%\begin{equation}
%    \label{eqn:counter}
%    \trntransof {q_{j}} {x_{j+1}}{ q_{j+1} \leftT } { q_{j+1} \rightT}
%\end{equation} 
% For each $k \in \{1,2,3,4\}$ we have $\trntransof {q_{i}} {x_{i+1}}{ q_{i+1} \leftT +   0 \rightT}{ q_{i+1} \rightT +  0 \leftT}$. 
Intuitively, we have a number of states $q_j$, which are used to count from~$0$ to~$m$ as shown in \cref{eqn:counter}. The effect of applying $U$ at $q_{i-1}$ is captured as~\cref{eq:gateApplication}.
\end{changebar}
%\begin{equation}
%    \label{eq:gateApplication}
%    \trntransof {q_{i-1}} {x_i}{a q_i \leftT +  b q_i \rightT}{c q_i \leftT + d q_i \rightT}
%\end{equation}
This concludes the description of the transducer $\trn_{U,i,m}$. 
In the case where $U=I$ is the identity matrix, the transducer just corresponds to a~wire and the index $i$ does not matter. In this case we simply write $\trn_{I,m}$. 

\newcommand{\fzero}{\mathsf{zero}}
\paragraph{Multi-controlled Gates}
\begin{changebar}
Let $i < j \leq m$. We will explain our construction of the transducer $\trn_{C_jU_i,m}$ for the controlled-$U$ gate $C_jU_i$ with control qubit $x_j$ and target qubit $x_i$ in a~circuit with~$m$ qubits.
Let $\trn_1=\trn_{I,m}$ and $\trn_2=\trn_{U,i,m}$. 
Observe that, after applying $C_jU_i$, the $0$-subtrees below $x_j$ remain the
same but the $1$-subtrees below $x_j$ are updated to the corresponding ones
after applying $U$ to qubit~$x_i$. Let us first define the operation
$\fzero\leftT_j$ on WTTs. The operation changes the input transducer such that the resulting transducer modifies the leaves of $0$-branches
below~$x_j$ to zero
in the image of the transducer.
Formally, given a~transducer~$\trn$ for a~single-qubit
gate, $\fzero\leftT_j (\trn)$ is obtained by taking~$\trn$ and replacing the
transition 
\end{changebar}
\begin{equation}
\trntransof {q} {x_j}{\linof{\ell_1} (\leftT) +  \linof{r_1} (\rightT)}{\linof{\ell_2} (\leftT) + \linof{r_2} (\rightT)},
\end{equation}
which acts on the qubit~$x_j$, by the transition
\begin{changebar}
\begin{equation}
\trntransof {q} {x_j}{\fzero_{\linof{\ell_1}}(\leftT) +  \fzero_{\linof{r_1}} (\rightT)}{\linof{\ell_2} (\leftT) + \linof{r_2} (\rightT)}.
\end{equation}
%We first take a look at the two-qubit control gates $CU$. Let $i<j\leq m$, $\trn_1=\trn_{I,m}$ and $\trn_2=\trn_{U,i,m}$. The effect of applying $C_jU_i$ is given by the transformer $\trn_{C_jU_i,m}$, which is constructed as follows. 
%To define $\trn'_2$, we define the operation $zero\leftT_j$ as follows. $\trn'_2=zero\leftT_j(\trn_2)$ is obtained by modifying $\trn_2$ as follows: replace transition $$\trntransof {q} {x_j}{\linof{\ell_1} \leftT +  \linof{r_1} \rightT}{\linof{\ell_2} \leftT + \linof{r_2} \rightT}$$ which acts on the label $x_j$ by the transition $$\trntransof {q} {x_j}{zero(\linof{\ell_1}) \leftT +  zero(\linof{r_1}) \rightT}{\linof{\ell_2} \leftT + \linof{r_2} \rightT}.$$ 
Given a~linear form~$\linof v$, the linear form $\fzero_{\linof v}$ is obtained from~$\linof v$ by modifying all its coefficients to zero.
% The function $\fzero$ modifies all coefficients of the linear form to zero.
\end{changebar}
All transitions of~$\trn$ occurring over other symbols are retained. 
Note that we can similarly define $\fzero\rightT_j$, which modifies the
coefficients of the right-hand child. 
We set $\trn'_1=\fzero\rightT_j(\trn_1)$ and $\trn'_2 = \fzero\leftT_j(\trn_2)$. 
Finally, we construct 
\begin{equation}
\trn_{C_jU_i,m}=\trn'_2+\trn'_1=  \fzero\leftT_j(\trn_{U,i,m}) + \fzero\rightT_j(\trn_{I,m}).
\end{equation}
Here, addition of transducers is an operation that produces a~transducer
$\trn_a+\trn_b$, such that for a~tree~$t$, it holds that
$(\trn_a+\trn_b)(t) = \trn_a(t)+\trn_b(t)$.  
\begin{changebar}
Formally, $\trn_a + \trn_b$ is defined as follows.
%similarly as~\cref{sec:colored_equivalence_checking}: 
Let $\trn_a=(Q_a,\delta_a,\rootstate_a, E_a)$ and
$\trn_b=(Q_b,\delta_b,\rootstate_b, E_b)$ be WTTs over~$\abc$ where $Q_a$ and
$Q_b$ are disjoint.
Let $t_a=\trntransof {\rootstate_a} {x_1}{\linof{\ell_1} (\leftT) +  \linof{r_1}
(\rightT)}{\linof{r_2} (\rightT) + \linof{\ell_2} (\leftT)} \in \delta_a$ and
$t_b=\trntransof {\rootstate_b} {x_1}{\linof{\ell'_1} (\leftT) +  \linof{r'_1}
(\rightT)}{\linof{r'_2} (\rightT) + \linof{\ell'_2} (\leftT)} \in \delta_b$.
Then $\trn_a+\trn_b=\trn_{C_jU_i,m}=(Q,\delta,\rootstate,E)$ where 
$Q=(Q_a \cup Q_b \cup \{\rootstate\}) \setminus \{\rootstate_a,\rootstate_b\}$,
$E=E_a \cup E_b$, and~$\delta$ contains all transitions in $(\delta_a \cup
\delta_b) \setminus \{t_a, t_b\}$ and also the transition $\trntransof {\rootstate}
{x_1}{(\linof{\ell_1}+ \linof{\ell'_1}) (\leftT) +  (\linof{r_1}+\linof{r'_1})
(\rightT)}{(\linof{r_2}+\linof{r'_2}) (\rightT) +
(\linof{\ell_2}+\linof{\ell'_2}) (\leftT)}$. 
\end{changebar}

The construction can be directly generalized to multi-controlled gates. For example, $$\trn_{C_kC_jU_i,m}= 
\fzero\leftT_k(\trn_{C_jU_i,m}) + \fzero\rightT_k(\trn_{I,m})=
\fzero\leftT_k(\fzero\leftT_j(\trn_{U,i,m}) + \fzero\rightT_j(\trn_{I,m})) + \fzero\rightT_k(\trn_{I,m}).$$ 
%\jal{tbe-end}

\begin{theorem}\label{thm:gates_WTT}
    The transducers for single-qubit gates and multi-controlled gates are semantically correct. Moreover, each transducer can be constructed using~$\bigO(m)$ states and~$\bigO(m)$ transitions, where~$m$ is the number of qubits in the circuit.
    %$U$, a controlled-gate $CU$, or a $CCU$ gate can be constructed using $O(m)$ states and $O(m)$ transitions.
    % \ol{more formal?}
\end{theorem}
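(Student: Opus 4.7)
The plan is to prove correctness separately for single-qubit gates and for controlled gates, relying on the tree-based semantics where a perfect binary tree of height $m$ encodes a $2^m$-dimensional amplitude vector by listing leaf values left-to-right, and on the fact that applying $U$ to qubit $x_i$ acts on the pairs of amplitudes that differ exactly in bit~$i$, i.e., on the pairs of sibling subtrees rooted at depth $i-1$.

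First I would verify $\trn_{U,i,m}$ by induction on the depth $j$ of recursion. For $j<i-1$ and $j>i$, the transition in \cref{eqn:counter} is the ``identity'' recursion: the left child goes to $q_{j+1}$ applied to the left input subtree and the right child to $q_{j+1}$ applied to the right input subtree, so these depths just propagate subtrees unchanged. At depth $i-1$, the transition of \cref{eq:gateApplication} produces a left output subtree $a\cdot q_i(\leftT)+b\cdot q_i(\rightT)$ and a right output subtree $c\cdot q_i(\leftT)+d\cdot q_i(\rightT)$, which---after the leaf state $q_m$ acts as identity from depth~$i$ downwards---exactly computes the linear combination $\bigl(\begin{smallmatrix}a&b\\c&d\end{smallmatrix}\bigr)$ acting on every pair of sibling subtrees at depth $i-1$. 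Since every amplitude pair differing only in bit~$i$ is such a pair, this realizes $U$ on qubit $x_i$ on every basis component; by linearity (the semantics of WTTs is linear in the leaves), it realizes $U\otimes I$ on every superposition. The base case $j=m$ uses that $q_m$ is a leaf state and trees are of height $m$.

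Next I would establish two auxiliary lemmas. The first says that $\fzero\leftT_j(\trn)$ computes the same tree as $\trn$ except that all leaves below the $\leftT$-child of every depth-$(j-1)$ subtree are replaced by~$0$; this follows directly from the definition because zeroing the coefficients of a linear form zeros its evaluation regardless of the substituted subtrees. Analogously for $\fzero\rightT_j$. The second lemma says that the addition $\trn_a+\trn_b$ of two transducers---which merges the root transitions by adding corresponding linear forms---satisfies $(\trn_a+\trn_b)(t)=\trn_a(t)+\trn_b(t)$ whenever both images are defined and compatible; this follows from the distributivity of the tree sum over the linear-form substitution. Combining these with the decomposition $C_jU_i = \ketof{0}\!\bra{0}_j\otimes I_i + \ketof{1}\!\bra{1}_j\otimes U_i$, the construction $\trn_{C_jU_i,m}=\fzero\leftT_j(\trn_{U,i,m})+\fzero\rightT_j(\trn_{I,m})$ produces trees whose leaves below the $\rightT$-child of $x_j$ equal the $\trn_{U,i,m}$-image (applying $U$ to $x_i$ when $x_j=1$), and whose leaves below the $\leftT$-child of $x_j$ equal the identity image (leaving $x_i$ alone when $x_j=0$). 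Multi-controlled gates follow by induction on the number of controls, applying the same $\fzero$-plus-addition scheme to a transducer that already realizes the inner controlled gate.

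For the complexity claim, $\trn_{U,i,m}$ clearly has $m+1$ states and $m$ transitions. The $\fzero$ operations do not change the state set or the number of transitions, and transducer addition introduces exactly one new root state while removing the two old root transitions and adding one combined one, so $\trn_{C_jU_i,m}$ has $\bigO(m)$ states and $\bigO(m)$ transitions. Iterating this argument through a fixed number of controls keeps the bound linear in $m$. The main obstacle I anticipate is bookkeeping in the correctness argument for controlled gates: one has to argue cleanly that (i)~the two summand transducers produce trees of the same height on the same domain so their sum is defined, (ii)~the zeroed branches of the two summands are complementary so their pointwise sum reconstructs the expected image exactly, and (iii)~the addition operator distributes over the linear-form evaluation inside the recursive tree construction; checking (iii) carefully, particularly that sibling subtrees remain compatible throughout the recursion under the assumption that all trees are perfect, is where most of the routine calculation lies.
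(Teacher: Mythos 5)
Your proof is correct and takes essentially the approach the paper intends: the paper states this theorem without an explicit proof, relying on the informal justification accompanying the construction, and your argument (identity recursion at all levels $j \neq i-1$, the $U$-linear-combination of sibling subtrees at level $i-1$ realizing $U$ on every amplitude pair differing in bit~$i$, and correctness of the $\fzero$-plus-addition scheme via the decomposition of a controlled gate into its $x_j{=}0$ and $x_j{=}1$ projections) is precisely the formalization of that justification. Your explicit caveat that the $\bigO(m)$ bound for multi-controlled gates presumes a fixed number of controls---since each additional control contributes another $\bigO(m)$-sized summand under the paper's transducer addition---is a point the paper leaves implicit and is worth retaining.
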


%%%%%%%%%%%%%%%%%%%%%%%%%%%%%%%%%%%%%%%%%%%%%%%%%%%%%%%%%%%%%%
\newcommand{
\begin{figure}[t]
\resizebox{\textwidth}{!}{
{
\newcommand{\makedots}[0]{\lstick[label style={xshift=-1mm,yshift=-8mm}]{\vdots}}
\newcommand{\makedotsb}[0]{\lstick[label style={xshift=0mm,yshift=-5mm}]{\vdots}}
\newcommand{\qftgroup}[4]{\gtgroup{$\mathit{QFT}_{#1}$}{#2}{#3}{#4}}
\begin{quantikz}
  \lstick{$x_1$}          & \gate{H}\qftgroup n 5 {15} {blue!10}& \ctrl{1}   & \ctrl{2}   & \ctrl{3}       & \ctrl{4}   & \qw     & \qw        & \qw            & \qw            & \qw      \qw & \qw            & \qw            & \qw     & \qw        & \qw     & \swap{4} \gtgroup {reverse} 5 4 {red!20} & \qw & \qw      & \qw  &\qw\\
  \lstick{$x_2$}          & \qw      & \gate{R_2} & \qw        & \qw            & \qw        & \gate H \qftgroup{n-1} 4 {10}{blue!20} & \ctrl{1}   & \ctrl{2}       & \ctrl{3}       & \qw          & \qw            & \qw            & \qw     & \qw        & \qw     & \qw      & \swap{2} & \qw & \qw &\qw \\
  \lstick{$x_3$}\makedots & \qw      & \qw        & \gate{R_3} & \qw            & \qw        & \qw     & \gate{R_2} & \qw            &            & \gate{H}\qftgroup{n-2} 3 6 {blue!30}    & \ctrl 1        & \ctrl 2        & \qw     & \makedotsb         &  \qw & \qw      & \qw      & \qw & \qw\makedots & \qw\\[10mm]
  \lstick{$x_{n-1}$}      & \qw      & \qw        & \qw        & \gate{R_{n-1}} & \qw        & \qw     & \qw        & \gate{R_{n-2}} & \qw            & \qw          & \gate{R_{n-3}} & \qw            & \gate H \qftgroup 2 2 3 {blue!40} & \ctrl 1    & \qw     & \qw      & \targX{} & \qw & \qw & \qw\\
  \lstick{$x_n$}          & \qw      & \qw        & \qw        & \qw            & \gate{R_n} & \qw     & \qw        & \qw            & \gate{R_{n-1}} & \qw          & \qw            & \gate{R_{n-2}} & \qw     & \gate{R_2} & \gate H \qftgroup 1 1 1 {blue!50}& \targX{} & \qw      & \qw & \qw & \qw
\end{quantikz}
}
}
%\caption{The QFT circuit, where $R_{k}={\displaystyle \begin{pmatrix}1&0\\0&\gamma_k=\omega^{2^{n-k}}\end{pmatrix}}$ and $\omega = e^{ \frac{2\pi i}{n} }$}
\vspace{-3mm}
\caption{The QFT circuit, where $R_{k}={\displaystyle \bigl( \begin{smallmatrix}1&0\\0&\gamma_k\end{smallmatrix} \bigr)}$ and $\gamma_k=\omega^{2^{n-k}}$ with $\omega = e^{ \frac{2\pi i}{n} }$.}
\label{fig:QFTcircuit}
\vspace*{-5mm}
\end{figure}
}[0]{
\begin{figure}[t]
\resizebox{\textwidth}{!}{
{
\newcommand{\makedots}[0]{\lstick[label style={xshift=-1mm,yshift=-8mm}]{\vdots}}
\newcommand{\makedotsb}[0]{\lstick[label style={xshift=0mm,yshift=-5mm}]{\vdots}}
\newcommand{\qftgroup}[4]{\gtgroup{$\mathit{QFT}_{#1}$}{#2}{#3}{#4}}
\begin{quantikz}
  \lstick{$x_1$}          & \gate{H}\qftgroup n 5 {15} {blue!10}& \ctrl{1}   & \ctrl{2}   & \ctrl{3}       & \ctrl{4}   & \qw     & \qw        & \qw            & \qw            & \qw      \qw & \qw            & \qw            & \qw     & \qw        & \qw     & \swap{4} \gtgroup {reverse} 5 4 {red!20} & \qw & \qw      & \qw  &\qw\\
  \lstick{$x_2$}          & \qw      & \gate{R_2} & \qw        & \qw            & \qw        & \gate H \qftgroup{n-1} 4 {10}{blue!20} & \ctrl{1}   & \ctrl{2}       & \ctrl{3}       & \qw          & \qw            & \qw            & \qw     & \qw        & \qw     & \qw      & \swap{2} & \qw & \qw &\qw \\
  \lstick{$x_3$}\makedots & \qw      & \qw        & \gate{R_3} & \qw            & \qw        & \qw     & \gate{R_2} & \qw            &            & \gate{H}\qftgroup{n-2} 3 6 {blue!30}    & \ctrl 1        & \ctrl 2        & \qw     & \makedotsb         &  \qw & \qw      & \qw      & \qw & \qw\makedots & \qw\\[10mm]
  \lstick{$x_{n-1}$}      & \qw      & \qw        & \qw        & \gate{R_{n-1}} & \qw        & \qw     & \qw        & \gate{R_{n-2}} & \qw            & \qw          & \gate{R_{n-3}} & \qw            & \gate H \qftgroup 2 2 3 {blue!40} & \ctrl 1    & \qw     & \qw      & \targX{} & \qw & \qw & \qw\\
  \lstick{$x_n$}          & \qw      & \qw        & \qw        & \qw            & \gate{R_n} & \qw     & \qw        & \qw            & \gate{R_{n-1}} & \qw          & \qw            & \gate{R_{n-2}} & \qw     & \gate{R_2} & \gate H \qftgroup 1 1 1 {blue!50}& \targX{} & \qw      & \qw & \qw & \qw
\end{quantikz}
}
}
%\caption{The QFT circuit, where $R_{k}={\displaystyle \begin{pmatrix}1&0\\0&\gamma_k=\omega^{2^{n-k}}\end{pmatrix}}$ and $\omega = e^{ \frac{2\pi i}{n} }$}
\vspace{-3mm}
\caption{The QFT circuit, where $R_{k}={\displaystyle \bigl( \begin{smallmatrix}1&0\\0&\gamma_k\end{smallmatrix} \bigr)}$ and $\gamma_k=\omega^{2^{n-k}}$ with $\omega = e^{ \frac{2\pi i}{n} }$.}
\label{fig:QFTcircuit}
\vspace*{-5mm}
\end{figure}
}

\begin{figure}[t]
\resizebox{\textwidth}{!}{
{
\newcommand{\makedots}[0]{\lstick[label style={xshift=-1mm,yshift=-8mm}]{\vdots}}
\newcommand{\makedotsb}[0]{\lstick[label style={xshift=0mm,yshift=-5mm}]{\vdots}}
\newcommand{\qftgroup}[4]{\gtgroup{$\mathit{QFT}_{#1}$}{#2}{#3}{#4}}
\begin{quantikz}
  \lstick{$x_1$}          & \gate{H}\qftgroup n 5 {15} {blue!10}& \ctrl{1}   & \ctrl{2}   & \ctrl{3}       & \ctrl{4}   & \qw     & \qw        & \qw            & \qw            & \qw      \qw & \qw            & \qw            & \qw     & \qw        & \qw     & \swap{4} \gtgroup {reverse} 5 4 {red!20} & \qw & \qw      & \qw  &\qw\\
  \lstick{$x_2$}          & \qw      & \gate{R_2} & \qw        & \qw            & \qw        & \gate H \qftgroup{n-1} 4 {10}{blue!20} & \ctrl{1}   & \ctrl{2}       & \ctrl{3}       & \qw          & \qw            & \qw            & \qw     & \qw        & \qw     & \qw      & \swap{2} & \qw & \qw &\qw \\
  \lstick{$x_3$}\makedots & \qw      & \qw        & \gate{R_3} & \qw            & \qw        & \qw     & \gate{R_2} & \qw            &            & \gate{H}\qftgroup{n-2} 3 6 {blue!30}    & \ctrl 1        & \ctrl 2        & \qw     & \makedotsb         &  \qw & \qw      & \qw      & \qw & \qw\makedots & \qw\\[10mm]
  \lstick{$x_{n-1}$}      & \qw      & \qw        & \qw        & \gate{R_{n-1}} & \qw        & \qw     & \qw        & \gate{R_{n-2}} & \qw            & \qw          & \gate{R_{n-3}} & \qw            & \gate H \qftgroup 2 2 3 {blue!40} & \ctrl 1    & \qw     & \qw      & \targX{} & \qw & \qw & \qw\\
  \lstick{$x_n$}          & \qw      & \qw        & \qw        & \qw            & \gate{R_n} & \qw     & \qw        & \qw            & \gate{R_{n-1}} & \qw          & \qw            & \gate{R_{n-2}} & \qw     & \gate{R_2} & \gate H \qftgroup 1 1 1 {blue!50}& \targX{} & \qw      & \qw & \qw & \qw
\end{quantikz}
}
}
%\caption{The QFT circuit, where $R_{k}={\displaystyle \begin{pmatrix}1&0\\0&\gamma_k=\omega^{2^{n-k}}\end{pmatrix}}$ and $\omega = e^{ \frac{2\pi i}{n} }$}
\vspace{-3mm}
\caption{The QFT circuit, where $R_{k}={\displaystyle \bigl( \begin{smallmatrix}1&0\\0&\gamma_k\end{smallmatrix} \bigr)}$ and $\gamma_k=\omega^{2^{n-k}}$ with $\omega = e^{ \frac{2\pi i}{n} }$.}
\label{fig:QFTcircuit}
\vspace*{-5mm}
\end{figure}
   %%%%%%%%%%%%%%%%

\paragraph{Quantum Fourier Transform Gate}
Given the central role of the Quantum Fourier Transform~(QFT), given in \cref{fig:QFTcircuit}, in many quantum algorithms, 
we explicitly discuss the construction of the transducer for the standard circuit implementation of the $n$-qubit QFT gate. 
The construction can be derived directly using the transducers for atomic gates and composition procedure described in~\cref{sec:trn_compose}. 
The construction is detailed in~\trOrAppendix{sec:QFT}, here we provide a short overview to highlight its structure and efficiency. 
%
%We overview it here as a standalone instance to highlight its structure and efficiency. 
%For illustration, \cref{fig:qft_WTT} shows t
The transducer for $\qft_{[1\ldots n]}$ acting on a~sequence of~$n \leq m$ contiguous qubits requires $\bigO(n^2)$ states and transitions and 
to generalize to the full $m$-qubit system setting, we append $\bigO(m)$ additional states and transitions in order to propagate qubits outside the QFT range unchanged.
Overall, the transducer has $\bigO(n^2+m)$ states and $\bigO(n^2+m)$ transitions.
\begin{changebar}
We~emphasize that the transducer is obtained via the standard composition
construction, which has an exponential upper bound on the size of the output
(because it is, essentially, a~product construction and we are applying
it~$\bigO(n^2)$ times).
The reason for the only quadratic size of the result comes from the structure
of the actual transducers being composed.
\end{changebar}

Intuitively,
the transducer for QFT
is built from basic components, transducers representing gates with a parametric number of qubits:
the transducer $\htrn i$ models the Hadamard gate applied on the qubit $x_i$, 
the transducer $\rtrn i$ models the
\begin{changebar}
`controlled-multi-rotation gate'---in
\end{changebar}
the figure, it represents the chain of gates $R_{2},\ldots,R_{n+1-i}$ in the $i$-th  blue
\begin{changebar}
`box'
\end{changebar}
from the left that are applied on qubits $x_{1+i},\ldots, x_{n}$ respectively and controlled by qubit~$x_i$.  
The transducer representing the whole QFT circuit is now obtained by composing these basic blocks from left to right as shown in the figure, as the transducer  
$((((\htrn 1\compose \rtrn 1)\compose \htrn 2) \compose \rtrn 2) \compose \cdots \compose  )\compose\htrn n$.
We do not model the reversion, the right-most red box, in the transducer, as we find it easier to include the reversion into the specification of the verification post-condition.
\begin{changebar}
Modelling the reversion would result in an exponential blow-up in the size of
the transducer.
\end{changebar}

\newcommand{\figMajSteps}[0]{
\begin{wrapfigure}[14]{r}{60mm}
\ifTR
\else
\vspace{-5mm}
\fi
\scalebox{0.7}{
\begin{quantikz}
  \lstick{$\ket{x_1}$}      & \gId & \gMAJp{1} & \gId  & \gId  & \gId & \qw\\
  \lstick{$\ket{x_2}$}      & \gId &       & \gId  & \gId  & \gId & \qw\\
    \lstick{$\ket{x_3}$}      & \gId &       & \gMAJp{2} & \gId  & \gId & \qw\\
  \lstick{$\ket{x_4}$}      & \gId & \gId  & \qw   & \gId  & \gId & \qw\\
    \lstick{$\ket{x_5}$} & \gId & \gId  &       & \gMAJp{3} & \gId & \qw\\
  \lstick{$\ket{x_6}$}      & \gId & \gId  & \gId  & \qw   & \gId & \qw\\
  \lstick{$\ket{x_7}$}      & \gId & \gId  & \gId  & \qw   & \gId & \qw\\[-5mm]
\hspace*{-5mm}\vdots\hspace*{5mm}\setwiretype{n} &   \vdots      &  \vdots & \vdots      & \vdots & \vdots \\[-2mm]
\end{quantikz}
}
\vspace{-6mm}
  \caption{Parameterized MAJ example}
  \label{fig:majid}
\end{wrapfigure}%
}

\begin{remark}
In our use cases, the obtained trasducers were all of a~tractable size.
The transducer composition construction in the worst case, however, yields
a~transducer of a~quadratic size, so the size of a~transducer representing
the whole circuit might be exponential to the depth of a~circuit.
One can see this, e.g., when writing a~transducer modelling reversal (as done,
e.g., in the case of QFT, cf.\ \cref{fig:QFTcircuit}), the size of which is
exponential (reversion is a~purely combinatorial operation that performs
a~given permutation on the leaves of tree).
\end{remark}

%%%%%%%%%%%%%%%%%%%%%%%%%%%%%%%%%%%%%%%%%%%%%%%%%%%
\newcommand{\algParamTrans}[0]{
\SetKw{Continue}{continue}
\SetKwFunction{DF}{DiscoverTransition}
\SetKwProg{Fn}{Subroutine}{:}{}

\begin{figure}[t]
\begin{algorithm}[H]
\caption{Construction of Size-Parameterized Transducers}
\label{alg:compose_PWTT}
\KwIn{A transducer $\trn=(Q,\delta,\rootstate, E)$ over $\Gamma$ satisfying conditions in the text\\
\phantom{\bf Input: }a number $n>0$ denoting the offset of boxes\\
\phantom{\bf Input: }$d \in \{\mathit{left}, \mathit{right}\}$ indicating the direction in which the next box should appear}
\KwOut{A size-parameterized transducer $C$ implementing the composition of~$\trn$}

Initialize $\delta_C \gets \emptyset$, $Q_C \gets \emptyset$, $E_C \gets \emptyset$\;
$\mathit{WList} \gets \{((\tuple{\rootstate}, \compbgn),n)\}$\;
\While{$\mathit{WList}\neq \emptyset$}{
    Pop $(\mathbf{q}=(\tuple{q_1, \dots, q_k}, s),i)$ from $\mathit{WList}$, add $\mathbf{q}$ to $Q_C$ and, if $\{q_1,\dots, q_k\} \subseteq E$, also to $E_C$\;
    \ForEach{$w\in \Gamma$}{
      $(\linof \ell, \linof r) \gets \delta(q_k,w)\circ\cdots\circ\delta(q_2,w)\circ\delta(q_1,w)$\label{ln:firstcomp}\;
      \lIf{$(\linof{\ell},\linof{r}) = \bot$}{\Continue}
      \If(\tcp*[f]{continuing in the current box(es)}){$i \neq 1 \lor s = \compend$}{
        \lIf{$s = \compbgn$}{\DF{$\trntransof{\mathbf{q}} w {\linof \ell} {\linof r}, s, i-1$}}
        \lElse{\DF{$\trntransof{\mathbf{q}} {w'} {\linof \ell} {\linof r}, s, i-1$}}
      }
      \Else(\tcp*[f]{potentially start a new box}){
        $\DF{$\trntransof{\mathbf{q}} {w'} {\linof \ell} {\linof r}, \compend, i-1$}$\tcp*{no new box}
        \lIf(\tcp*[f]{new box to the left}){$d = \mathit{left}$\label{ln:secondcomp}}{
          $(\linof{\ell'}, \linof{r'}) \gets (\linof \ell, \linof r) \circ (\rootstate (\leftT),\rootstate(\rightT))$
        }
        \lElse(\tcp*[f]{new box to the right}){\label{ln:thirdcomp}
          $(\linof{\ell'}, \linof{r'}) \gets (\rootstate (\leftT),\rootstate(\rightT)) \circ (\linof \ell, \linof r)$
        }
        \DF{$\trntransof{\mathbf{q}} w {\linof{\ell'}}{\linof{r'}}, \compbgn, n$}\;
      }
    }
    \Return $(Q_C,\delta_C,(\tuple{\rootstate}, \compbgn),E_C)$\;
}
\BlankLine
\Fn{\DF{$\trntransof {\mathbf{q}} {u} {\linof{\ell}} {\linof{r}}, \mathit{StateTag}, \mathit{Pos}$}}{
  $\linof{\ell'} \gets \linof\ell\specsubst{\mathbf{t}}{(\mathbf{t}, \mathit{StateTag})}$;
  $\linof{r'} \gets \linof r\specsubst{\mathbf{t}}{(\mathbf{t}, \mathit{StateTag})}$\label{ln:strangesubst}\;
  $\delta_C \gets \delta_C \cup \{\trntransof{\mathbf{q}} {u} {\linof{\ell'}} {\linof{r'}}\}$\;
  \ForEach{$\mathbf{p} \in \suppof{\linof{\ell'}} \cup \suppof{\linof{r'}}$}{
    \lIf{$\mathbf{p}\notin Q_C \land \nexists j(\mathbf{p},j)\in \mathit{WList}$}{
      Add $(\mathbf{p},\mathit{Pos})$ to $\mathit{WList}$
    }
  }
}
\end{algorithm}
\vspace*{-4mm}
\end{figure}
}

%%%%%%%%%%%%%%%%%%%%%%%%%%%%%%%%%%%%%%%%%%%%%%%%%%%
\newcommand{\algParamTransNew}[0]{
\SetKw{Continue}{continue}
\begin{algorithm}[t]
\caption{Construction of Size-Parameterized Transducers \ol{????}}
\label{alg:compose_PWTT}
\KwIn{A transducer $\trn=(Q,\delta,\rootstate, E)$ over $\Gamma$ \mh{definined only on trees of height $H$} \\
  \phantom{\bf Input: }a number $n>0$ denoting the offset of boxes\\
}
\KwOut{Composed transducer $C$ for the entire circuit \ol{what?}}

Initialize $\delta_C \gets \emptyset$, $Q_C \gets \emptyset$, $E_C \gets \emptyset$\;
$\mathit{WList} \gets \{((\tuple{\rootstate}, \compbgn),0)\}$, $\mathit{Known} \gets \{((\tuple{\rootstate}, \compbgn),0)\}$\;
\While{$\mathit{WList}\neq \emptyset$}{
    Pop $(\mathbf{q}=(\tuple{q_1, \dots, q_k}, s),i)$ from $\mathit{WList}$, add $\mathbf{q}$ to $Q_C$ and, if $\{q_1,\dots, q_k\} \subseteq E$, also to $E_C$\;
    \ForEach{$w\in \Gamma$}{
      \If{$i = H$}
          {$(\linof \ell, \linof r) \gets \delta(q_k,w)\circ\cdots\circ\delta(q_2,w)$, $d \gets H - n$ \tcp*{leftmost box ends}} 
      \Else
          {$(\linof \ell, \linof r) \gets \delta(q_k,w)\circ\cdots\circ\delta(q_2,w)\circ\delta(q_1,w), d \gets i + 1$}
      \lIf{$(\linof{\ell},\linof{r}) = \bot$}{\Continue}
      \If(\tcp*[f]{decide whether staircase continues}){$i = n - 1 \land s = \compbgn$}{
        $\linof{\ell_\mathit{Cont}} \gets \linof \ell \circ (\rootstate (\leftT), \rootstate(\rightT))$;
        $\linof{r_\mathit{Cont}} \gets \linof r \circ (\rootstate (\leftT), \rootstate(\rightT))$\;
        $\mathsf{DiscoverTransition}({\trntransof{\mathbf{q}} {w} {\linof{\ell_\mathit{Cont}}} {\linof{r}_\mathit{Cont}}}, \compbgn)$\;
        $\mathsf{DiscoverTransition}({\trntransof{\mathbf{q}} {w'} {\linof{\ell}} {\linof{r}}}, \compend)$\;
      }
      \Else{
        $\mathsf{DiscoverTransition}({\trntransof{\mathbf{q}} {w} {\linof{\ell}} {\linof{r}}}, s)$\;
      }
    }
  }
  \Return $(Q_C,\delta_C,(\tuple{\rootstate}, \compbgn),E_C)$;

    \SetKwFunction{DF}{DiscoverTransition}
    \SetKwProg{Fn}{Subroutine}{:}{}
    \Fn{\DF{$\trntransof {\mathbf{q}} {w} {\linof{l}} {\linof{r}}, \mathit{StateTag}$}}{
      $\linof{\ell'} \gets \linof\ell\specsubst{\mathbf{t}}{(\mathbf{t}, \mathit{StateTag})}$;
      $\linof{r'} \gets \linof r\specsubst{\mathbf{t}}{(\mathbf{t}, \mathit{StateTag})}$\;
      $\delta_C \gets \delta_C \cup \{\trntransof{\mathbf{q}} {w} {\linof{\ell'}} {\linof{r'}}\}$\;
      \ForEach{$\mathbf{p} \in \suppof{\linof{\ell'}} \cup \suppof{\linof{r'}}$}{
        \lIf{$(\mathbf{p}, d) \notin \mathit{Known}$}{
          Add $(\mathbf{p}, d)$ to $\mathit{WList}$ and to $\mathit{Known}$
        }
      }
    }
\end{algorithm}
}

%*******************************************************************************
\vspace{-0.0mm}
\subsection{Constructing Size-Parameterized Transducers}\label{sec:composition_par}
\vspace{-0.0mm}
%*******************************************************************************
In this section, we describe our algorithm for systematically generating the transducer from a~fixed-size circuit---called a~\emph{box}---for a~size-parameterized circuit family with a~specific repetition pattern.

To illustrate this process step by step, we begin with a concrete example from~\cref{fig:rca}, where the circuit consists of a sequence of $\maj$ boxes.
We first build the transducer for this fixed-size $\maj$ circuit using the
transducer composition procedure from~\cref{sec:trn_compose}, and then
generalize it by uniformly removing subscripts from wire labels---for example,
replacing $x_i$ with $x$---to indicate that variables now serve as symbolic
placeholders rather than fixed wire indices.
Wires not involved in active gates are treated as carrying the identity operation,
denoted as $\mathrm{Id}$ (see~\cref{fig:majid}; for clarity, we retain
subscripted indices in the figure).
The transducer for~$\mathrm{Id}$ consists of a~single state~$\idf$
(both a~root and a~leaf state) and the transition $\trntransof \idf x
\idf \idf$.

\ifTR\else
\figMajSteps   %%%%%%%%%%%%%%%
\fi

%We begin this section with an example from~\cref{fig:rca}. On the right, the circuit contains a sequence of MAJ boxes. We construct the transducer for these boxes using the procedure described in~\cref{sec:trn_compose}, and then uniformly remove all subscripts from the symbols, for example, replacing $x_i$ with $x$, to reflect that the variables now refer to symbolic placeholders rather than concrete wire indices.
%We can imagine that the wire acts as identity gates $\mathrm{Id}$ whose transducer has only one
%state $\epsilon$ (\cref{fig:majid}). In this figure, we retain the subscripted qubit indices to facilitate our explanation.

A key structural property of this circuit is that each qubit is acted upon by at most two non-identity boxes.
%An important property of this circuit is that, for any wire under consideration, only a bounded number of non-Id boxes act on it---in this example, the number is $2$.
Moreover, different qubits can exhibit equivalent transducer behavior. For
example, qubits~$x_3$ and~$x_5$ are each processed by two instances of the same
$\maj$ transducer (both working with the same set of states).
%Our second observation is that if we, for example, consider the qubits $x_3$ and $x_5$, then the same sets of states of the corresponding $\maj$ transducer can act on these qubits. Namely, the first active  $\maj$ transducer will be in some, possibly multiple, of its states reachable after reading two qubits. The second active $\maj$ transducer will be in its initial states. 
Suppose that, when reading qubit~$x_3$, the composition of the transducers (including $\maj_1$ and $\maj_2$) is
in the state $\tuple{\dots, \idf, q_1, s_1, \idf, \dots}$, and when
reading~$x_5$, it is in the same configuration
except one additional $\idf$ precedes~$q_1$. The effect on both wires is
identical,
\ifTR
\figMajSteps   %%%%%%%%%%%%%%%
\fi
indicating that the number of surrounding $\idf$ states does not influence the
computation. We can therefore abstract such configurations using the finite
tuple $\tuple{q_1, s_1}$.
Although this observation may appear trivial---since composing identity with
itself yields identity---it plays a~crucial role in our construction. Once an
active box becomes inactive, it behaves like~$\idf$, and the relevant state
tuple shifts rightwards.
This shifting mechanism governs how the transducer evolves as the circuit
grows, and is fundamental to our parameterized transducer construction.
%Returning to our observation about the similarities in the behavior of $x_3$ and $x_5$: if we assume that $\maj_1$ and $\maj_2$ were in the state $(\dots, \idf, q_1, s_1, \idf, \dots)$ when reading qubit $x_3$, and that $\maj_2$ and $\maj_3$ were in the same state (except with an additional $\idf$ preceding $q_1$) then the effect of the transducer on $x_3$ and $x_5$ should be identical.
%This suggests that the number of $\idf$s before or after an active state does not affect the computation. Hence, we can represent both configurations using the finite tuple $(q_1, s_1)$.
%Although these observations are trivial considering that we are composing the identity function with itself, it is (somewhat) important. In our construction, once the first active box becomes inactive and behaves like $\idf$, all subsequent states in the state-tuple will shift to the left. This shifting mechanism plays a key role in how the computation evolves.

We illustrate the idea behind our construction using the $\maj$-gate cascade
from \cref{fig:majid}.
Let $\maj$ be represented by a transducer with a~root state~$a$, leaf
state~$\idf$, and the following transitions:
\begin{equation}
\begin{aligned}
  \initmark & \rlap{$\trntransof a x {b(\leftT) + c(\rightT)}{e(\leftT) + d(\rightT)}$} &  &                                          &  & \trntransof f x {\idf(\leftT)} {0\idf(\rightT)}       & \quad & \trntransof k x {0\idf(\rightT)} {\idf(\leftT)} \\
            & \trntransof b x {f(\leftT)}{f(\rightT)}                                   &  & \trntransof d x {f(\leftT)}{k(\rightT)}  &  & \trntransof g x {\idf(\rightT)} {0 \idf(\leftT)}     & \quad & \trntransof \idf x {\idf(\leftT)} {\idf(\rightT)}  \\
            & \trntransof c x {h(\rightT)}{h(\leftT)}                                   &  & \trntransof e x {h(\rightT)}{g(\leftT)}  &  & \trntransof h x {0 \idf(\leftT)} {\idf(\rightT)}
\end{aligned}
\end{equation}
Notice that the transducer contains the state~$\idf$ and transition
$\trntransof \idf x {\idf(\leftT)} {\idf(\rightT)}$ from the $\mathrm{Id}$
transducer---this is our way of implementing the sequence of $\mathrm{Id}$
gates below the box (we give formal requirements that the input transducer
needs to satisfy later).

We now describe how to compute the transducer for the entire size-parameterized
circuit.
Our product construction
starts in the state $\tuple{a}$, representing the circuit acting on the first qubit.
The transition from $\tuple{a}$ is the same as the one of $\maj$
from~$a$, just with states decorated by~$\tuple \cdot$ to be formally
a~one-tuple (since there is just one transducer active at this point):
\begin{equation}
	\trntransof{\tuple a} x {\tuple b(\leftT) + \tuple c(\rightT)}{\tuple e(\leftT)+\tuple d(\rightT)}.
\end{equation}
Consider next the state $\tuple b$. 
As this point, one should determine whether the cascade continues with another
$\maj$ or ends here (since the endpoints of the transition from $\tuple b$ will
need to either be pairs of states from the $\maj$ transducer or just one state
from~$\maj$ that finishes reading out the rest of the input tree without
spawning new $\maj$ instances).
One could naturally implement this via nondeterminism, which is not possible
with the WTT model since it is (top-down) deterministic.
We can, however, get around the limitation by slightly changing the input
alphabet: adding
a~primed version~$x'$ of the symbol~$x$, which is used to label the last two
qubits in the tree (encoding the fact that there is no further occurrence of
$\maj$).
Naturally, one also needs to change the alphabet and structure of the input
SWTA to make it compatible with the WTT, which may need its partial unfolding.
In the case that we do not start a new instance of~$\maj$, we continue with the
transition
\begin{equation}
  \trntransof{\tuple b} {x'} {\tuple f(\leftT)}{\tuple f(\rightT)}.
\end{equation}
Alternatively, we can continue spawning more $\maj$ instances, using the transition
\begin{equation}
	\trntransof{\tuple b} x {\tuple{f, a}(\leftT)}{\tuple{f, a}(\rightT)},
\end{equation}
where the `$a$'-component of $\tuple{f, a}$ is the root state of~$\maj$,
meaning that the next $\maj$ box is active.

\algParamTrans  %%%%%%%%%%%%%%

Next, we compute the transition from~$\tuple{f,a}$ similarly to
transducer composition (cf.\ \cref{sec:trn_compose}):
\begin{equation}
  \trntransof{\tuple{f, a}} x
  {\tuple{\idf,b}(\leftT) + 0\tuple{\idf, c}(\rightT)}
  {\tuple{\idf, e}(\leftT) + 0\tuple{\idf, d}(\rightT)}.
\end{equation}
We, however, note that~$\idf$ is just an identity (the first transducer became
inactive) and the state can therefore be removed from the tuples, yielding
instead the transition
\begin{equation}
  \trntransof{\tuple{f, a}} x
  {\tuple b(\leftT) + 0\tuple c(\rightT)}
  {\tuple e(\leftT) + 0\tuple d(\rightT)}.
\end{equation}

The full construction is more complex and is formalized in
\cref{alg:compose_PWTT}.
The input of the algorithm is a~transducer~$\trn=(Q,\delta,\rootstate, E)$ over
$\Gamma$ that needs to satisfy the following structural requirements:
\begin{enumerate}
  \item  $Q$ contains the state $\idf$ and $E = \{\idf\}$,
  \item for every $a \in \Gamma \cup \Gamma'$, the transition function $\delta$
    contains the transition $\trntransof \idf a {\idf(\leftT)}{\idf(\rightT)}$, and
  \item  all states from~$Q$ except~$\idf$ can occur only at a~particular
    depth (distance from the root) in a~run of~$\trn$ on any tree (and the state
    $\idf$ appears for the first time at the same depth on all branches);
    formally, there exists a~function $d\colon Q \to \naturals$ such that
    $d(\rootstate) = 0$ and for every $q \in Q \setminus \{\idf\}$, if $d(q) =
    m$ and $\trntransof q a {\linof \ell}{\linof r} \in \delta$, then for all
    $p \in \suppof{\linof \ell} \cup \suppof{\linof r}$ we
    have that $d(p) = m+1$.
    %
    % \mh{We also require that whenever
    % $\trntransof {q} {w} {\linof{\ell}} {\linof{r}} \in \delta$
    % such that $\idf \in \suppof{\linof {\ell}} \cup \suppof{\linof {r}}$
    % then $\suppof{\linof {\ell}} = \suppof{\linof {r}} = \{\idf\}$.
    % Furthermore, if $\trntransof {q} {w} {\linof{\ell}} {\linof{r}} \in \delta$
    % and $\trntransof {q'} {w'} {\linof{\ell'}} {\linof{r'}} \in \delta$
    % with $\suppof{\linof{\ell}} = \suppof{\linof{l'}} = \{\idf\}$ then
    % $d(q) = d(q')$. Informally, whever $\trn$ enters $\idf$ it does so from every
    % of its active states.}
\end{enumerate}
Additional the inputs of the algorithm are the offset of boxes~$n$ ($n = 2$ in
the example above) and the side~$d \in \{\mathit{left}, \mathit{right}\}$ on
which the boxes grow ($d = \mathit{right}$ in the example).

The algorithm constructs a~WTT with states of the form~$(\tuple{q_1, \ldots, q_k},
\mathit{Tag})$ where~$ \tuple{q_1, \ldots, q_k}$ is a~sequence of
states of~$\trn$ and~$\mathit{Tag} \in \{\compbgn, \compend\}$ denotes whether
we can still spawn new boxes ($\compbgn$) or not any more ($\compend$).
For a~tuple $\tuple{q_1, \dots, q_k}$, we assume all $\idf$'s are automatically
deleted.
The set~$\mathit{WList}$ keeps states to be explored, together with
their offset (to know when to start spawning new instances of~$\trn$).
In the algorithm, we write $\delta(q, a) = \bot$ if $\delta$ is undefined on the
input $(q, a)$ and define the composition $\circ$ of pairs of linear forms
(used on Lines~\ref{ln:firstcomp}, \ref{ln:secondcomp}, and~\ref{ln:thirdcomp})
as follows:
\begin{inparaenum}[(i)]
  \item  $\bot \circ x = x \circ \bot = \bot$ for any~$x$ and
  % \item  $\bot \circ (\linof{\ell},\linof{r}) = (\linof{\ell},\linof{r})\circ
  %   \bot =(\linof{\ell},\linof{r})$,\ol{change?} and
  \item  $(\linof{\ell_2}, \linof{r_2})\circ
    (\linof{\ell_1},\linof{r_1})=(\linof{\ell_2}(\linof{\ell_1}, \linof{r_1}),
                                    \linof{r_2}(\linof{\ell_1}, \linof{r_1}))$,
                                    using the notation from \cref{sec:trn_image}.
\end{inparaenum}
On \lnref{ln:strangesubst}, for a~linear form~$\linof v$, we use $\linof
v\specsubst{\mathbf{t}}{(\mathbf{t}, s)}$ to denote the linear form obtained
from~$\linof v$ by substituting every state (captured by) $\mathbf{t}$
occurring in~$\linof v$ by $(\mathbf{t}, s)$, e.g., for $\linof v = \tuple
b(\leftT) + \tuple c(\rightT)$, the result of~$\linof
v\specsubst{\mathbf{t}}{(\mathbf{t}, s)}$ is the linear form $(\tuple b,
s)(\leftT) + (\tuple c ,s)(\rightT)$.

% \algParamTransNew

%%%%%%%%%%%%%%%%%%%%%%%%%%%%%%%%%%%%%%%%%%%%%%%%%%%%%%%%%%%%%%%%%%%%%%%%%%%%%%%%
\vspace{-0.0mm}
\section{Related Work}\label{sec:related}
\vspace{-0.0mm}
%%%%%%%%%%%%%%%%%%%%%%%%%%%%%%%%%%%%%%%%%%%%%%%%%%%%%%%%%%%%%%%%%%%%%%%%%%%%%%%%
\begin{changebar}
In recent years, many techniques for analyzing, simulating, and verifying quantum circuits and programs have emerged in the formal methods community.
In this section, we provide their overview and explain where our approach stands
among them.
\end{changebar}

\paragraph{Symbolic Quantum Circuit Verifiers.}
These tools are often fully automated and flexible in specifying different verification properties. 
The closest work to ours in this category are the automata-based approaches of~\cite{ChenCLLTY23,ChenCLLT23,cacm25,cacm25-pa,abdulla2025verifying,chen2025autoq}, which use tree automata~\cite{tata} and their variant \emph{level-synchronized tree automata (LSTAs)} to encode predicates representing sets of quantum states.
%\begin{changebar}
SWTAs generalize the previous tree automata models by the use of colors for synchronization
  across branches (this was already in LSTAs), weights, and possible
interactions between multiple runs via linear forms.
These techniques help SWTAs alleviate the scalability issues of the previous
models in many cases where the representation using the previous model blew up.
More concretely, our single-qubit gate operations are linear, while those
of~\cite{ChenCLLTY23,ChenCLLT23} are exponential, and those
in~\cite{abdulla2025verifying,chen2025autoq} are quadratic
(on the negative side, language inclusion/equivalence are undecidable now, but
we use the decidable property of functional equivalence/inclusion to solve
the verification problem).
% Moreover, SWTAs significantly extend the capabilities of parameterized
% verification of LSTAs, which could handle a~quite limited class of
% circuits (e.g., the class does not support applying an unbounded number of
% $\had$~gates, which is a~common pattern in most parameterized quantum circuits).
In addition, in the current paper, we propose a complete framework that uses
transducers for implementing the quantum gates,
whereas the previous works used specialized automata-manipulating algorithms.
% SWTAs solve the scalability issues in many cases where the size of the
% representation using the previous model blew up, due to the
% combination of level synchronization via colors and the use of weights.
% Moreover, our single qubit gate operations are linear, while those
% of~\cite{ChenCLLTY23,ChenCLLT23} are exponential, and those in~\cite{abdulla2025verifying} are quadratic. 

There are a~few more tools that belong to this category:
\symqv~\cite{BauerMarquartLS23} is based on \emph{symbolic
execution}~\cite{King76} to verify input-output relationship with queries
discharged by SMT solvers over the theory of reals.
SMT solvers have also been used in the verification of \emph{quantum error correction code (QECC)}~\cite{chen2025verifying,fang2024symbolic}.
The SMT array theory approach of~\cite{chen2023theory} improved the 
previous by allowing a~polynomial-sized circuit encoding, but still faces
a~similar scalability problem.
Another scalable fully automated approach for analysis of quantum circuits is
  \emph{quantum abstract
  interpretation}~\cite{yu2021quantum,perdrix2008quantum}, which, however, uses
  a~quite coarse abstraction, which makes the properties that it can reason
  about quite limited.
Among the approaches mentioned above, the one based on
  LSTAs~\cite{abdulla2025verifying,chen2025autoq} is the only method that
  supports verification of quantum circuits where the size is a~parameter.
It can, however, handle only a~limited set of parametrized gates, e.g., it does not support the application of the Hadamard gate to every qubit, which is a~common pattern in most interesting circuits.
  % Both of which are essential for enabling \emph{quantum parallelism}~\cite{NielsenC16}.

\paragraph{Deductive Quantum Circuit and Program Verifiers.}
The tools in this category allow verification of quantum programs
w.r.t.\ expressive specification languages.
  The most prominent family of approaches are those based on the so-called \emph{quantum Hoare
logic}~\cite{zhou2019applied,unruh2019quantum,feng2021quantum,ying2012floyd,liu2019formal}.
These approaches, however, require significant manual work and user expertise
(they are often based on the use of interactive theorem provers such as
\isabelle~\cite{nipkow2002isabelle} and \rocq/\coq~\cite{bertot2013interactive}). 
The work in~\cite{huang2025efficient} adopts a~syntactic approach that enables
automatic reasoning within a fragment of quantum Hoare logic, and applies it
to the verification of QECC.
The work most closely related to ours in this category is \qbricks~\cite{Chareton2021},
which also targets the verification of size-parametrized quantum circuits and leverages SMT solvers
to discharge verification conditions automatically.
While \qbricks attempts to improve automation by generating proof obligations and solving them using SMT,
the approach still requires substantial user interaction.
For example, in the case of verifying Grover's search for an arbitrary number of qubits, the experiment involved over one hundred manual interactions.

\paragraph{Quantum Circuit Equivalence Checkers.} These tools
are usually fully automated but are limited to only checking equivalence of two circuits.
In contrast, our approach is flexible in specifying custom properties.
Equivalence checkers are based on several approaches.
One approach is based on the \emph{ZX-calculus}~\cite{Coecke_2011}, which is a~graphical language used for reasoning about quantum circuits.
The \emph{path-sum} approach (implemented, e.g., within the tool
\feynman~\cite{amy2018towards}) uses rewrite rules.
Pre-computed equivalence sets are used to prove equivalence in
\quartz~\cite{xu2022quartz}.
\qcec~\cite{burgholzer2020advanced} is an equivalence checker that uses decision
diagrams and ZX-calculus, and \sliqec~\cite{ChenJH22,WeiTJJ22} also uses decision
diagrams for (partial) equivalence checking.
An approach based on working with the so-called \emph{stabilizer
states} in~\cite{ThanosCL23} can be used to verify the equivalence of circuits with
Clifford gates in polynomial time.
There have also been works that approach quantum equivalence via model
counting~\cite{MeiCoopmansLaarman2024Eq}.

\paragraph{Quantum Circuit Simulators.}
Quantum circuit simulators can be broadly categorized into four classes:
\emph{decision-diagram-based}~\cite{TsaiJJ21,ZulehnerW19,SistlaCR23,MillerT06,ChenJ25,Chen0JJL24},
\emph{state-vector-based}~\cite{li2021svsim},
\emph{tensor-network-based}~\cite{markov2008simulating,PhysRevLett.91.147902},
\emph{ZX-calculus-based}~\cite{Kissinger_2022}, and
\emph{model-counting-based}~\cite{MeiBonsangueLaarman2024Sim}.
These simulators are primarily designed to compute the output of a quantum circuit for a \emph{single input state}.
They can also be employed to verify circuit behavior over a finite number of input states by simulating each one individually.
However, they cannot be used for size-parametrized verification, which can be handled by the SWTA-based framework.

SWTAs can be viewed as a generalization of decision diagrams,
particularly the \emph{quantum multiple-valued decision diagrams (QMDDs)}~\cite{ZulehnerW19}
and \emph{tensor decision diagrams}~\cite{HongZLFY22},
which place weights on edges to compactly represent quantum states.
Unlike traditional simulators, our method is designed to handle \emph{sets of input states simultaneously}, improving scalability.
\hide{
Tensor-network-based simulation offers two notable advantages:
First, by representing quantum gates and states as \emph{tensors} (i.e., high-dimensional matrices), gate operations can be applied \emph{locally} without requiring global updates to the entire state.
Our approach also benefits from this property. Second, through \emph{singular value decomposition (SVD)}, quantum states can be approximated by truncating small singular values,
significantly reducing memory usage. However, the approximations lack formal
error bounds, making tensor-network methods unsuitable for applications
requiring \emph{precise verification guarantees}, like equivalence checking.}

% %%%%%%%%%%%%%%%%%%%%%%%%%%%%%%%%%%%%%%%%%%%%%%%%%%%%%%%%%%%%%%%%%%%%%%%%%%%%%%%%
% \vspace{-0.0mm}
% \section{Conclusion}\label{sec:label}
% \vspace{-0.0mm}
% %%%%%%%%%%%%%%%%%%%%%%%%%%%%%%%%%%%%%%%%%%%%%%%%%%%%%%%%%%%%%%%%%%%%%%%%%%%%%%%%

\newcommand{\ackPhdTalent}[0]{
\noindent
The work of Michal Hečko, a~Brno Ph.D.\ Talent Scholarship
\raisebox{-6pt}{\protect\includegraphics[height=17pt,clip,trim={13mm 10mm 13mm 10mm}]{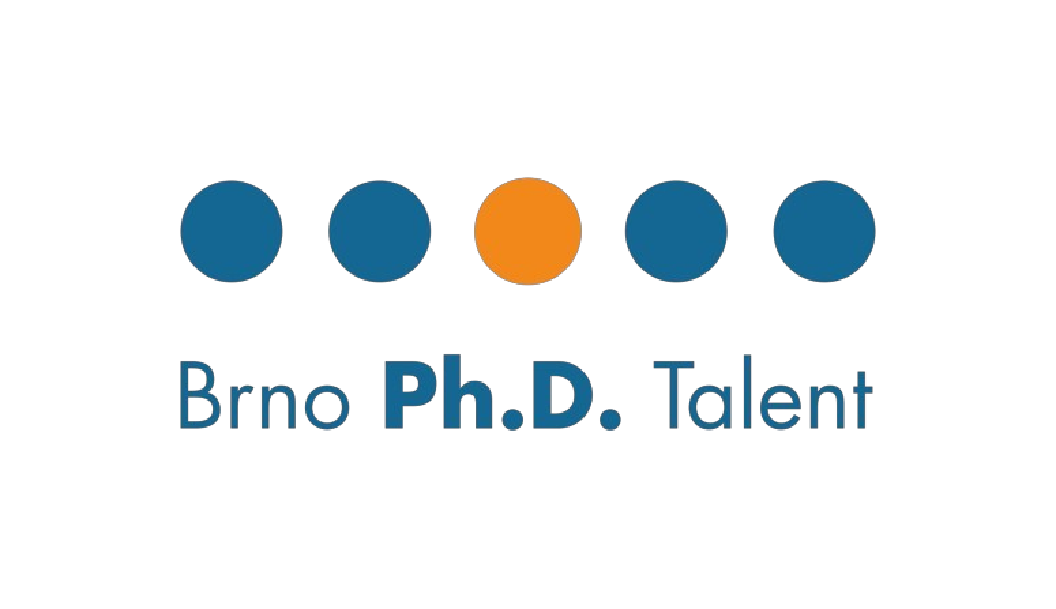}}
Holder, is funded by the Brno City Municipality.\xspace
}

\begin{acks}
We thank the anonymous reviewers for their feedback that improved the quality of the paper.
This work was supported by 
the Czech Science Foundation projects 25-18318S and 25-17934S;
the FIT BUT internal project FIT-S-23-8151;
National Science and Technology Council, R.O.C., projects
NSTC 114-2221-E-027-044 -MY2 and NSTC 114-2119-M-001-002-;
Air Force Office of Scientific Research project FA2386-23-1-4107; and
Academia Sinica Investigator Project Grant AS-IV-114-M07.
\ackPhdTalent

\end{acks}

%%%%%%%%%%%%%%%%%%%%%%%%%%%%%%%%%%%%%%%%%%%%%%%%%%%%%%%
\bibliographystyle{ACM-Reference-Format}
\bibliography{literature}
%%%%%%%%%%%%%%%%%%%%%%%%%%%%%%%%%%%%%%%%%%%%%%%%%%%%%%%

\ifTR

\appendix
\newpage

% some bug in cleveref
\crefalias{section}{appendix}

\section{Weighted tree transducer of QFT}\label{sec:QFT}
%\figQFTcircuit

The transducer of $Box_i=H_iCR^i_2CR^i_3\ldots CR^i_{n-i+1}$ consist of the following transitions, where $q_1^i$ is the root state and $q_{i+1},id_{i+1}$ are the leaf states. We set $m=2^n$.
\begin{align}
\trntransof {q^i_a} {x_a} {q^i_{a+1} (\leftT) }{q^i_{a+1} (\rightT)},& &a\in [1,i-1] \label{ln:qft1}\\
\trntransof {id_a} {x_a} {id_{a+1} (\leftT) }{id_{a+1} (\rightT)},& &a\in [1,n] \label{ln:qft2}\\
\trntransof {q^i_i} {x_i} {\invsqrttwo id_{i+1} (\leftT) + \invsqrttwo id_{i+1} (\rightT)}{\invsqrttwo q^i_{i+1} (\leftT) -  \invsqrttwo q^i_{i+1} (\rightT)} \label{ln:qft3}\\
\trntransof {q^i_a} {x_a} {id_{a+1} (\leftT) }{\gamma_{a-i+1}q^i_{a+1} (\rightT)},& &a\in [i+1,n] \label{ln:qft4}
\end{align}

Here, the transitions \cref{ln:qft1} and \cref{ln:qft2} indicate that the left and right subtrees remain unchanged. However, the state $q^i_a$ additionally encodes that we are at position~$a$, and that gate applications should begin once position~$i$ is reached.
The transition \cref{ln:qft3} is obtained by the composition of the transition $\trntransof {q^i_i} {x_i} {id_{i+1} (\leftT) + id_{i+1} (\rightT)}{id_{i+1} (\leftT) - id_{i+1} (\rightT)}$ of the $H_i$ gate transducer to the corresponding transition from the transducer of the $CR^i_2CR^i_3\ldots CR^i_{n-i+1}$ gates, which starts a sequence of rotations $R_2R_3\ldots R_{n-i+1}$ on the consequent right substrees, so we update the $id_{i+1}$ on the right to $q^i_{i+1}$. For positions $a>i$, the state $q^i_a$ performs the gate $R_k=R_{a-i+1}$, and hence multiplies the right subtree with $\gamma_k=\omega^{2^{n-k}}=\omega^{2^{n-(a-i+1)}}=\gamma_{a-i+1}$. The transitions \cref{ln:qft4} perform the corresponding rotation on the right subtrees.

If we look at the transducer of $Box_1$
\begin{align*}
\trntransof {id_a} {x_a} {id_{a+1} (\leftT) }{id_{a+1} (\rightT)},& &a\in [1,n]\\
\trntransof {q^1_1} {x_1} {\invsqrttwo id_{2} (\leftT) + \invsqrttwo id_{2} (\rightT)}{\invsqrttwo q^1_{2} (\leftT) -  \invsqrttwo q^1_{2} (\rightT)}\\
\trntransof {q^1_a} {x_a} {id_{a+1} (\leftT) }{\gamma_a {q^1_{a+1}}(\rightT)},& &a\in [2,n]
\end{align*}
and $Box_2$
\begin{align*}
\trntransof {q^2_1} {x_1} {q^2_{2} (\leftT) }{q^2_{2} (\rightT)}& &\\
\trntransof {id_a} {x_a} {id_{a+1} (\leftT) }{id_{a+1} (\rightT)},& &a\in [1,n] \\
\trntransof {q^2_2} {x_2} {\invsqrttwo id_{3} (\leftT) + \invsqrttwo id_{3} (\rightT)}{\invsqrttwo q^2_{3} (\leftT) -  \invsqrttwo q^2_{3} (\rightT)}\\
\trntransof {q^2_a} {x_a} {id_{a+1} (\leftT) }{\gamma_{a-1}q^2_{a+1} (\rightT)},& &a\in [3,n]
\end{align*}
Their composition $Box_2\circ Box_1$ is the following. Below we use $p^2_a$ to replace the pair $(q^2_a,q^1_a)$ state, $q^1_a$ to replace $(id_a,q^1_a)$, $q^2_a$ to replace $(q^1_a,id_a)$, and $id_a$ to replace $(id_a,id_a)$.
\begin{align*}
\trntransof {p^2_1} {x_1} {\invsqrttwo q^2_{2} (\leftT)  + \invsqrttwo q^2_{2} (\rightT)}{\invsqrttwo p^2_2 (\leftT)  - \invsqrttwo p^2_2 (\rightT)}& &\\
\trntransof {p^2_2} {x_2} {\invsqrttwo id_{3} (\leftT)  + \invsqrttwo \gamma_2 q^1_3 (\rightT)}{\invsqrttwo q^2_3 (\leftT)  - \invsqrttwo \gamma_2 p^2_3 (\rightT)}& &\\
\trntransof {p^2_a} {x_a} {id_{a+1} (\leftT)}{\gamma_a\gamma_{a-1} p^2_{a+1} (\rightT)}, & &a \in [3,n]\\ 
\hline
\trntransof {id_a} {x_a} {id_{a+1}(\leftT) }{id_{a+1}(\rightT)},& &a\in [1,n]\\
\hline
\trntransof {q^i_a} {x_a} {q^i_{a+1} (\leftT) }{q^i_{a+1} (\rightT)},& &i\in[2,2],  a\in [1,i-1] \\
\trntransof {q^i_i} {x_i} {\invsqrttwo id_{i+1} (\leftT) + \invsqrttwo id_{i+1} (\rightT)}{\invsqrttwo q^i_{i+1} (\leftT) -  \invsqrttwo q^i_{i+1} (\rightT)},& &i\in[1,2]\\
\trntransof {q^i_a} {x_a} {id_{a+1} (\leftT) }{\gamma_{a-i+1}q^i_{a+1} (\rightT)},& &i\in[1,2],a\in [i+1,n] 
\end{align*}
Notice that in the composed transducer, we have two types of transitions. (1) Those start from $p^2_a$; we have $n$ such states and transitions. (2) Those start from $id_a$, $q^1_a$, or $q^2_a$; we also have $3n$ such states and transitions. Observe that those transitions are exactly those from $Box_1$ and $Box_2$.

The transducer of the entire $QFT_n$ is the composition of $Box_n\circ Box_{n-1}\circ \ldots\circ Box_1$, which follows the same pattern. It totally has $n^2$ both types of states and transitions. Below is the complete transducer transitions.

\begin{align*}
\trntransof {p^2_1} {x_1} {\invsqrttwo q^2_{2} (\leftT)  + \invsqrttwo q^2_{2} (\rightT)}{\invsqrttwo p^2_2 (\leftT)  - \invsqrttwo p^2_2 (\rightT)}& &\\
\trntransof {p^2_2} {x_2} {\invsqrttwo id_{3} (\leftT)  + \invsqrttwo \gamma_2 q^1_3 (\rightT)}{\invsqrttwo q^2_3 (\leftT)  - \invsqrttwo \gamma_2 p^2_3 (\rightT)}& &\\
\trntransof {p^3_3} {x_3} {\invsqrttwo id_{4} (\leftT)  + \invsqrttwo \gamma_3 \gamma_2 q^2_4 (\rightT)}{\invsqrttwo q^3_4 (\leftT)  - \invsqrttwo \gamma_3\gamma_2 p^3_4 (\rightT)}& &\\
\ldots\\
\trntransof {p^n_n} {x_n} {\invsqrttwo id_{n+1} (\leftT)  + \invsqrttwo \gamma_n\ldots\gamma_3 \gamma_2 q^{n-1}_{n+1} (\rightT)}{\invsqrttwo q^n_{n+1} (\leftT)  - \invsqrttwo \gamma_n\ldots\gamma_3\gamma_2 p^n_{n+1} (\rightT)}& &\\
\trntransof {p^2_a} {x_a} {id_{a+1} (\leftT)}{\gamma_a\gamma_{a-1} p^2_{a+1} (\rightT)}, & &a \in [3,n]\\ 
\trntransof {p^3_a} {x_a} {id_{a+1} (\leftT)}{\gamma_a\gamma_{a-1}\gamma_{a-2} p^3_{a+1} (\rightT)}, & &a \in [4,n]\\ 
\ldots\\
\trntransof {p^{n-1}_a} {x_a} {id_{a+1} (\leftT)}{\gamma_a\gamma_{a-1}\gamma_{a-2}\ldots \gamma_{a-n+2} p^{n-1}_{a+1} (\rightT)}, & &a \in [n,n]\\ 
\hline
\trntransof {id_a} {x_a} {id_{a+1}(\leftT) }{id_{a+1}(\rightT)},& &a\in [1,n]\\
\hline
\trntransof {q^i_a} {x_a} {q^i_{a+1} (\leftT) }{q^i_{a+1} (\rightT)},& &i\in[2,n],  a\in [1,i-1] \\
\trntransof {q^i_i} {x_i} {\invsqrttwo id_{i+1} (\leftT) + \invsqrttwo id_{i+1} (\rightT)}{\invsqrttwo q^i_{i+1} (\leftT) -  \invsqrttwo q^i_{i+1} (\rightT)},& &i\in[1,n]\\
\trntransof {q^i_a} {x_a} {id_{a+1} (\leftT) }{\gamma_{a-i+1}q^i_{a+1} (\rightT)},& &i\in[1,n-1],a\in [i+1,n] 
\end{align*}

\begin{theorem}\label{thm:qft_WTT}
The transducer for $QFT_{[j+1\ldots j+n]}$ can be constructed using $O(n^2+m)$ states and $O(n^2+m)$ transitions.
\end{theorem}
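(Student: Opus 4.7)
\medskip
\noindent\textbf{Proof plan for \cref{thm:qft_WTT}.}
The plan is to first establish the size of the transducer implementing $\qft$ on a contiguous block of $n$ qubits and then embed it into a larger circuit of $m$ qubits by attaching identity pass-through machinery on both sides. The construction itself has already been laid out in the excerpt; what remains is a careful bookkeeping argument showing that the worst-case quadratic blow-up of iterated composition (cf.\ \cref{sec:trn_compose}) does not occur here.

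First, I would build the per-box transducer for each $Box_i = H_i \cdot CR^i_2 \cdots CR^i_{n-i+1}$. Using the atomic-gate transducers from \cref{sec:quantumWTT} together with the composition algorithm from \cref{sec:trn_compose}, each $Box_i$ has $O(n)$ states and $O(n)$ transitions, because it acts nontrivially on at most $n-i+1$ qubits and simply wires identities through the others. This establishes a linear baseline per box and is essentially a direct instantiation of \cref{thm:gates_WTT}.

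Next, I would iteratively compose $Box_n \compose Box_{n-1} \compose \cdots \compose Box_1$. The critical observation---already witnessed in the explicit listing of $Box_2 \compose Box_1$ in the appendix---is that the product states do not proliferate arbitrarily: after composing the first $k$ boxes, the reachable states (starting from the composed root) have one of three canonical shapes, namely the identity state $id_a$, a \emph{single-box carry} state $q^i_a$ inherited from some $Box_i$, or an \emph{aligned cascade} state $p^i_a$ whose components at depth $a$ are pairwise ``aligned'' because the boxes share the same depth-indexed skeleton. I would formalize this by an invariant on the reachable state set: at every depth $a$, only $O(k)$ distinct product states of the form $p^i_a$ and only $O(k)$ of the form $q^i_a$ arise (one per originating box), together with a single $id_a$. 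Summing over $a \in [1,n{+}1]$ yields $O(n \cdot n) = O(n^2)$ states and, since each state emits a constant number of transitions (one per input symbol in the relevant range), $O(n^2)$ transitions. This matches the explicit listing at the end of the appendix.

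Finally, I would embed the resulting $n$-qubit QFT transducer into the full $m$-qubit setting to obtain $QFT_{[j+1\ldots j+n]}$. For the $j$ qubits $x_1,\ldots,x_j$ preceding the QFT block, I prepend a linear chain of $j$ fresh pass-through states with transitions $\trntransof {\idstate_a} {x_a} {\idstate_{a+1}(\leftT)}{\idstate_{a+1}(\rightT)}$, the last of which enters the root of the $n$-qubit QFT transducer; for the $m - j - n$ qubits after the block, the already-present $id_a$ states of the QFT transducer extend trivially, adding only $O(m)$ more transitions. The total count is therefore $O(n^2) + O(m) = O(n^2 + m)$ for both states and transitions.

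The main obstacle is the second step: justifying that no combinatorial blow-up occurs during the $n$-fold composition, even though \cref{sec:trn_compose}'s generic composition can square the state space. I would make this rigorous by induction on the number of boxes composed, using the depth invariant above and the fact that newly appearing ``$\idf$ components'' in product tuples can be collapsed (as noted in the text following \cref{alg:compose_PWTT}). The special staircase structure of the QFT---each box shifts the ``active'' region by one qubit---is what keeps the reachable product state space from exploding, and this structural fact is what needs to be articulated carefully in the proof.
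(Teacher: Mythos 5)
Your proposal is correct and follows essentially the same route as the paper: build each $Box_i = H_i\, CR^i_2 \cdots CR^i_{n-1+1}$ as an $O(n)$-state transducer, compose the boxes left to right, observe that the reachable product states collapse to the three depth-indexed families $id_a$, $q^i_a$, $p^i_a$ (yielding $O(n^2)$ states and transitions), and add $O(m)$ identity pass-through states for the qubits outside the QFT range. The only difference is presentational---the paper exhibits the composed transducer explicitly (via the $Box_2 \compose Box_1$ example and the full listing) and reads off the count, whereas you propose to prove the same reachable-state invariant by induction on the number of boxes composed.
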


%%%%%%%%%%%%%%%%%%%%%%%%%%%%%%%%%%%%%%%%%%%%%%%%%%%%%%%%%%%%%%%%%%%%%%%%%%%%%%%%
\vspace{-0.0mm}
\section{Transducer Image and Composition Examples}\label{sec:image_compos_examples}
\vspace{-0.0mm}
%%%%%%%%%%%%%%%%%%%%%%%%%%%%%%%%%%%%%%%%%%%%%%%%%%%%%%%%%%%%%%%%%%%%%%%%%%%%%%%%

\begin{example}\label{ex:image}
Let $\autbases$ be the SWTA defined as
\begin{equation}
\begin{aligned}
  \initmark & \transcof{q} a 1 {q}{0q} &\qquad
  \initmark & \transcof{q} a 2 {0q}{q}
\end{aligned}
\end{equation}
where $q$ is a~leaf state.
It is easy to see that~$\autbases$ accepts all trees where all internal nodes
are labelled with~$a$ and exactly one leaf node is labelled with~1 (and the
other leaf nodes are labelled with~0)---these correspond to trees representing
all computational basis states in a~quantum circuit for any number of qubits.
Next, let $\trn_{\hadtensor}$ be the WTT defined as
\begin{equation}
  \initmark \trntransof s a
  {\invsqrttwo s(\leftT) + \invsqrttwo s(\rightT)}
  {\invsqrttwo s(\leftT) - \invsqrttwo s(\rightT)}
\end{equation}
where~$s$ is a~leaf state.
$\trn_{\hadtensor}$ represents a~quantum gate applying the~$\had$ (Hadamard)
gate to all qubits in the circuit.
If we compute $\trn_{\hadtensor}(\autbases)$, we obtain the SWTA~$\aut'$ given as
\begin{equation}
\begin{aligned}
  \initmark & \transcof{\tuple{q, s}} a 1 {\invsqrttwo\tuple{q,s}}{\invsqrttwo\tuple{q,s}} &\qquad
  \initmark & \transcof{\tuple{q, s}} a 2 {\invsqrttwo\tuple{q,s}}{-\invsqrttwo\tuple{q,s}} &\qquad
\end{aligned}
\end{equation}
with $\tuple{q,s}$ being a~leaf state.
The language of~$\aut'$ contains trees such as
1 (a single leaf node),
\smalltreeof a {\invsqrttwo}{\invsqrttwo},
\smalltreeof a {\invsqrttwo}{\text{-}\invsqrttwo}, or
\bigtreeof a a {\invtwo}{\invtwo}{\text{-}\invtwo}{\text{-}\invtwo}.

Let us now apply~$\trn_{\hadtensor}$ on~$\aut'$ again.
We obtain the SWTA~$\aut'' =\trn_{\hadtensor}(\aut')$ given as
\begin{equation}
\begin{aligned}
  \initmark & \transcof{\tuple{\tuple{q,s},s}} a 1 {\tuple{\tuple{q,s},s}}{0\tuple{\tuple{q,s},s}} &\qquad
  \initmark & \transcof{\tuple{\tuple{q,s},s}} a 2 {0\tuple{\tuple{q,s},s}}{\tuple{\tuple{q,s},s}}
\end{aligned}
\end{equation}
with the leaf state~$\tuple{\tuple{q,s},s}$.  We note that $\semof{\autbases} =
\semof{\aut''}$, which is the expected behavior of applying the $\had$ gate
twice.
\qed
\end{example}

\begin{example}\label{ex:compose}
  Consider the WTT $\trn_{\hadtensor}$ from \cref{ex:image} and
  $\trn_{\rotx}^{\mathit{even}}$ be the WTT implementing the $\rotx$ rotation
  (with the unitary $\invsqrttwo\left[\begin{smallmatrix}1 & -i \\ -i & 1\end{smallmatrix}\right]$)
  on every other qubit defined as follows:
  \begin{equation}
  \begin{aligned}
  \initmark & \trntransof u a {\invsqrttwo v(\leftT) - i \invsqrttwo v(\rightT)}
                              {-i \invsqrttwo v(\leftT)+ \invsqrttwo v(\rightT)}
            \qquad
            &
              \trntransof v a u u
  \end{aligned}
  \end{equation}
  where only~$u$ is a~leaf state (this restricts the input
  of~$\trn_{\rotx}^{\mathit{even}}$ to only trees of even height).
  To compute the composition $\trn_{\rotx}^{\mathit{even}} \compose
  \trn_{\hadtensor}$ (i.e., first, we apply the~$\had$ gate on all qubits and
  then the~$\rotx$ gate on every other qubit), we start with composing the root
  transitions to obtain the transition
  \begin{equation}
  \begin{aligned}
  \initmark&
    \trntransof{\tuple{s,u}} a {\smallfrac {1-i} 2 \tuple{s,v}(\leftT) +
    \smallfrac{1+i} 2 \tuple{s,v}(\rightT)}{\smallfrac {1-i} 2
    \tuple{s,v}(\leftT) - \smallfrac{1+i} 2 \tuple{s,v}(\rightT)}
  \end{aligned}
  \end{equation}
  (we invite the reader to perform the full computation with pen and paper).
  Computing the transition from~$\tuple{s,v}$ is easier:
  \begin{equation}
    \trntransof{\tuple{s,v}} a
    {\invsqrttwo \tuple{s,u}(\leftT) + \invsqrttwo \tuple{s,u}(\rightT)}
    {\invsqrttwo \tuple{s,u}(\leftT) - \invsqrttwo \tuple{s,u}(\rightT)} .
  \end{equation}
  The set of root states is~$\{\tuple{s,u}\}$.
\qed
\end{example}

%%%%%%%%%%%%%%%%%%%%%%%%%%%%%%%%%%%%%%%%%%%%%%%%%%%%%%%%%%%%%%%%%%%%%%%%%%%%%%%%
\vspace{-0.0mm}
\section{Example of BV}\label{appendix:bv}
\vspace{-0.0mm}
%%%%%%%%%%%%%%%%%%%%%%%%%%%%%%%%%%%%%%%%%%%%%%%%%%%%%%%%%%%%%%%%%%%%%%%%%%%%%%%%

We illustrate modeling of the size-parametrized BV circuit (\cref{fig:BV})
with a secret key of the form $(\texttt{10})^*
(\texttt{1} + \varepsilon) = \{\varepsilon, \texttt{1}, \texttt{10},
\texttt{101}, \dots \}$.
We start with $\aut_\pre$ representing the precondition
and the transducer $\transfH$ as given in \cref{sec:BV},
and show how $\transfH(\aut_\pre)$ is constructed
using the composition described in~\cref{sec:trn_compose}.

Looking at the initial states of $\aut_\pre$
and $\transfH$, we start with the (initial) state $\tuple{s_1, u}$ and
explore its successors by taking transitions with the same symbols. First,
both $s_1$ and $u$ can make a transition labeled with $w$. Taking
the transition over $w$ takes $s_1$ to $(s_1, 0s_1)$. Simultaneously,
$\transf_\had$ takes the transition 
$\trntransof u w {\invsqrttwo u (\leftT) + \invsqrttwo u (\rightT)}{\invsqrttwo u (\leftT) - \invsqrttwo u (\rightT)}$.
Substituting $s_1$ and $0s_1$ in the places of the placeholders $(\leftT)$ and $(\rightT)$ respectively,
we obtain the following transition in $\transfH(\aut_\pre)$:
\begin{equation*}
\begin{aligned}
    \initmark & \transcof{\tuple{s_1, u}} w 1
    {\invsqrttwo\tuple{s_1, u}}
    {\invsqrttwo\tuple{s_1, u}}.
\end{aligned}
\end{equation*}
Note that to simplify the presentation, we removed elements having a zero
coefficient. In general, such simplifications must be carried out with care, since
although such elements cannot change the leaf values, they might affect the
heights of trees accepted from such a state.

Similarly, both $s_1$ and $u$ can take a transition over~$a$.
Composing the transition $\transcof{s_1} a 1 {0s_2}{s_2}$ with $\transfH$'s
transition in \cref{eq:transH} in the same way as above, we obtain the following
transition of $\transfH(\aut_\pre)$:
\begin{equation*}
\begin{aligned}
    \initmark & \transcof{\tuple{s_1, u}} a 1
                {\invsqrttwo \tuple{s_2, u}} {-\invsqrttwo \tuple{s_2, u}}.
\end{aligned}
\end{equation*}
Since~$s_2$ and~$u$ are both leaf states in $\aut_\pre$ and $\transfH$ respectively, the state $\tuple{s_2, u}$ is also a~leaf state in~$\transfH(\aut_\pre)$.
% There are no more transitions from $[s_1, \tu]$. However, in the process of
% exploring possible transitions from $[s_1, \tu]$, we have discovered new states
% $[s_0, \tu]$ and $[y_1, \tu]$
% that we need to explore in the following iterations of our modified product
% construction that constructs $\transfH(\aut_\pre)$. Continuing with the construction
% until we reach a fixed point, we obtain the following automaton $\transfH(\aut_\pre)$.
To keep our exposition managable, we rename the states of the resulting product by applying
the following map $\{\tuple{s_1, u} \mapsto \alpha, \tuple{s_2, u} \mapsto \beta \}$. 
\begin{equation*}
\begin{aligned}
  &\initmark
    \transcof \alpha w 1 {\invsqrttwo \alpha}{\invsqrttwo \alpha} \quad \initmark
    \transcof \alpha a 1 {\invsqrttwo \beta}{-\invsqrttwo \beta}
\end{aligned}
\end{equation*}

We continue by applying $\transfCX$ to $\transfH(\aut_\pre)$ using the same product construction as above.
The resulting automaton $\transfCX \big( \transfH (\aut_\pre))\big)$ has the following transitions.
We again rename the states of the product using the map
$ \big\{
    \tuple{\alpha, r^0} \mapsto \gamma,
    \tuple{\alpha, s^0} \mapsto \delta,
    \tuple{\alpha, r^1} \mapsto \mu,
    \tuple{\alpha, s^1} \mapsto \varepsilon,
    \tuple{\beta, l} \mapsto \sigma
 \big\}$.
\begin{equation*}
\begin{aligned}
    \initmark & \transcof \gamma w 1 {\invsqrttwo \delta}{\invsqrttwo \varepsilon} \quad  & \initmark &
                \transcof \gamma a 1 {\invsqrttwo \sigma}{- \invsqrttwo \sigma} \\
              & \transcof \delta w 1 {\invsqrttwo \gamma}{\invsqrttwo \gamma}  &&
                \transcof \delta a 1 {\invsqrttwo \sigma}{- \invsqrttwo \sigma} \\
              & \transcof \varepsilon w 1 {\invsqrttwo \mu}{\invsqrttwo \mu}  &&
                \transcof \varepsilon a 1 {- \invsqrttwo \sigma}{\invsqrttwo \sigma} \\
              & \transcof \mu w 1 {\invsqrttwo \varepsilon}{\invsqrttwo \delta }  &&
                \transcof \mu a 1 {- \invsqrttwo \sigma}{ \invsqrttwo \sigma}
\end{aligned}
\end{equation*}

Finally, we compute $\autRes \triangleq \transfH \Big( \transfCX \big( \transfH(\aut_\pre) \big)\Big)$
with the states of the product automaton being renamed using
$\{
    \tuple{\gamma, u}       \mapsto \gamma_{u},
    \tuple{\delta, u}       \mapsto \delta_{u},
    \tuple{\varepsilon, u}  \mapsto \varepsilon_{u},
    \tuple{\mu, u}          \mapsto \mu_{u},
    \tuple{\sigma, u}       \mapsto \sigma_{u}
\}$ ($\sigma_u$ is the only leaf state).
\begin{equation*}
\begin{aligned}
    \initmark & \transcof{\gamma_u} w 1 {\invtwo \delta_u + \invtwo \varepsilon_u}{
                                               \invtwo \delta_u - \invtwo \varepsilon_u} \quad  & \initmark &
              \transcof{\gamma_u} a 1 {0\sigma_u}{\sigma_u} \\
              & \transcof{\delta_u} w 1 {\gamma_u} {0 \gamma_u} &&
                \transcof{\delta_u} a 1 {0 \sigma_u}{\sigma_u} \\
              & \transcof{\varepsilon_u} w 1 {\mu_u}{0\mu_u} &&
                \transcof{\varepsilon_u} a 1 {0 \sigma_u}{- \sigma_u} \\
              & \transcof{\mu_u} w 1 {\invtwo \varepsilon_u + \invtwo \delta_u}{\invtwo \varepsilon_u - \invtwo \delta_u} &&
                \transcof{\mu_u} a 1 {0\sigma_u}{- \sigma_u}
\end{aligned}
\end{equation*}

Having $\autRes$, we check the correctness of our BV circuit against the postcondition SWTA $\aut_\post$ defined in \cref{sec:BV}
using the algorithm in~\cref{sec:colored_equivalence_checking}.

% \input{paper/reductions}

%*******************************************************************************
\vspace{-0.0mm}
\section{Complex Number Representation}
\label{sec:complex}
\vspace{-0.0mm}
%*******************************************************************************

To ensure accuracy, we restrict our attention to a subset of complex numbers that admit the following algebraic representation, which generalizes the one proposed in~\cite{burgholzer2020advanced}:
\begin{equation}
  \label{eq:encoding}
  z \;=\;
  \smallfrac{1}{\sqrt{2^{k}}}
  \left(
    a_0 + a_1 \omega_m + a_2 \omega_m^2 + \ldots + a_{m-1}\omega_m^{m-1}
  \right),
  \qquad
  a_0,a_1,\ldots,a_{m-1} \in \integers, k \in \naturals
\end{equation}
where $\omega_m = e^{\frac{i\pi} m}$ is the $m$-th root of unity, forming an angle of $\frac{180^\circ}{m}$ with the positive real axis in the complex plane.
The parameter $m$ allows us to control the granularity of complex numbers used
in quantum circuit representations.
Each complex number is thus encoded as a tuple $(a_0, a_1, \ldots, a_{m-1}, k)$.

Although the set defined by~\cref{eq:encoding} constitutes only a countable subset of~$\complex$, it is sufficient to support a universal gate set already when $m = 4$.
Moreover, this algebraic structure enables efficient manipulation of complex numbers.
For instance, multiplication by~$\omega_m$ corresponds to a right circular shift of the first $m$ coefficients, followed by negation of the new leading term:
\begin{equation}
  \omega_m \cdot (a_0, a_1, \ldots, a_{m-1}, k) \;=\; (-a_{m-1}, a_0, \ldots, a_{m-2}, k).
\end{equation}

\vspace{-0.0mm}

\fi

\end{document}